\chardef\@x10\chardef\@xv60
\def\tcitime{
\def\@time{%
  \@minute\time\@hour\@minute\divide\@hour\@xv
  \ifnum\@hour<\@x 0\fi\the\@hour:%
  \multiply\@hour\@xv\advance\@minute-\@hour
  \ifnum\@minute<\@x 0\fi\the\@minute
  }}%
\def\QCTOpt[#1]#2{%
  \def\QCTOptB{#1}
  \def\QCTOptA{#2}
}
\def\QCTNOpt#1{%
  \def\QCTOptA{#1}
  \let\QCTOptB\empty
}
\def\Qct{%
  \@ifnextchar[{%
    \QCTOpt}{\QCTNOpt}
}
\def\QCBOpt[#1]#2{%
  \def\QCBOptB{#1}
  \def\QCBOptA{#2}
}
\def\QCBNOpt#1{%
  \def\QCBOptA{#1}
  \let\QCBOptB\empty
}
\def\Qcb{%
  \@ifnextchar[{%
    \QCBOpt}{\QCBNOpt}
}
\def\PrepCapArgs{%
  \ifx\QCBOptA\empty
    \ifx\QCTOptA\empty
      {}%
    \else
      \ifx\QCTOptB\empty
        {\QCTOptA}%
      \else
        [\QCTOptB]{\QCTOptA}%
      \fi
    \fi
  \else
    \ifx\QCBOptA\empty
      {}%
    \else
      \ifx\QCBOptB\empty
        {\QCBOptA}%
      \else
        [\QCBOptB]{\QCBOptA}%
      \fi
    \fi
  \fi
}
\def\GRAPHICSPS#1{%
 \ifcase\GRAPHICSTYPE
   \special{ps: #1}%
 \or
   \special{language "PS", include "#1"}%
 \fi
}%
\def\graffile#1#2#3#4{%
    \bgroup
    \leavevmode
    \@ifundefined{bbl@deactivate}{\def~{\string~}}{\activesoff}
    \raise -#4 \BOXTHEFRAME{%
        \hbox to #2{\raise #3\hbox to #2{\null #1\hfil}}}%
    \egroup
}%
\def\draftbox#1#2#3#4{%
 \leavevmode\raise -#4 \hbox{%
  \frame{\rlap{\protect\tiny #1}\hbox to #2%
   {\vrule height#3 width\z@ depth\z@\hfil}%
  }%
 }%
}%
\newif\ifwasdraft
\def\GRAPHIC#1#2#3#4#5{%
 \ifnum\draft=\@ne\draftbox{#2}{#3}{#4}{#5}%
  \else\graffile{#1}{#3}{#4}{#5}%
  \fi
 }%
\def\addtoLaTeXparams#1{%
    \edef\LaTeXparams{\LaTeXparams #1}}%
\newif\ifBoxFrame \BoxFramefalse
\newif\ifOverFrame \OverFramefalse
\newif\ifUnderFrame \UnderFramefalse
\def\BOXTHEFRAME#1{%
   \hbox{%
      \ifBoxFrame
         \frame{#1}%
      \else
         {#1}%
      \fi
   }%
}
\def\doFRAMEparams#1{\BoxFramefalse\OverFramefalse\UnderFramefalse\readFRAMEparams#1\end}%
\def\readFRAMEparams#1{%
 \ifx#1\end%
  \let\next=\relax
  \else
  \ifx#1i\dispkind=\z@\fi
  \ifx#1d\dispkind=\@ne\fi
  \ifx#1f\dispkind=\tw@\fi
  \ifx#1t\addtoLaTeXparams{t}\fi
  \ifx#1b\addtoLaTeXparams{b}\fi
  \ifx#1p\addtoLaTeXparams{p}\fi
  \ifx#1h\addtoLaTeXparams{h}\fi
  \ifx#1X\BoxFrametrue\fi
  \ifx#1O\OverFrametrue\fi
  \ifx#1U\UnderFrametrue\fi
  \ifx#1w
    \ifnum\draft=1\wasdrafttrue\else\wasdraftfalse\fi
    \draft=\@ne
  \fi
  \let\next=\readFRAMEparams
  \fi
 \next
 }%
\def\IFRAME#1#2#3#4#5#6{%
      \bgroup
      \let\QCTOptA\empty
      \let\QCTOptB\empty
      \let\QCBOptA\empty
      \let\QCBOptB\empty
      #6%
      \parindent=0pt%
      \leftskip=0pt
      \rightskip=0pt
      \setbox0 = \hbox{\QCBOptA}%
      \@tempdima = #1\relax
      \ifOverFrame
          \typeout{This is not implemented yet}%
          \show\HELP
      \else
         \ifdim\wd0>\@tempdima
            \advance\@tempdima by \@tempdima
            \ifdim\wd0 >\@tempdima
               \textwidth=\@tempdima
               \setbox1 =\vbox{%
                  \noindent\hbox to \@tempdima{\hfill\GRAPHIC{#5}{#4}{#1}{#2}{#3}\hfill}\\%
                  \noindent\hbox to \@tempdima{\parbox[b]{\@tempdima}{\QCBOptA}}%
               }%
               \wd1=\@tempdima
            \else
               \textwidth=\wd0
               \setbox1 =\vbox{%
                 \noindent\hbox to \wd0{\hfill\GRAPHIC{#5}{#4}{#1}{#2}{#3}\hfill}\\%
                 \noindent\hbox{\QCBOptA}%
               }%
               \wd1=\wd0
            \fi
         \else
            \ifdim\wd0>0pt
              \hsize=\@tempdima
              \setbox1 =\vbox{%
                \unskip\GRAPHIC{#5}{#4}{#1}{#2}{0pt}%
                \break
                \unskip\hbox to \@tempdima{\hfill \QCBOptA\hfill}%
              }%
              \wd1=\@tempdima
           \else
              \hsize=\@tempdima
              \setbox1 =\vbox{%
                \unskip\GRAPHIC{#5}{#4}{#1}{#2}{0pt}%
              }%
              \wd1=\@tempdima
           \fi
         \fi
         \@tempdimb=\ht1
         \advance\@tempdimb by \dp1
         \advance\@tempdimb by -#2%
         \advance\@tempdimb by #3%
         \leavevmode
         \raise -\@tempdimb \hbox{\box1}%
      \fi
      \egroup%
}%
\def\DFRAME#1#2#3#4#5{%
 \begin{center}
     \let\QCTOptA\empty
     \let\QCTOptB\empty
     \let\QCBOptA\empty
     \let\QCBOptB\empty
     \ifOverFrame 
        #5\QCTOptA\par
     \fi
     \GRAPHIC{#4}{#3}{#1}{#2}{\z@}
     \ifUnderFrame 
        \nobreak\par\nobreak#5\QCBOptA
     \fi
 \end{center}%
 }%
\def\FFRAME#1#2#3#4#5#6#7{%
 \begin{figure}[#1]%
  \let\QCTOptA\empty
  \let\QCTOptB\empty
  \let\QCBOptA\empty
  \let\QCBOptB\empty
  \ifOverFrame
    #4
    \ifx\QCTOptA\empty
    \else
      \ifx\QCTOptB\empty
        \caption{\QCTOptA}%
      \else
        \caption[\QCTOptB]{\QCTOptA}%
      \fi
    \fi
    \ifUnderFrame\else
      \label{#5}%
    \fi
  \else
    \UnderFrametrue%
  \fi
  \begin{center}\GRAPHIC{#7}{#6}{#2}{#3}{\z@}\end{center}%
  \ifUnderFrame
    #4
    \ifx\QCBOptA\empty
      \caption{}%
    \else
      \ifx\QCBOptB\empty
        \caption{\QCBOptA}%
      \else
        \caption[\QCBOptB]{\QCBOptA}%
      \fi
    \fi
    \label{#5}%
  \fi
  \end{figure}%
 }%
\def\makeactives{
  \catcode`\"=\active
  \catcode`\;=\active
  \catcode`\:=\active
  \catcode`\'=\active
  \catcode`\~=\active
}
   \gdef\activesoff{%
      \def"{\string"}
      \def;{\string;}
      \def:{\string:}
      \def'{\string'}
      \def~{\string~}
    }
\def\FRAME#1#2#3#4#5#6#7#8{%
 \bgroup
 \ifnum\draft=\@ne
   \wasdrafttrue
 \else
   \wasdraftfalse%
 \fi
 \def\LaTeXparams{}%
 \dispkind=\z@
 \def\LaTeXparams{}%
 \doFRAMEparams{#1}%
 \ifnum\dispkind=\z@\IFRAME{#2}{#3}{#4}{#7}{#8}{#5}\else
  \ifnum\dispkind=\@ne\DFRAME{#2}{#3}{#7}{#8}{#5}\else
   \ifnum\dispkind=\tw@
    \edef\@tempa{\noexpand\FFRAME{\LaTeXparams}}%
    \@tempa{#2}{#3}{#5}{#6}{#7}{#8}%
    \fi
   \fi
  \fi
  \ifwasdraft\draft=1\else\draft=0\fi{}%
  \egroup
 }%
\def\TEXUX#1{"texux"}
\long\def\QQQ#1#2{%
     \long\expandafter\def\csname#1\endcsname{#2}}%
\long\def\QQA#1#2{}%
\def\QTR#1#2{{\csname#1\endcsname #2}}
\def\EXPAND#1[#2]#3{}%
\def\NOEXPAND#1[#2]#3{}%
\def\LaTeXparent#1{}%
\def\ChildStyles#1{}%
\def\ChildDefaults#1{}%
\def\QTagDef#1#2#3{}%
  \providecommand{\UNICODE}[2][]{}
\def\QQfnmark#1{\footnotemark}
 \def\abstract{%
  \if@twocolumn
   \section*{Abstract (Not appropriate in this style!)}%
   \else \small 
   \begin{center}{\bf Abstract\vspace{-.5em}\vspace{\z@}}\end{center}%
   \quotation 
   \fi
  }%
   \def\registered{\relax\ifmmode{}\r@gistered
                    \else$\m@th\r@gistered$\fi}%
 \def\r@gistered{^{\ooalign
  {\hfil\raise.07ex\hbox{$\scriptstyle\rm\text{R}$}\hfil\crcr
  \mathhexbox20D}}}}{}%
\newdimen\theight
\def\Column{%
 \vadjust{\setbox\z@=\hbox{\scriptsize\quad\quad tcol}%
  \theight=\ht\z@\advance\theight by \dp\z@\advance\theight by \lineskip
  \kern -\theight \vbox to \theight{%
   \rightline{\rlap{\box\z@}}%
   \vss
   }%
  }%
 }%
\def\qed{%
 \ifhmode\unskip\nobreak\fi\ifmmode\ifinner\else\hskip5\p@\fi\fi
 \hbox{\hskip5\p@\vrule width4\p@ height6\p@ depth1.5\p@\hskip\p@}%
 }%
\def\miss{\hbox{\vrule height2\p@ width 2\p@ depth\z@}}%
\def\tcol#1{{\baselineskip=6\p@ \vcenter{#1}} \Column}  %
\def\newfmtname{LaTeX2e}
  \DeclareOldFontCommand{\rm}{\normalfont\rmfamily}{\mathrm}
  \DeclareOldFontCommand{\sf}{\normalfont\sffamily}{\mathsf}
  \DeclareOldFontCommand{\tt}{\normalfont\ttfamily}{\mathtt}
  \DeclareOldFontCommand{\bf}{\normalfont\bfseries}{\mathbf}
  \DeclareOldFontCommand{\it}{\normalfont\itshape}{\mathit}
  \DeclareOldFontCommand{\sl}{\normalfont\slshape}{\@nomath\sl}
  \DeclareOldFontCommand{\sc}{\normalfont\scshape}{\@nomath\sc}
\def\alpha{{\Greekmath 010B}}%
\def\beta{{\Greekmath 010C}}%
\def\gamma{{\Greekmath 010D}}%
\def\delta{{\Greekmath 010E}}%
\def\epsilon{{\Greekmath 010F}}%
\def\zeta{{\Greekmath 0110}}%
\def\eta{{\Greekmath 0111}}%
\def\theta{{\Greekmath 0112}}%
\def\iota{{\Greekmath 0113}}%
\def\kappa{{\Greekmath 0114}}%
\def\lambda{{\Greekmath 0115}}%
\def\mu{{\Greekmath 0116}}%
\def\nu{{\Greekmath 0117}}%
\def\xi{{\Greekmath 0118}}%
\def\pi{{\Greekmath 0119}}%
\def\rho{{\Greekmath 011A}}%
\def\sigma{{\Greekmath 011B}}%
\def\tau{{\Greekmath 011C}}%
\def\upsilon{{\Greekmath 011D}}%
\def\phi{{\Greekmath 011E}}%
\def\chi{{\Greekmath 011F}}%
\def\psi{{\Greekmath 0120}}%
\def\omega{{\Greekmath 0121}}%
\def\varepsilon{{\Greekmath 0122}}%
\def\vartheta{{\Greekmath 0123}}%
\def\varpi{{\Greekmath 0124}}%
\def\varrho{{\Greekmath 0125}}%
\def\varsigma{{\Greekmath 0126}}%
\def\varphi{{\Greekmath 0127}}%
\def\nabla{{\Greekmath 0272}}
\def\FindBoldGroup{%
   {\setbox0=\hbox{$\mathbf{x\global\edef\theboldgroup{\the\mathgroup}}$}}%
}
\def\Greekmath#1#2#3#4{%
    \if@compatibility
        \ifnum\mathgroup=\symbold
           \mathchoice{\mbox{\boldmath$\displaystyle\mathchar"#1#2#3#4$}}%
                      {\mbox{\boldmath$\textstyle\mathchar"#1#2#3#4$}}%
                      {\mbox{\boldmath$\scriptstyle\mathchar"#1#2#3#4$}}%
                      {\mbox{\boldmath$\scriptscriptstyle\mathchar"#1#2#3#4$}}%
        \else
           \mathchar"#1#2#3#4%
        \fi 
    \else 
        \FindBoldGroup
        \ifnum\mathgroup=\theboldgroup 
           \mathchoice{\mbox{\boldmath$\displaystyle\mathchar"#1#2#3#4$}}%
                      {\mbox{\boldmath$\textstyle\mathchar"#1#2#3#4$}}%
                      {\mbox{\boldmath$\scriptstyle\mathchar"#1#2#3#4$}}%
                      {\mbox{\boldmath$\scriptscriptstyle\mathchar"#1#2#3#4$}}%
        \else
           \mathchar"#1#2#3#4%
        \fi                 
          \fi}
\newif\ifGreekBold  \GreekBoldfalse
\let\SAVEPBF=\pbf
\def\pbf{\GreekBoldtrue\SAVEPBF}%
  \newcounter{equationnumber}  
  \def\mathletters{%
     \addtocounter{equation}{1}
     \edef\@currentlabel{\theequation}%
     \setcounter{equationnumber}{\c@equation}
     \setcounter{equation}{0}%
     \edef\theequation{\@currentlabel\noexpand\alph{equation}}%
  }
    \def\BibTeX{{\rm B\kern-.05em{\sc i\kern-.025em b}\kern-.08em
                 T\kern-.1667em\lower.7ex\hbox{E}\kern-.125emX}}}{}%
\def\AmS{{\protect\usefont{OMS}{cmsy}{m}{n}%
                A\kern-.1667em\lower.5ex\hbox{M}\kern-.125emS}}}{}%
\def\@@eqncr{\let\@tempa\relax
    \ifcase\@eqcnt \def\@tempa{& & &}\or \def\@tempa{& &}%
      \else \def\@tempa{&}\fi
     \@tempa
     \if@eqnsw
        \iftag@
           \@taggnum
        \else
           \@eqnnum\stepcounter{equation}%
        \fi
     \fi
     \global\tag@false
     \global\@eqnswtrue
     \global\@eqcnt\z@\cr}
\def\TCItag{\@ifnextchar*{\@TCItagstar}{\@TCItag}}
\def\@TCItag#1{%
    \global\tag@true
    \global\def\@taggnum{(#1)}}
\def\@TCItagstar*#1{%
    \global\tag@true
    \global\def\@taggnum{#1}}
\def\dfrac#1#2{{\displaystyle {#1 \over #2}}}%
\let\DOTSI\relax
\def\RIfM@{\relax\ifmmode}%
\def\FN@{\futurelet\next}%
\def\iint{\DOTSI\intno@\tw@\FN@\ints@}%
\def\iiint{\DOTSI\intno@\thr@@\FN@\ints@}%
\def\iiiint{\DOTSI\intno@4 \FN@\ints@}%
\def\idotsint{\DOTSI\intno@\z@\FN@\ints@}%
\def\ints@{\findlimits@\ints@@}%
\newif\iflimtoken@
\newif\iflimits@
\def\findlimits@{\limtoken@true\ifx\next\limits\limits@true
 \else\ifx\next\nolimits\limits@false\else
 \limtoken@false\ifx\ilimits@\nolimits\limits@false\else
 \ifinner\limits@false\else\limits@true\fi\fi\fi\fi}%
\def\multint@{\int\ifnum\intno@=\z@\intdots@                          
 \else\intkern@\fi                                                    
 \ifnum\intno@>\tw@\int\intkern@\fi                                   
 \ifnum\intno@>\thr@@\int\intkern@\fi                                 
 \int}
\def\multintlimits@{\intop\ifnum\intno@=\z@\intdots@\else\intkern@\fi
 \ifnum\intno@>\tw@\intop\intkern@\fi
 \ifnum\intno@>\thr@@\intop\intkern@\fi\intop}%
\def\intic@{%
    \mathchoice{\hskip.5em}{\hskip.4em}{\hskip.4em}{\hskip.4em}}%
\def\negintic@{\mathchoice
 {\hskip-.5em}{\hskip-.4em}{\hskip-.4em}{\hskip-.4em}}%
\def\ints@@{\iflimtoken@                                              
 \def\ints@@@{\iflimits@\negintic@
   \mathop{\intic@\multintlimits@}\limits                             
  \else\multint@\nolimits\fi                                          
  \eat@}
 \else                                                                
 \def\ints@@@{\iflimits@\negintic@
  \mathop{\intic@\multintlimits@}\limits\else
  \multint@\nolimits\fi}\fi\ints@@@}%
\def\intkern@{\mathchoice{\!\!\!}{\!\!}{\!\!}{\!\!}}%
\def\plaincdots@{\mathinner{\cdotp\cdotp\cdotp}}%
\def\intdots@{\mathchoice{\plaincdots@}%
 {{\cdotp}\mkern1.5mu{\cdotp}\mkern1.5mu{\cdotp}}%
 {{\cdotp}\mkern1mu{\cdotp}\mkern1mu{\cdotp}}%
 {{\cdotp}\mkern1mu{\cdotp}\mkern1mu{\cdotp}}}%
\def\RIfM@{\relax\protect\ifmmode}
\def\text{\RIfM@\expandafter\text@\else\expandafter\mbox\fi}
\let\nfss@text\text
\def\text@#1{\mathchoice
   {\textdef@\displaystyle\f@size{#1}}%
   {\textdef@\textstyle\tf@size{\firstchoice@false #1}}%
   {\textdef@\textstyle\sf@size{\firstchoice@false #1}}%
   {\textdef@\textstyle \ssf@size{\firstchoice@false #1}}%
   \glb@settings}
\def\textdef@#1#2#3{\hbox{{%
                    \everymath{#1}%
                    \let\f@size#2\selectfont
                    #3}}}
\newif\iffirstchoice@
\def\Let@{\relax\iffalse{\fi\let\\=\cr\iffalse}\fi}%
\def\vspace@{\def\vspace##1{\crcr\noalign{\vskip##1\relax}}}%
\def\multilimits@{\bgroup\vspace@\Let@
 \baselineskip\fontdimen10 \scriptfont\tw@
 \advance\baselineskip\fontdimen12 \scriptfont\tw@
 \lineskip\thr@@\fontdimen8 \scriptfont\thr@@
 \lineskiplimit\lineskip
 \vbox\bgroup\ialign\bgroup\hfil$\m@th\scriptstyle{##}$\hfil\crcr}%
\def\Sb{_\multilimits@}%
\def\endSb{\crcr\egroup\egroup\egroup}%
\def\Sp{^\multilimits@}%
\newdimen\ex@
\def\rightarrowfill@#1{$#1\m@th\mathord-\mkern-6mu\cleaders
 \hbox{$#1\mkern-2mu\mathord-\mkern-2mu$}\hfill
 \mkern-6mu\mathord\rightarrow$}%
\def\leftarrowfill@#1{$#1\m@th\mathord\leftarrow\mkern-6mu\cleaders
 \hbox{$#1\mkern-2mu\mathord-\mkern-2mu$}\hfill\mkern-6mu\mathord-$}%
\def\leftrightarrowfill@#1{$#1\m@th\mathord\leftarrow
\mkern-6mu\cleaders
 \hbox{$#1\mkern-2mu\mathord-\mkern-2mu$}\hfill
 \mkern-6mu\mathord\rightarrow$}%
\def\overrightarrow{\mathpalette\overrightarrow@}%
\def\overrightarrow@#1#2{\vbox{\ialign{##\crcr\rightarrowfill@#1\crcr
 \noalign{\kern-\ex@\nointerlineskip}$\m@th\hfil#1#2\hfil$\crcr}}}%
\def\overleftarrow{\mathpalette\overleftarrow@}%
\def\overleftarrow@#1#2{\vbox{\ialign{##\crcr\leftarrowfill@#1\crcr
 \noalign{\kern-\ex@\nointerlineskip}$\m@th\hfil#1#2\hfil$\crcr}}}%
\def\overleftrightarrow{\mathpalette\overleftrightarrow@}%
\def\overleftrightarrow@#1#2{\vbox{\ialign{##\crcr
   \leftrightarrowfill@#1\crcr
 \noalign{\kern-\ex@\nointerlineskip}$\m@th\hfil#1#2\hfil$\crcr}}}%
\def\underrightarrow{\mathpalette\underrightarrow@}%
\def\underrightarrow@#1#2{\vtop{\ialign{##\crcr$\m@th\hfil#1#2\hfil
  $\crcr\noalign{\nointerlineskip}\rightarrowfill@#1\crcr}}}%
\def\underleftarrow{\mathpalette\underleftarrow@}%
\def\underleftarrow@#1#2{\vtop{\ialign{##\crcr$\m@th\hfil#1#2\hfil
  $\crcr\noalign{\nointerlineskip}\leftarrowfill@#1\crcr}}}%
\def\underleftrightarrow{\mathpalette\underleftrightarrow@}%
\def\underleftrightarrow@#1#2{\vtop{\ialign{##\crcr$\m@th
  \hfil#1#2\hfil$\crcr
 \noalign{\nointerlineskip}\leftrightarrowfill@#1\crcr}}}%
\def\qopnamewl@#1{\mathop{\operator@font#1}\nlimits@}
\let\nlimits@\displaylimits
\def\setboxz@h{\setbox\z@\hbox}
\def\varlim@#1#2{\mathop{\vtop{\ialign{##\crcr
 \hfil$#1\m@th\operator@font lim$\hfil\crcr
 \noalign{\nointerlineskip}#2#1\crcr
 \noalign{\nointerlineskip\kern-\ex@}\crcr}}}}
 \def\rightarrowfill@#1{\m@th\setboxz@h{$#1-$}\ht\z@\z@
  $#1\copy\z@\mkern-6mu\cleaders
  \hbox{$#1\mkern-2mu\box\z@\mkern-2mu$}\hfill
  \mkern-6mu\mathord\rightarrow$}
\def\leftarrowfill@#1{\m@th\setboxz@h{$#1-$}\ht\z@\z@
  $#1\mathord\leftarrow\mkern-6mu\cleaders
  \hbox{$#1\mkern-2mu\copy\z@\mkern-2mu$}\hfill
  \mkern-6mu\box\z@$}
\def\projlim{\qopnamewl@{proj\,lim}}
\def\injlim{\qopnamewl@{inj\,lim}}
\def\varinjlim{\mathpalette\varlim@\rightarrowfill@}
\def\varprojlim{\mathpalette\varlim@\leftarrowfill@}
\def\varliminf{\mathpalette\varliminf@{}}
\def\varliminf@#1{\mathop{\underline{\vrule\@depth.2\ex@\@width\z@
   \hbox{$#1\m@th\operator@font lim$}}}}
\def\varlimsup{\mathpalette\varlimsup@{}}
\def\varlimsup@#1{\mathop{\overline
  {\hbox{$#1\m@th\operator@font lim$}}}}
\def\align{\@verbatim \frenchspacing\@vobeyspaces \@alignverbatim
You are using the "align" environment in a style in which it is not defined.}
\let\csname endalign*\endcsname =\endtrivlist
\def\alignat{\@verbatim \frenchspacing\@vobeyspaces \@alignatverbatim
You are using the "alignat" environment in a style in which it is not defined.}
\let\csname endalignat*\endcsname =\endtrivlist
\def\xalignat{\@verbatim \frenchspacing\@vobeyspaces \@xalignatverbatim
You are using the "xalignat" environment in a style in which it is not defined.}
\let\csname endxalignat*\endcsname =\endtrivlist
\def\gather{\@verbatim \frenchspacing\@vobeyspaces \@gatherverbatim
You are using the "gather" environment in a style in which it is not defined.}
\let\csname endgather*\endcsname =\endtrivlist
\def\multiline{\@verbatim \frenchspacing\@vobeyspaces \@multilineverbatim
You are using the "multiline" environment in a style in which it is not defined.}
\let\csname endmultiline*\endcsname =\endtrivlist
\def\arrax{\@verbatim \frenchspacing\@vobeyspaces \@arraxverbatim
You are using a type of "array" construct that is only allowed in AmS-LaTeX.}
\def\tabulax{\@verbatim \frenchspacing\@vobeyspaces \@tabulaxverbatim
You are using a type of "tabular" construct that is only allowed in AmS-LaTeX.}
\let\csname endarrax*\endcsname =\endtrivlist
\let\csname endtabulax*\endcsname =\endtrivlist
 \def\endequation{%
     \ifmmode\ifinner 
      \iftag@
        \addtocounter{equation}{-1} 
        $\hfil
           \displaywidth\linewidth\@taggnum\egroup \endtrivlist
        \global\tag@false
        \global\@ignoretrue   
      \else
        $\hfil
           \displaywidth\linewidth\@eqnnum\egroup \endtrivlist
        \global\tag@false
        \global\@ignoretrue 
      \fi
     \else   
      \iftag@
        \addtocounter{equation}{-1} 
        \eqno \hbox{\@taggnum}
        \global\tag@false%
        $$\global\@ignoretrue
      \else
        \eqno \hbox{\@eqnnum}
        $$\global\@ignoretrue
      \fi
     \fi\fi
 } 
 \newif\iftag@ \tag@false
 \def\TCItag{\@ifnextchar*{\@TCItagstar}{\@TCItag}}
 \def\@TCItag#1{%
     \global\tag@true
     \global\def\@taggnum{(#1)}}
 \def\@TCItagstar*#1{%
     \global\tag@true
     \global\def\@taggnum{#1}}
     \def\tag{\@ifnextchar*{\@tagstar}{\@tag}}
     \def\@tag#1{%
         \global\tag@true
         \global\def\@taggnum{(#1)}}
     \def\@tagstar*#1{%
         \global\tag@true
         \global\def\@taggnum{#1}}
\def\O{\text{O}}
\def\o{\text{o}}
\def\pole{\xi}
\def\polehat{\widehat{\xi}}
\def\polehatPr{\widehat{\xi}_{\text{Pr}}}
\def\polestar{{\xi^\star}}
\def\Polebias{\B_\xi(\xi,\delta)}
\def\Polestarbias{\B_{\polestar}(\polestar,\deltastar)}
\def\freqj{{\varphi_j}}
\def\freqk{{\varphi_k}}
\def\freql{{\varphi_l}}
\def\freqdj{{\lambda_j}}
\def\lamj{\lambda_j}
\def\lamk{\lambda_k}
\def\lam0{\lambda_0}
\def\deltahat{\widehat{\delta}}
\def\deltastar{\delta^\star}
\def\chival{\bm{\psi}}
\def\chihat{\widehat{\bm{\psi}}}
\def\chistar{\bm{\psi}^{\star}}
\def\tildechistar{\widetilde{\bm{\psi}}^{\textrm{\raisebox{-1.0ex}{$\star$}}}}
\def\chiprime{{\bm{\psi}^\prime}}
\def\ldot{\dot{\ell}}
\def\lddot{\ddot{\ell}}
\def\stscore{\bm{k}_N}
\def\limstscore{\bm{k}}
\def\hhat{\widehat{h}}
\def\Pgram{I_0}
\def\Pgrm{I}
\def\Idot{\dot{I}}
\def\Istd{\cal{I}^{(f,N)}}
\def\Istddot{\dot{\cal{I}}^{(f,N)}}
\def\Istdddot{\ddot{\cal{I}}^{(f,N)}}
\def\standA{A^{(f,N)}}
\def\standAs{A^{(f,N)2}}
\def\standB{B^{(f,N)}}
\def\standBs{B^{(f,N)2}}
\def\standC{C^{(f,N)}}
\def\standCs{C^{(f,N)2}}
\def\standD{D^{(f,N)}}
\def\standDs{D^{(f,N)2}}
\def\standE{E^{(f,N)}}
\def\standF{F^{(f,N)}}
\def\standG{G^{(f,N)}}
\def\standH{H^{(f,N)}}
\def\standU{U^{(f,N)}
\def\standUconj{U^{(f,N)*}}}
\def\kap{\kappa_{0,N}(\xi)}
\def\kapj{\kappa_{j_p,N}(\xi,\xi^{\ast})}
\def\fisherN{\bm{{\cal{F}}}_N}
\def\fisherNe{{{\cal{F}}}^{(N)}}
\def\fisherNlme{{{\cal{F}}}}
\def\obsfisherN{\bm{{{F}}}_{N}}
\def\obsfisherstdN{\bm{{{W}}}_{N}}
\def\pole{\xi}
\def\Xt{\{X_t\}}
\def\polehat{\widehat{\xi}}
\def\polestar{\xi^\star}
\def\thetahat{\widehat{\theta}}
\def\phihat{\widehat{\phi}}
\def\freqj{{\varphi_j}}
\def\freqk{{\varphi_k}}
\def\freql{{\varphi_l}}
\def\freqdj{{\lambda_j}}
\def\lamj{\lambda_j}
\def\lamk{\lambda_k}
\def\lam0{\lambda_0}
\def\deltahat{\widehat{\delta}}
\def\deltastar{\delta^\star}
\def\chival{\bm{\psi}}
\def\chihat{\widehat{\bm{\psi}}}
\def\chistar{\bm{\psi}^{\star}}
\def\chiprime{{\bm{\psi}^\prime}}
\def\stscore{\bm{k}_N}
\def\limstscore{\bm{k}}
\def\Istd{{\mathcal{I}}^{(f,N)}}
\def\Idot{\dot{I}}
\def\Istddot{\dot{\mathcal{I}}^{(f,N)}}
\def\Pgram{I_0}
\def\ddotPgramj{{\ddot{I}_{0j}}}
\def\ddotPgramzero{{\ddot{I}_{00}}}
\def\loglike{\ell}
\def\transpose{\top}
\def\kap{\kappa_{0}}
\def\kapj{\kappa_{j_0}}
\def\fisherN{\bm{{\cal{F}}}_N}
\def\fisherNe{{{\cal{F}}}^{(N)}}
\def\hatfisherNe{{\widehat{\cal{F}}}^{(N)}}
\def\fisherNlme{{{\cal{F}}}}
\def\obsfisherN{\bm{{{F}}}_{N}}
\def\obsfisherstdN{\bm{{{W}}}_{N}}
\newtheorem{case}{Case}
\newtheorem{Note}[theorem]{Note}
\newtheorem{defn0}{Definition}[section]
\newtheorem{case0}{Case}
\newtheorem{lemma0}{Lemma}[section]
\newtheorem{theorem0}[theorem]{Theorem}
\newtheorem{prop0}{Proposition}
\renewenvironment{lemma}{\medskip \flushleft \begin{lemma0}\rm}{\end{lemma0}}
\renewenvironment{case}{\medskip \flushleft \begin{case0}\rm}{\end{case0}}
\renewenvironment{theorem}{\medskip \flushleft \begin{theorem0}  }{\end{theorem0}}
\renewenvironment{proposition}{\medskip \flushleft \begin{prop0}}{\end{prop0}}
\newcommand{\B}{\mbox {\boldmath$B$} }
\newcommand {\be} {\begin {equation} }
\newcommand {\ee} {\end {equation} }
\newcommand {\bmath} {\begin {displaymath} }
\newcommand {\emath} {\end {displaymath} }
\newcounter{romnum}
\newcommand{\slfrac}[2]{
\hspace{3pt}\!\!\!^{#1}\!\!\hspace{1pt}/\hspace{2pt}\!\!_{#2}\!\!\!\hspace{3pt}
}
\long\def\@footnotetext#1{\insert\footins{\def\baselinestretch{1.2}\footnotesize
\interlinepenalty\interfootnotelinepenalty
\splittopskip\footnotesep \splitmaxdepth \dp\strutbox
\floatingpenalty \@MM \hsize\columnwidth \@parboxrestore
\edef\@currentlabel{\csname
p@footnote\endcsname\@thefnmark}\@makefntext
{\rule{\z@}{\footnotesep}\ignorespaces #1\strut}}}
\newcommand \address[1]{\gdef \@address{#1}}
\def\maketitle{%
  \null
  \thispagestyle{empty}%
  \begin{center}\leavevmode
    \normalfont
    {\LARGE \bf \@title\par}%
    {\large \@author\par}%
    \vskip 0.05 cm
    {\large \rm \@address\par}%
    \vskip 0.05cm
    {\large \@date\par}%
  \end{center}%
}
\begin{document}

\title{Non-Regular Likelihood Inference for Seasonally Persistent Processes}
\author{Emma J. McCoy$^{\text{(1)}}$ \\ Sofia C. Olhede$^{\text{(1)}}$ \\ David A.
Stephens${^\text{(1,2)}}$}
\address{(1) Department of Mathematics, Imperial College London\\ 180 Queens Gate,\\ London SW7 2AZ,\\ UK
\\ and \\ (2) Department of Mathematics and Statistics, McGill University\\ 805
Sherbrooke St. W\\ Montreal, QC, H2K 2K2,\\ Canada}

\maketitle

\begin{abstract}
The estimation of parameters in the frequency spectrum of a seasonally
persistent stationary stochastic process is addressed. For seasonal persistence
associated with a pole in the spectrum located away from frequency zero, a new
Whittle-type likelihood is developed that explicitly acknowledges the location
of the pole. This Whittle likelihood is a large sample approximation to the
distribution of the periodogram over a chosen grid of frequencies, and
constitutes an approximation to the time-domain likelihood of the data, via the
linear transformation of an inverse discrete Fourier transform combined with a
demodulation. The new likelihood is straightforward to compute, and as will be
demonstrated has good, yet non-standard, properties. The asymptotic behaviour
of the proposed likelihood estimators is studied; in particular,
$N$-consistency of the estimator of the spectral pole location is established.
Large finite sample and asymptotic distributions of the score and observed
Fisher information are given, and the corresponding distributions of the
maximum likelihood estimators are deduced. Asymptotically, the estimator of the
pole after suitable standardization follows a Cauchy distribution, and for
moderate sample sizes, we can use the finite large sample approximation to the
distribution of the estimator of the pole corresponding to the ratio of two
Gaussian random variables, with sample size dependent means and variances. A
study of the small sample properties of the likelihood approximation is
provided, and its superior performance to previously suggested methods is
shown, as well as agreement with the developed distributional approximations.
Inspired by the developments for full likelihood based estimation procedures,
usage of profile likelihood and other likelihood based procedures are also
discussed.  Semi-parametric estimation methods, such as the Geweke-Porter-Hudak
estimator of the long memory parameter, inspired by the developed parametric
theory are introduced.

\vspace{0.05 in}

\noindent \textbf{KEYWORDS:} Periodogram; Seasonal persistence; likelihood
inference, Whittle likelihood.

\end{abstract}

\section{Introduction}
In this paper, we develop likelihood estimation of the parameters of a
stationary stochastic process that exhibits \emph{seasonal persistence}, that
is, long memory behaviour associated with a stationary, quasi-seasonal
dependence structure. We introduce a new frequency-domain likelihood
approximation which is computed using demodulation and which, for the first
time, facilitates maximum likelihood estimation. We consider joint estimation
of the seasonality and persistence parameters, and establish the asymptotic and
large sample properties of the likelihood and its associated maximum likelihood
estimators. This is in direct contrast with previously suggested procedures,
where the distribution of the estimator of the seasonality parameter could not
be established \citep{Giraitis2001}.  The estimators are demonstrated to have
good small sample properties compared with estimators based on the classic
Whittle likelihood, and other non-likelihood derived estimators.  Our
non-standard asymptotic results rely on the appropriate renormalization of the
score and Fisher information, and utilize a parameter-dependent linear
transformation of the data. This transformation enables an efficient
approximation to the likelihood.  The transformation also introduces a number
of interesting and non-regular features into the likelihood surface: jumps,
local oscillations, and non-regular large sample theory. Despite these issues
the large sample theory can be determined, and appropriate finite large sample
approximations provided, as will be demonstrated. It transpires that the small
sample properties of the estimators are competitive with existing methods, as
well be discussed in later sections.

The contributions of this paper thus include new theory for non-regular maximum
likelihood problems. In similarly motivated work, \citet{cheng} discussed
problems associated with maximum likelihood estimation for unbounded
likelihoods: in contrast we discuss problems associated with distributions of
non-identically distributed, weakly dependent variables with highly compressed
and for increasing sample sizes unbounded variances. Given the importance of
compressed linear decompositions in modern statistical theory, our work has
implications for the distribution of sparseness-inducing transformations much
beyond the analysis of seasonal processes and Fourier theory, and forms a
contribution to developing methodology for inference of stochastically
compressible processes.

One of the concrete and substantive conclusions of our new estimation
procedures is illustrated in Figure \ref{simsplot};  this figure illustrates
that whereas a standard estimation procedure, based on the Whittle likelihood
(see Section \ref{PeriodogramSect}), produces estimates that are, on average,
biased even in large samples, our new procedure, based on a carefully
constructed likelihood (see Sections \ref{DDFTsec} and
\ref{Likelihoodsection}), produces estimators that exhibit no such bias.  Full
details of this Figure are given in Section \ref{SimExamp}.

\begin{figure}
\begin{center}
\includegraphics[height=4.7in, width=4.7in]{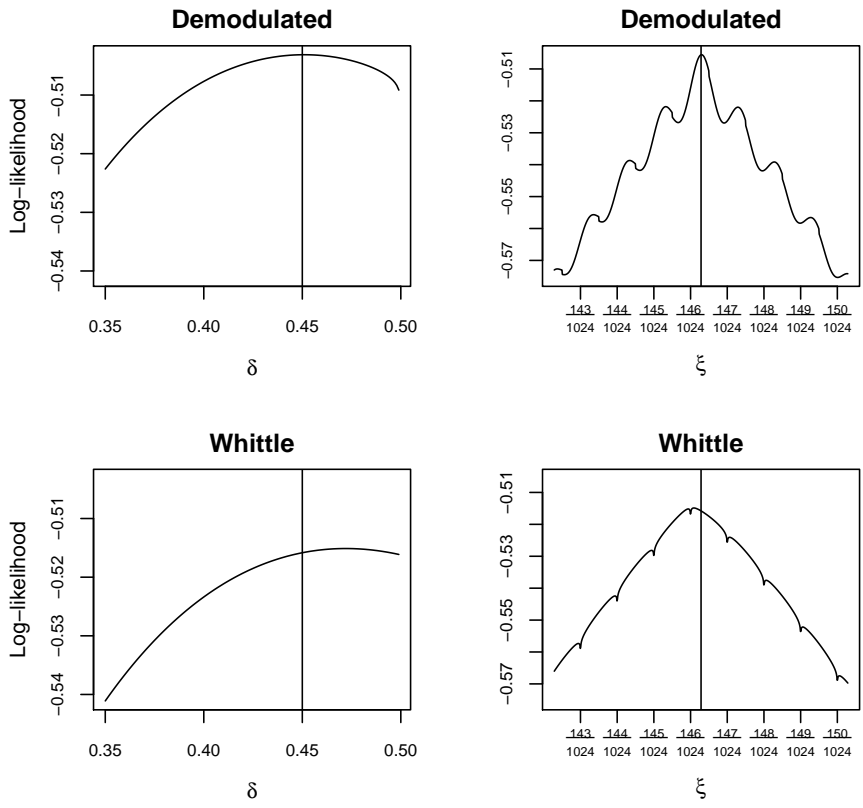}
\caption{\label{simsplot}Simulated Data:
Mean standardized likelihoods for the pole (right) and the long memory parameter (left) over 2000
simulations, with sample size of 1024, and the true values of the long memory
parameter and the pole taking the values 0.45 and 1/7, respectively.
The vertical solid lines indicate the true values of the parameters. The Demodulated
likelihood is noted in equation (\ref{approxlikeeqn}) whilst the discrete Whittle likelihood is
noted in equation (\ref{discfourlik}). On average, the demodulated likelihood has its mode
at the true values, whereas the Whittle likelihood does not.
See Section \ref{SimExamp} for full details.}
\end{center}
\end{figure}

\subsection{Seasonally Persistent Processes}

\label{timeseriesdef} Stationary time-series models with long range dependence
describe a wide range of physical phenomena; see for general discussion
\citet{Andel1986} and \citet{Gray1989}, and also applications in econometrics
\citep{Porter-Hudak1990,Gil-Alana2002}, biology \citep{Beran1994} and hydrology
\citep{Ooms2001}. Dependence in a stationary time series is parameterized via
the autocovariance sequence, $\{\gamma_{\tau}\}$. We are concerned with the
estimation of parameters that specify $\gamma_{\tau}$ under an assumption of
\textit{seasonal persistence}.  Specifically, of particular importance is the
{\em seasonality} of the data characterized by a frequency, $\pole$, termed the
\textit{pole}, and an associated {\em degree of dependence}, characterized by a
\textit{persistence} (or \textit{long memory}) parameter $\delta$. Whereas
inference for the persistence parameter in the context of poles at frequency
zero has been much studied \citep{Beran1994}, the theoretical behaviour of
estimators of the persistence parameter remains largely uninvestigated when the
underlying seasonality of the process is unknown.

Let $\Xt$ be a zero-mean, second-order stationary time series with
autocovariance (acv) sequence
$\gamma_{\tau}=\textrm{cov}\left\{X_t,X_{t+\tau}\right\}={\mathrm{E}}\left\{X_t
X_{t+\tau}\right\}$, and spectral density function (sdf), $f(\cdot)$,
\begin{equation}
\label{SDF} f\left(\lambda\right)=\sum_{\tau=-\infty}^{\infty}
\gamma_{\tau} e^{-2i\pi \lambda \tau}.
\end{equation}
The process $\Xt$ exhibits \emph{seasonal} or \emph{periodic} persistence if
there exist real numbers $H\in(1/2,1)$ and $\pole\in\left(0,1/2\right)$, and a
bounded function $c(\gamma)$ such that
\[\lim_{\tau \rightarrow \infty} \frac{\gamma_{\tau}}{c(\gamma)\left|\tau\right|^{2H-2}}
=\cos\left(2\pi \pole \tau\right),\] or equivalently if there exist
$\beta\in(0,1)$ and $\pole\in\left(0,\frac{1}{2}\right)$ and a bounded function
$c(\lambda)$ such that
\[
\lim_{\lambda \rightarrow \pole}
\frac{f(\lambda)\left|\lambda-\pole\right|^{\beta}} {c(\lambda)} =1.
\]
Following convention, we parameterize the persistence parameter via $\delta =
\beta /2$. In line with this definition, a process is considered to be a
\textit{seasonally persistent process} (SPP) if, in a neighbourhood of $\pole$,
\begin{equation}
\label{sdfekvation} f\left(\lambda\right) = f^{\dagger}
\left(\lambda\right)\left|\lambda-\pole\right|^{-\beta} +
\o(\left|\lambda-\pole\right|^{-\beta}),
\end{equation}
where $f^{\dagger}(\lambda) \equiv c(\lambda) > 0$, $0< \lambda <
\frac{1}{2}$ is bounded above.

Parameters $(\pole,\delta)$ determine the dominant long term behaviour of the
process; typically, $\pole$ corresponds to the location of an unbounded but integrable
singularity in the sdf.  In this paper we consider a parametric family of sdfs
consistent with (\ref{sdfekvation}), that is, the parametric model of
\citet{Giraitis2001}, where
\begin{equation}
\label{giraitis} f(\lambda)
=f_G(\lambda;\pole,\delta,\bm{\theta},\sigma_{\epsilon}^2)=\sigma^2_{\epsilon}
|h(\lambda\,;\,\bm{\theta})|^2 (1-2e^{-2i\pi\lambda}\cos(2\pi
\pole)+e^{-4i\pi\lambda} )^{-2\delta},
\end{equation}
where $h(\lambda\,;\,\bm{\theta})$ is bounded above and below at
$\lambda=\pole$, with some linear process assumptions, given for instance in
\citet{Hannan1973}; for example, $h(\cdot)$ could be the sdf for a stationary
and invertible ARMA process, such is the case for GARMA processes, see
\citet{Gray1989}. We consider behaviour near the pole in such models by
defining $f^{\dagger}(\lambda)$, where
\begin{eqnarray}
\label{sdfwithpole}
f\left(\lambda\right) = f^{\dagger}(\lambda)
\left|\lambda-\pole\right|^{-2\delta} = f^{\dagger}(\lambda;
\pole,\delta, \bm{\theta}) \left|\lambda-\pole\right|^{-2\delta}.
\end{eqnarray}
The results in this paper will also be applicable to nearly non-stationary unit
root AR processes, when the roots of the AR process approach unity at a
suitable rate in the sample size, this quantifying issues with near unit root
processes.

\subsection{Estimation for Seasonally Persistent Processes}

We consider maximum likelihood estimation of $\pole$ and $\delta$, and denote
the true values of these parameters by $(\polestar,\deltastar)$.  Joint
estimation of the seasonality and persistence parameters is of importance, as
inaccurate estimation of $\pole$ will affect the estimation of $\delta$, and
any other parameters of the sdf -- $\delta$ quantifies the rate of decay of the
dependence, and thus determines the long-term behaviour of the series. Note
also that, even in cases where $\pole$ is believed to be \textbf{known} (for
calendar data, equal to 1/12, or 1/7, or 1/4 say), there may on occasion be
finite sample advantage in estimating $\pole$ rather than using its known
value, in terms of estimation of the other parameters of the system.  For
example, if $\pole$ is regarded as a nuisance parameter, then $\delta$ may be
more efficiently estimated after conditioning on $\polehat$ rather than
$\polestar$; see, for example, \citet{Robins1994} and \citet{Rathouz2002} for
supporting theory.  This issue goes beyond the scope of this paper, but gives
further indication that estimation of $\pole$ is intrinsically important.

We will examine inference for the parameters of an SPP based on a realization
of the process of length $N$. Throughout this paper, for convenience and with
minimal loss of generality, we will assume $N$ is even, $N=2M$ say. We
establish asymptotic results for these estimators $(\polehat,\deltahat)$, and
provide practically useful large sample approximations to the distribution of
the estimators. In particular, we define a large sample approximation to the
log-likelihood of the periodogram evaluated at a full set of frequencies spaced
$\O(N^{-1})$ apart. At a local scale the variational structure of the
log-likelihood in $\pole$ remains appreciable over $\O(N^{-1})$ distances;
however the magnitude of these variations becomes negligible compared to the
total accumulated magnitude of the log-likelihood for increasing sample sizes.
We demonstrate that this variation prevents standard likelihood results being
valid for the estimator of $\pole$, although standard asymptotic results can be
established for the estimator of the $\delta$, which is in agreement with
previous results, see \citep{HandS2004,Giraitis2001}. We discuss in detail the
large sample behaviour of $N(\polehat-\polestar)$ and establish its approximate
large sample distribution, as well as a moderate sample size approximation.
Finally we demonstrate that our likelihood-based estimators have good small
sample properties on simulated series compared with other, non-likelihood
estimators, and consider estimation of the system parameters in a econometric
example, using a data set with weekly gasoline sales in the United States, and
two meteorological examples, monthly temperature data from a Californian
shore-station, and the Southern Oscillation Index data set.

\subsection{The Periodogram, Likelihoods and Approximations}
\label{PeriodogramSect} We consider a sample from a stationary Gaussian time
series, $\bm{X}=\left(X_0,X_1,\dots,X_{N-1}\right)^\transpose$, as defined in
section \ref{timeseriesdef}, with covariance matrix ${\mathcal{G}}_N =
{\mathcal{G}}_N(\pole,\delta,\bm{\theta},\sigma^2_{\epsilon})$ with
$(i,j)^\textrm{th}$ element $\gamma_{\left|i-j\right|}$. The exact log
likelihood, $\loglike_N$, of the finite time-domain sample is given by
\begin{equation}
2 \loglike_N \left(\pole,\delta,\bm{\theta},\sigma^2_{\epsilon}\right)= 2 \log L_N
\left(\pole,\delta,\bm{\theta},\sigma^2_{\epsilon}\right) = - N\log
\left(2\pi\right)-\log\left|{\mathcal{G}}_N\right|
-\bm{X}^\transpose{\mathcal{G}}_N^{-1}\bm{X}. \label{ExactTimeLike}
\end{equation}
This likelihood is often approximated due to the computational complexity
associated with the calculation of ${\mathcal{G}}_N^{-1}$. The standard
approximation approach was introduced by \citet{Whittle1951}, and the
resulting, much studied, discretized approximate likelihood is commonly known
as the discrete Whittle likelihood.  The Whittle likelihood gives an
approximation to the likelihood of the time domain data in the frequency domain
via the Fourier coefficients, under assumptions as specified by \citet[p.
109--113, and 116--7]{Beran1994}. Problems associated with the usage of
Whittle's approximation for non-Gaussian and small sample size Gaussian time
series has been discussed by \cite{Contreras}.

The final two terms in equation (\ref{ExactTimeLike}) are approximated using
results of \citet{Whittle1951} and \citet{Grenander1984}.  It follows that the
Whittle likelihood for $(\pole,\delta)$ and $\bm{\theta}$ is given by:
\begin{equation}
\label{Whittle}
\loglike_N^{(W)}\left(\pole,\delta,\bm{\theta},\sigma^2_{\epsilon}\right)=-\int_{-\frac{1}{2}}^
{\frac{1}{2}}\frac{I_0\left(\lambda\right)}{f\left(\lambda\right)}\;d\lambda,
\end{equation}
where $I_0\left(\lambda\right)$ is the \textit{periodogram}, defined as the
modulus square of the discrete Fourier transform (DFT),
$Z_0\left(\lambda\right)$, of the realized time series.

At the Fourier frequencies $\freqj = j/N$, $j=0,\ldots, M$,
the periodogram, $I_0$, is given by,
$I_0(\freqj) = |Z_0(\freqj)|^2 $ where
\begin{equation}
\label{Zj} Z_0(\freqj) = \frac{1}{\sqrt{N}} \sum_{t=0}^{N-1}  X_t
e^{-i2\pi t \freqj} = A_0(\freqj) - i B_0(\freqj),\quad j=0,\ldots, M,
\end{equation}
so that
\[
I_0(\freqj) = A_0^2(\freqj)+B_0^2(\freqj)=
\frac{1}{N} \left[ \sum_{t=0}^{N-1} X_t^2 + 2
\sum_{t=1}^{N-1} \sum_{s=0}^{t-1} X_t X_s \cos\left\{ 2 \pi j (t-s)/N
\right\} \right] .
\]
For short memory data, the periodogram is an asymptotically unbiased but
inconsistent estimator of $f(\cdot)$ that is commonly used as the basis of more
sophisticated estimation procedures. The use of (\ref{Whittle}) for parameter
estimation has been discussed in detail by \cite{Walker,Walker2} and
\citet{Hannan1973} under the assumption that the log spectrum integrates to
zero.  \citet{Hosoya1974} added a second term of
$\log\left\{f(\lambda)\right\}$ to the integral to deal with more general
processes.

For the likelihood in equation (\ref{Whittle}) to have desirable asymptotic
properties, it is assumed that the process is linear, and satisfies certain
regularity conditions, thus ensuring good large sample properties of the
likelihood based estimators. Note that (\ref{Whittle}) is an {\em
approximation} to the log-likelihood of $\bm{X}$ based on the periodogram, but
that (\ref{Whittle}) is not a likelihood {\em for} the periodogram. The
approximation of the likelihood in equation (\ref{ExactTimeLike}) by equation
(\ref{Whittle}), performs well when the process is Gaussian and the covariance
of the time series is either rapidly decaying or exactly periodic.

A Riemann approximation to the integral in equation (\ref{Whittle}) yields the
discrete analogue
\begin{equation}
\label{discfourlik}
\loglike_N^{(DW)}(\pole,\delta,\bm{\theta},,\sigma^2_{\epsilon})=-\frac{2}{N}\sum_{j=0}^{M}
\frac{I_0(\varphi_j)}{f(\varphi_j)},
\end{equation}
and we could also adjust this to allow for more general processes:
\begin{equation}
\label{discfourlik2}
\loglike_N^{(DW)}(\pole,\delta,\bm{\theta},,\sigma^2_{\epsilon})=-\frac{2}{N}\sum_{j=0}^{M}
\frac{I_0(\varphi_j)}{f(\varphi_j)}-\frac{2}{N}\sum_{j=0}^{M}\log[f(\varphi_j)],
\end{equation}
following Hosoya's proposal. By defining the vector
$\bm{C}_{2j,2j+1}(A_j,\;B_j)^\transpose,$ where $A_j=A_0(\freqj)$ and
$B_j=B_0(\freqj)$, and $\bm{\Sigma}_{\bm{C}}$ as the exact covariance of
$\bm{C}$, we may consider the exact log-likelihood, $\loglike_N^{(f)}$ of the
DFT of observed and Gaussian data via:
\begin{equation}
2 \loglike_N^{(f)} \left(\pole,\delta,\bm{\theta},\sigma^2_{\epsilon}
\right) = -N \log (2 \pi)
- \log \left|{\bm{\Sigma}}_{\bm{C}}\right|
-\bm{C}^\transpose{\bm{\Sigma}}_{\bm{C}}^{-1}\bm{C}, \label{DFTLike}
\end{equation}
in direct analogue with equation (\ref{ExactTimeLike}), acknowledging finite
sample effects of the DFT. The difference between this equation and the
discrete Whittle likelihood is that it involves the {\bf exact} covariance
matrix, ${\bm{\Sigma}}_{\bm{C}}$, of the FFT coefficients. Analysis based on
the likelihood of the Fourier coefficients (in general) involves the inversion
of the large, non-sparse covariance matrix, and is thus equally inefficient as
the basis of likelihood procedures as equation (\ref{ExactTimeLike}).

Having specified these various likelihood functions that could be used for
inference, some justification must be used to motivate their usage. Equation
(\ref{DFTLike}) is a natural choice for analysis of seasonal time series, given
the compression of the variables of the seasonal effects. We shall use the
compression to approximate the likelihood more carefully, acknowledging large
finite sample effects related to the compression explicitly.

\subsection{Contributions of the Paper}
We introduce an approximation to equation (\ref{DFTLike}), and use this as the
basis of a maximum likelihood procedure. We focus on the distribution and other
properties of the periodogram, given an underlying SPP with sdf $f(\cdot)$. We
focus on Gaussian processes, and do not consider here the non-Gaussian case.
However, for other processes, such as those in \cite{brillinger}, where
asymptotic normality of the DFT holds, our distributional results are still
valid.

Specifically, we consider estimation of parameters of spectra with spectral
poles away from frequency zero. We consider an adjustment to the standard DFT
that simplifies the technical developments of this paper. A simple (but
parameter dependent) modification of the choice of grid, conditional on a known
spectral pole location, leads to simple approximations to the likelihood of the
periodogram at a new set of frequencies spaced at a distance $\O(N^{-1})$
apart. In particular
\begin{enumerate}

\item We propose a new demodulated Whittle discrete likelihood for seasonal
    processes (sections 2 \& 3). We show that the proposed likelihood
    approximates the distribution of the discrete Fourier transform for any
    posited value of the true parameters (see Theorem 1). The key idea is
    to use a different orthogonal transformation of the data conditional on
    each fixed value of the location of the pole (specification of a
    compressed representation). This is a non-standard situation.

\item To establish the properties of the likelihood we
    calculate the large finite sample distribution of the
    periodogram at the pole itself (Section 2.3).

\item We bound the covariance of the demodulated periodogram at different
    frequencies spaced $1/N$ apart (noted in Section 3), and note its
    asymptotically negligible contribution to the normalized
    log-likelihood. Furthermore, the choice of approximation to the
    likelihood is not everywhere continuous.  However, we demonstrate
    (Section 3) that the discontinuities in the likelihood surface
    represent a negligible contribution for finite large samples.

\item We prove consistency of the MLEs (see Theorem 2), and determine the
    large sample first order properties of the score and observed Fisher
    information (Theorem 3).

\item We determine the asymptotic distribution of the score
    and observed Fisher information (see Theorem 4) and the asymptotic distribution of the MLEs (see Theorem 5).

\item We give a large finite sample approximation to the
    distribution of the pole estimator (see Proposition 6).

\end{enumerate}

To derive the appropriate large sample theory, some care is required.  It
transpires that the score and Fisher information do not exhibit the usual large
sample behaviour.  Our results are based on a Taylor expansion of the
log-likelihood; we adopt the normalization of the observed Fisher information
adopted by \citet{Sweeting,Sweeting2}. We thus renormalize the observed Fisher
information appropriately with a suitable power of $N$. The renormalized score
and observed Fisher information converge in law to Gaussian random variables
that are asymptotically uncorrelated. The distribution of $\polehat$ converges
slowly to the asymptotic distribution, and so alternate finite large sample
approximations are also given.

These results establish a new large sample theory for seasonally persistent
processes, and utilize the data-dependent transformation of the time-domain
data that facilitate the computation of the distribution of different random
variables for each posited value of the pole, and appropriate normalisation
techniques for the score and Fisher information when the data is modelled as
highly compressed in the Fourier domain.

\subsection{Connections with Recent Work}

In connections with other related work, we distinguish between likelihood-based
methods and semi-parametric methods for processes exhibiting seasonal
persistence. \citet{Giraitis2001} consider fully parametric models, and  constrain the
maximization over the location to a grid of frequencies spaced $\O(N^{-1})$
apart. \citet{HandS2004} consider semi-parametric models, and the theoretical
properties of the extended Geweke-Porter-Hudak estimator, basing their analysis
on estimating the location of the singularity as the Fourier coefficient of the
maximum periodogram value in a given frequency interval; in their simulation
study, the true location of the singularity is aligned with the Fourier
frequency grid. \citet{HandS2004} evaluate the Fourier coefficients at the
Fourier frequency grid, and restrict the estimate of the location of the pole
to a grid of frequencies spaced $\O(N^{-1})$ apart. \citet{Hidalgo2005} used
semi-parametric methods to estimate the location of the pole, as well as the
long memory parameter. By using a two-step procedure he is able to develop
large sample theory for the estimator of the singularity, whereas in contrast
we focus on full likelihood methods. More recently, \citet{Whitcher2004} used a
wavelet packet analysis approach for estimation of seasonally persistent
processes.

In terms of asymptotic properties, our rate of convergence matches that of
\citet{Giraitis2001}. However, in addition, we obtain the large sample
distributional results for the estimator of the pole, which they fail to do,
having produced a different estimator. Similarly to Giraitis' {\em et al.},
\citet{Beran} estimate the location of the pole using the coefficient which
maximises the periodogram. Our estimator is again different although
asymptotically equivalent with the same rate of convergence, and it has a
determinable asymptotic, as well as large finite sample approximate,
distribution.

Our work also has a connection with, but is different in spirit from,
\textit{hidden frequency estimation}, in which the seasonal structure is
modelled as deterministic, corresponding to a single sinusoid. In this case,
the Fourier coefficient which maximizes the periodogram converges to the true
coefficient with a faster rate than the convergence of the MLE of the pole.
Such rates were improved by secondary analysis, and the corresponding analysis
using data tapers, see for example
\citet{ChenWuDahlhaus,Hannan1973,Hannan1986,VonSachs1993}. Secondary analysis
corresponds to partitioning the time series into several groups of data, and
using regression to estimate the so-called hidden frequency.
\citet{Thomson1990} used multitaper methods to improve the detection of a set
of hidden frequencies, and use least squares methods over a given bandwidth.
Neither the model we use, nor our proposed inferential method, is equivalent to
the above mentioned procedures. Secondary analysis can be considered to `zoom
in' on local structure near the pole, and may be philosophically related to our
procedure, but we implement full likelihood for a full set of Fourier
coefficients. Conditionally for each fixed value for the pole, we calculate the
distribution of a different set of random variables, but as each set is a
linear and orthogonal transformation of the original data, and with a constant
and equal Jacobian, this is appropriate.

Finally, we note that the inferential issues are of importance beyond
seasonally persistent processes. The inherent non-regularity arises due to a
parameter dependent transformation of the time-domain data. Whenever the
process is modelled using a suitable parametric linear transformation of the
data that will give decomposition coefficients that are non-negligible only for
a few sets of indices, our methods will be applicable with some minor
modifications.  In a more general setting we would write the variances of a set
of basis coefficients as satisfying a power-law decay, and we refer to such
processes as {\emph{second order compressive}} processes. Power-law decay in a
suitable basis is an relatively common phenomenon - see for example the
discussion in \citet{Donoho2006,Abramovich+2006,CandesTao} - and our
developments will carry across to this setting if the compression is stochastic rather than deterministic, once the location and decay
parameters have been incorporated in the arbitrary basis. Issues of alignment,
and/or shift-variance, akin to results that arise for misspecified location of
the pole, are very well-documented in other basis expansions \citep{coif}. Note
that the equivalent to the decay parameter discussed by the aforementioned
authors will be $p=1/(2\delta)$. Only for $\delta>0.25$ are we in their mode of
decay, corresponding to extreme regimes of long memory behaviour.

\section{Distributional results for the Periodogram}
\label{demodsec}
\subsection{Large Sample Properties} The large
sample properties of the periodogram of seasonally persistent processes were
determined in \citet{OMS2004}.  We summarize and extend these results below; in
particular we compute the statistical properties of the periodogram itself at
the pole $\pole$, as this specific Fourier coefficient will contribute
substantively to the subsequent likelihood calculation.

Theorem 1 in \citet{OMS2004} gives the following result concerning the relative
bias at frequency $\lambda$, $B_{\lambda,N}(\pole, \delta )$, of the
periodogram for all $\lambda \in (-1/2,1/2), \pole \in (0,1/2)$,
\renewcommand{\arraystretch}{1.8}
\[
B_{\lambda,N}(\pole, \delta ) =
\left\{
\begin{array}{ll}
{\mathrm{E}}\left\{ \dfrac{\Pgram(\lambda) }{f( \lambda ) }\right\} & \qquad \lambda \neq \pole \\[12pt]
{\mathrm{E}}\left\{ \dfrac{\Pgram( \pole ) }{N^{2\delta}f^{\dagger}( \pole)
}\right\} & \qquad \lambda = \pole
\end{array} \right.
\]
This notation makes explicit the dependence of the relative bias on $\pole$ and
$N$.  For frequencies $\freqk = k/N, k \in \left\{0,\ldots, M \right\}$, we
have, for large $N$ and a fixed value of $\pole$, with $\freqk\neq \pole$,
\begin{equation}
B_{\freqk,N}(\pole, \delta ) =\frac{2}{\pi }\int_{-\infty
}^{\infty } \left[\frac{\sin \{u /2-\pi c_{N}(\pole,\freqk)\} }{ u - 2\pi
c_{N}(\pole,\freqk)}\right]^2 \left| \frac{2\pi c_{N}(\pole,\freqk)}{u
}\right| ^{2\delta }\;du +\o(1), \label{deffy}
\end{equation}
where $c_{N}(\pole,\freqk) = N(\freqk - \pole)$ denotes $N$ times the distance
between the $k^{th}$ Fourier frequency and the pole at $\pole$. For the case
$\freqk = \pole$, the large sample value of $B_{\pole,N}\left(\pole, \delta
\right)$ is given in Lemma \ref{PeriodogramPole} in Section \ref{secPP}.

For the second order moment properties, let
\begin{eqnarray*}
C_{\varphi_k,\varphi_l,N}( u, \pole ) &=& \frac{\sin \{ u /2-\pi
c_{N}(\pole,\freqk )\} \sin \{ u /2-\pi c_{N}(\pole,\freql )\}
}{\{u -2\pi c_{N}(\pole,\freqk)\} \{ u -2\pi
c_{N}(\pole,\freql)\} } \\[12pt]
{V}_{\varphi_k,\varphi_l,N}\left( \pole,\delta  \right) &=& \left( -1\right) ^{k+l}
\frac{2}{\pi }\int_{-\infty }^{\infty }C_{\varphi_k,\varphi_l,N}\left(u,
\pole\right) \left| \frac{2\pi }{u }\right| ^{2\delta }|
c_{N}(\pole,\freqk )c_{N}(\pole,\freql )| ^{\delta }\;du +\o(1).
\end{eqnarray*}
Then, for $A_0(\freqj),B_0(\freqj)$ from (\ref{Zj}), \citet{OMS2004} gives
\begin{eqnarray*}
{\mathrm{E}}\{A_0(\freqk) A_0(\freql)\} = {\mathrm{E}}\{B_0(\freqk) B_0(\freql)\} &=& \{V_{\varphi_k,\varphi_l,N} (\pole,\delta)/2 +
\o(1)\} \sqrt{f (\freqk )f (\freql)} \\ {\mathrm{E}}\{A_0(\freqk) B_0(\freql)\} = {\mathrm{E}}\{B_0(\freqk) A_0(\freql)\}
&=& \o(1) \sqrt{f (\freqk )f (\freql)}\\
&=&\o\left(N^{2\delta}\right)\quad {\mathrm{if}}
\quad c_N(\pole,\freqk),\;c_N(\pole,\freql)=O(1).
\end{eqnarray*}
These results specify the large sample first and second order structure of the
periodogram.  We now extend these results to the demodulated periodogram
described in section \ref{DDFTsec}.  Note that a direct implication of these
results is that the distribution of the periodogram is highly dependent on the
distances between the pole $\pole$ and the Fourier frequencies $\{\freqk\}$.

\subsection{The Demodulated Discrete Fourier Transformation}
\label{DDFTsec}  The Discrete Fourier Transform of $\{X_t\}$ is not constrained
to be evaluated at $\{\freqk\}$, but in fact any $\O(N^{-1})$ grid could be
considered. This fact leads us to consider \textit{demodulation}, a grid
realignment technique, which for any fixed value of $\pole$ produces a new grid
aligned with the pole. Demodulation ensures that the large sample behaviour of
the demodulated periodogram is similar to that of the periodogram of a standard
long memory process (where $\pole = 0$).  Specifically, the large sample bias
is the same but the distribution of the periodogram is $\chi^2_2$ rather than a
sum of unequally weighted $\chi^2_1$ random variables
\citep[see][p.~621]{HB1993,OMS2004}.

The Demodulated Discrete Fourier Transform (DDFT) or offset DFT
\citep{PeiDing2004} of a sample of size $N$ from time series $\{X_t\}$ with
demodulation via a fixed frequency $\lambda$ is denoted $Z_{\lambda}$, and is
defined for Fourier frequency $\freqj$ by
\begin{equation} \label{ddft}
Z_{\lambda} (\freqj)=\frac{1}{\sqrt{N}}\sum_{t=0}^{N-1} X_t
e^{-2i\pi (\freqj+\lambda)t} = A_{\lambda} (\freqj ) - i
B_{\lambda} (\freqj), \quad j=0,\ldots, M.
\end{equation}
The demodulated periodogram at frequency $\freqj$ with demodulation via
$\lambda$ is denoted $I_{\lambda}(\freqj)$, and is defined via the ordinary
periodogram $I_0$ by
\[
I_{\lambda} (\freqj) = I_0 (\freqj+\lambda ) = |Z_{\lambda} (\freqj)|^2
=A_{\lambda}^2 (\freqj)+B_{\lambda}^2(\freqj).
\]
Hence $I_{\lambda} (\freqj)$ is simply the periodogram evaluated at frequency
$\freqj+\lambda,$ or $I_0(\freqj+\lambda).$ We will consider evaluating this
expression at arbitrary frequency $\varphi$. We define
$C_{\lambda;2j,2j+1}=(A_{\lambda,j}, B_{\lambda,j})^\transpose =
\{A_{\lambda}(\freqj),\;B_{\lambda}(\freqj)\}^\transpose,$ in analogue to
$\bm{C}$ in equation (\ref{DFTLike}). For Gaussian data $\bm{X}$ we then find:
\begin{equation}
\bm{C}_{\lambda} = \left(
A_{\lambda,0},
B_{\lambda,0},
\dots
A_{\lambda,M},
B_{\lambda,M}
\right)^T 
\overset{\mathfrak{L}}{=}\mathcal{N}\left(\bm{0},\bm{\Sigma}_{\bm{C}_{\lambda}}\right) \label{MVNSigmaC}
\end{equation}
Note that due to the demodulation, $B_{\lambda,0} \neq 0$ in general, unlike
the imaginary component of the DFT at frequency zero. To efficiently formulate
the likelihood, we need to explicitly consider the computation of
$\bm{\Sigma}_{\bm{C}_{\lambda}}$, the covariance of the DDFT coefficients.

\subsection{Extending the \citet{OMS2004} result}
The results in \citet{OMS2004} do not cover the case of demodulation, and to
enable calculation of the new likelihood, further results are required. For
example $B_{\pole,N}(\pole,\delta)$ needs to be explicitly determined. To
minimize the bias in the demodulated periodogram, and simplify the covariance
structure, we shift the Fourier grid so that the closest Fourier frequency to
the pole in the original grid coincides exactly with the pole in the
demodulated version. For a pole at $\pole$, we denote by $j_{0,N}(\pole)=[
N\pole],$ where $[x]$ indicates the nearest integer to $x.$ We furthermore let
$c_{N}(\pole,\varphi_{j_{0,N}(\pole)})=j_{0,N}(\pole)-N\pole$ and specify
$\lambda=\lambda_{D,N}(\pole)$ in (\ref{ddft}) as $\lambda_{D,N}(\pole) =
-c_{N}(\pole,\varphi_{j_{0,N}(\pole)})/N$. The approach introduces a new grid
of frequencies, namely
\begin{equation}
\label{lambdagrid}
\lambda_k \equiv \lambda_{k(j),N}(\pole)=
\varphi_j+\frac{c_{N}(\pole,\varphi_{j_{0,N}(\pole)})}{N}=
\pole+\frac{j-j_{0,N}(\pole)}{N}=\pole+
\frac{k(j)}{N}.
\end{equation}
We exclude Fourier frequencies $0$ and $1/2$, and taking $j=1,\dots,M-1$ we
have $k=k(j)=j-j_{0,N}(\pole)\equiv J_1,\dots,J_2\equiv
-j_{0,N}(\pole),\dots,M-j_{0,N}(\pole)$. For example, if $N=16$ and $\pole =
0.15$, then $\lambda_{D,16}(0.15) = 0.025$, $[N\pole] = 2$, $J_1=-1$ and $J_2 =
5$.  Note that for $k(j_2) > k(j_1)\neq 0$, then
\[
V_{\lambda_{k(j_1)}, \lambda_{k(j_2)},N}(\pole,\delta)=
V_{\varphi_{j_1},\varphi_{j_2},N}\left\{\frac{j_{0,N}(\pole)}{N},\delta\right\},
\]
so that the covariance properties of the DDFT can be easily determined.

Under this demodulation, the DDFT yields the original periodogram $I_0$
evaluated at frequencies $\lambda_k \equiv
\lambda_{k(j)}=\pole+(j-j_{0,N}(\pole))/N,$ and takes the form
\begin{equation}
\label{ddftfinal} Z_{\lambda_D} \left(\freqj\right) = Z_0
\left(\lambda_{k(j)}\right)=\frac{1}{\sqrt{N}}\sum_{t=0}^{N-1}
X_t e^{-2i\pi  \lambda_{k} t}, \quad k=J_1,\ldots, J_2,
\end{equation}
so that, for $k=J_1,\ldots, J_2$, $I_{\lambda_D}(\freqj) = I_0(\lambda_{k}) =
I_0(\pole + k/N).$ The DDFT can be computed efficiently by applying the DFT to
the new series $\{Y_t\},$ defined for $t=0,\ldots,N-1,$ by $Y_t = X_t
\exp\left\{ -2 \pi i \lambda_D t\right\}$. Demodulation both simplifies the
mathematical calculations considerably, and improves estimation of the
persistence parameter $\delta.$ Naturally the operation is very straightforward
to implement. The parameter dependent choice of $\{\lambda_k\}$ will need
careful analysis when deriving properties of the parameter estimators.

\subsection{Expectation of the Periodogram at the Pole} \label{secPP}
The result in (\ref{deffy}) gives the relative bias of periodogram. The
expectation of the periodogram is given in the following Lemma.
\begin{lemma}
\label{PeriodogramPole} The expected value of the periodogram
evaluated at the pole $\pole$, after demodulation by $\pole$, is
\begin{eqnarray*}
E\left\{I_0(\pole)\right\} & = &
(2\pi N)^{2\delta} \{-2f^{\dagger}(\pole)
\Gamma(-1-2\delta)\}\cos\{\pi( 1/2+\delta)\}\pi^{-1}+\o(1)\\[6pt]
& \overset{def}{=} & N^{2\delta} f^{\dagger}(\pole) \Polebias + \o(1) = \O(N^{2\delta}).
\end{eqnarray*}
\end{lemma}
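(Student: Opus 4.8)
The plan is to compute $E\{I_0(\pole)\}$ directly from the spectral representation of the autocovariances, reduce it to a convolution of the spectral density against the Fej\'er kernel, and then extract the leading singular behaviour by rescaling at the pole. First I would insert $\gamma_\tau = \int_{-1/2}^{1/2} f(\nu)e^{2i\pi\nu\tau}\,d\nu$ into $I_0(\pole)=N^{-1}\sum_{s,t}X_sX_t e^{-2i\pi(t-s)\pole}$ and take expectations, giving
\[
E\{I_0(\pole)\}=\int_{-1/2}^{1/2} f(\nu)\,F_N(\nu-\pole)\,d\nu,\qquad
F_N(\theta)=\frac{1}{N}\frac{\sin^2(\pi N\theta)}{\sin^2(\pi\theta)},
\]
where $F_N$ is the Fej\'er kernel, an approximate identity with $\int_{-1/2}^{1/2}F_N=1$. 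The purpose of demodulating by $\pole$ is precisely that the kernel is now centred exactly on the singularity of $f$, rather than falling between grid points.

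Next I would localize. Writing $f(\nu)=f^{\dagger}(\nu)\,|\nu-\pole|^{-2\delta}$ and splitting the integral at $|\nu-\pole|=\epsilon_N$ for a sequence $\epsilon_N\to 0$ to be fixed below, the outer region contributes $\O(1/N)$ since $f$ is bounded there while $F_N(\theta)\le\{N\sin^2(\pi\epsilon_N)\}^{-1}$. On the inner region I would replace $f^{\dagger}(\nu)$ by $f^{\dagger}(\pole)$ and $\sin^2(\pi\theta)$ by $(\pi\theta)^2$ (with $\theta=\nu-\pole$), then rescale by $u=2\pi N\theta$. Because $|\theta|^{-2\delta}=(2\pi N)^{2\delta}|u|^{-2\delta}$ and $\sin^2(\pi N\theta)=\sin^2(u/2)$, this produces the prefactor $(2\pi N)^{2\delta}$ and, after extending the limits to $\pm\infty$, leaves
\[
E\{I_0(\pole)\}=(2\pi N)^{2\delta} f^{\dagger}(\pole)\,\frac{2}{\pi}\int_{-\infty}^{\infty}\frac{\sin^2(u/2)}{|u|^{2+2\delta}}\,du+R_N,
\]
which is exactly the $c_{N}(\pole,\freqk)=0$ specialization of the integral in (\ref{deffy}), now normalized by $N^{2\delta}f^{\dagger}(\pole)$ in place of the (infinite) value $f(\pole)$.

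I would then evaluate the constant integral in closed form. Using $\sin^2(u/2)=(1-\cos u)/2$, symmetry in $u$, and the Mellin identity $\int_0^{\infty} w^{s-1}(1-\cos w)\,dw=-\Gamma(s)\cos(\pi s/2)$ (valid for $-2<s<0$, hence applicable at $s=-1-2\delta\in(-2,-1)$ since $\delta\in(0,1/2)$), one obtains
\[
\frac{2}{\pi}\int_{-\infty}^{\infty}\frac{\sin^2(u/2)}{|u|^{2+2\delta}}\,du=-\frac{2}{\pi}\,\Gamma(-1-2\delta)\cos\{\pi(1/2+\delta)\}.
\]
Substituting this back reproduces the stated leading term $(2\pi N)^{2\delta}\{-2 f^{\dagger}(\pole)\Gamma(-1-2\delta)\}\cos\{\pi(1/2+\delta)\}\pi^{-1}$ and thereby identifies $\Polebias$, after which the order statement $\O(N^{2\delta})$ is immediate from $\delta>0$.

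The hard part will be controlling the remainder $R_N$, since three approximations must be balanced through the choice of $\epsilon_N$: (i) discarding the outer region, costing $\O(1/N)$; (ii) the tail $\int_{|u|>2\pi N\epsilon_N}$ lost when extending to $\pm\infty$, which after the $(2\pi N)^{2\delta}$ prefactor is $\O(\epsilon_N^{-1-2\delta}N^{-1})$; and (iii) the replacements of $\sin^2(\pi\theta)$ by $(\pi\theta)^2$ and of $f^{\dagger}(\nu)$ by $f^{\dagger}(\pole)$ on the inner region. The last is the delicate one: bounding it by $\o(1)$ rather than merely $\o(N^{2\delta})$ requires a modulus-of-continuity estimate on $f^{\dagger}$ near $\pole$ of order exceeding $2\delta$, which is available from the smoothness of $f^{\dagger}$ in the parametric model (\ref{giraitis}) but not from a generic bounded $f^{\dagger}$. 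Taking, for instance, $\epsilon_N=N^{-1/2}$ balances (i)--(ii) and drives all three contributions into the stated error, completing the argument.
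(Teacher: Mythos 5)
Your proposal is essentially the paper's own proof. The paper's Appendix A.1 starts from the identical Fej\'er-kernel representation, already restricted to $(\pole - N^{-1/2},\,\pole + N^{-1/2})$ with the outer part absorbed into the error term (it invokes the method of \citet{OMS2004}), changes variables to $u=N(\lambda-\pole)$, replaces $N\sin^2(\pi u/N)$ by $\pi^2 u^2/N$ and $f^{\dagger}(\pole+u/N)$ by $f^{\dagger}(\pole)$, extends the range to $(-\infty,\infty)$, and evaluates $\int_0^{\infty}\sin^2(\pi u)\,u^{-2\delta-2}\,du$ from Gradshteyn--Ryzhik \S 3.823 --- which is the same identity as your Mellin formula for $1-\cos$. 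Your scaling $u=2\pi N\theta$ merely redistributes the factors of $2\pi$, and both computations produce the same constant $\Polebias$; your explicit remainder bookkeeping (items (i)--(iii)) is detail the paper compresses into successive $\o(1)$ terms.

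There is one step you should repair, though it does not threaten the structure. Your justification of (i) --- that the outer region contributes $\O(1/N)$ ``since $f$ is bounded there while $F_N(\theta)\le\{N\sin^2(\pi\epsilon_N)\}^{-1}$'' --- does not hold: on $|\nu-\pole|>\epsilon_N$ one only has $f(\nu)\le C\epsilon_N^{-2\delta}$, which diverges as $\epsilon_N\to 0$, and multiplying the two sup-bounds gives $\O(N^{-1}\epsilon_N^{-2-2\delta})=\O(N^{\delta})$ at $\epsilon_N=N^{-1/2}$, not $\O(1/N)$. The correct estimate uses the pointwise decay $F_N(\theta)\le\{N\sin^2(\pi\theta)\}^{-1}\le C N^{-1}\theta^{-2}$ integrated against the singularity, which gives $\O(N^{-1}\epsilon_N^{-1-2\delta})=\O(N^{\delta-1/2})=\o(1)$ --- exactly the same order as your tail term (ii), as it must be, since both errors come from the region $|\nu-\pole|>\epsilon_N$; so the device you need is already present in your own step (ii). Finally, a point in your favour: the lemma's display claims an absolute error of $\o(1)$, but the appendix in fact only proves that the standardized expectation equals $\Polebias+\o(1)$, i.e.\ an absolute error of $\o(N^{2\delta})$ (its closing display says precisely this). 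Your observation that the stronger $\o(1)$ claim requires a modulus of continuity for $f^{\dagger}$ at $\pole$ of order exceeding $2\delta$ --- available here since $f^{\dagger}$ in model (\ref{giraitis}) is Lipschitz at $\pole$ and $2\delta<1$ --- is correct, and identifies a point the paper itself glosses over.
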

\begin{proof}
See Appendix \ref{PeriodogramAppendix}.
\end{proof}

\section{Asymptotic Properties of the Likelihood and Estimators}
\label{Likelihoodsection}

In this section we utilize demodulation, and the large sample approximations
described above, to present three theorems that characterize the asymptotic
behaviour of the likelihood, the corresponding MLEs for $(\pole,\delta)$ and
the associated Fisher information to obtain their large sample properties.
Specifically, we establish $N$-consistency for the estimator of the location of
the pole, thus matching the result of \citet{Giraitis2001}.

\subsection{Large-sample Likelihood Approximation}
For a periodogram demodulated to align the Fourier grid with pole $\pole$, we
have the following asymptotic result.
\begin{theorem}
\label{ApproxLike}{\text{Approximating the Likelihood Function.}}\\
For a Gaussian series from a periodic long memory model as described by
(\ref{giraitis}), where $f^{\dagger}(\cdot)$ is twice partially differentiable
with respect to $(\pole,\delta)$, the log-likelihood of the discrete Fourier
transform can be approximated by
\begin{eqnarray}
\ell\left(\pole,\delta,\bm{\theta},\sigma^2_{\epsilon}\right) &=& \sum_{j=J_1}^{J_2} \left\{
\log \eta_j - \eta_j I_0(\pole + j/N) \right\} 
\label{approxlikeeqn}
\end{eqnarray}
accurate to $o(N)$, where
\begin{equation}
\label{Theorem1Etaj} \eta_j =\frac{
|j|^{ 2 \delta\Upsilon{\{j\neq0\}}} }
{\Polebias^{\Upsilon{\{j=0\}}}
N^{2\delta} f_j^{\dagger}}
\end{equation}
for $0<\delta<0.5$, where $\Upsilon\{A\}$ is the indicator function for event $A$,
\[
f^{\dagger}_j \equiv f^{\dagger}(\lambda_j) = f^{\dagger}(\pole + j/N)
\]
and $\Polebias$ is the asymptotic relative bias given by Lemma
\ref{PeriodogramPole}.
\end{theorem}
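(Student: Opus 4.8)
The plan is to derive (\ref{approxlikeeqn}) from the exact Gaussian log-likelihood of the demodulated Fourier coefficients, namely the analogue of (\ref{DFTLike}) written for $\bm{C}_{\lambda}$ in (\ref{MVNSigmaC}), by replacing the exact covariance $\bm{\Sigma}_{\bm{C}_{\lambda}}$ with a block-diagonal matrix $\bm{D}$ whose $j$th $2\times 2$ block is $\sigma_j^2\bm{I}_2$. Because demodulation followed by the DFT is orthogonal with a constant Jacobian, no additional Jacobian term enters the transformed likelihood, so the entire task is (i) to identify the diagonal entries $\sigma_j^2$ and (ii) to show the off-diagonal part of $\bm{\Sigma}_{\bm{C}_{\lambda}}$ is negligible. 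Under the diagonal approximation the pair $A_{\lambda}(\freqj),B_{\lambda}(\freqj)$ are independent zero-mean Gaussians of common variance $\sigma_j^2$, so $I_0(\pole+j/N)=A_{\lambda}^2(\freqj)+B_{\lambda}^2(\freqj)$ is a scaled $\chi^2_2$ (exponential) variate with mean $2\sigma_j^2$ and log-density $\log\{1/(2\sigma_j^2)\}-I_0(\pole+j/N)/(2\sigma_j^2)$.

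First I would identify $2\sigma_j^2$ with $E\{I_0(\pole+j/N)\}$ and hence with $1/\eta_j$. The extended \citet{OMS2004} second-order results supply $\mathrm{Var}\{A_{\lambda}(\freqj)\}=\mathrm{Var}\{B_{\lambda}(\freqj)\}$ with cross-covariance of order $\o(N^{2\delta})$, which is exactly what is needed for the $\chi^2_2$ form at each ordinate. For $j\neq 0$ the relative-bias expression (\ref{deffy}) gives $E\{I_0(\pole+j/N)\}\sim f(\pole+j/N)=f_j^{\dagger}N^{2\delta}|j|^{-2\delta}$, so $1/(2\sigma_j^2)=|j|^{2\delta}/(N^{2\delta}f_j^{\dagger})=\eta_j$; for $j=0$ Lemma \ref{PeriodogramPole} gives $E\{I_0(\pole)\}\sim N^{2\delta}f^{\dagger}(\pole)\Polebias$, so $1/(2\sigma_0^2)=1/(N^{2\delta}f_0^{\dagger}\Polebias)=\eta_0$. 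Both agree with the definition (\ref{Theorem1Etaj}). Summing the per-ordinate log-densities over $j=J_1,\dots,J_2$ and discarding the collected additive terms $-\tfrac{N}{2}\log(2\pi)$ and the constant multiple of $\log 2$, which do not depend on $(\pole,\delta,\bm{\theta},\sigma_{\epsilon}^2)$, reproduces (\ref{approxlikeeqn}) up to a parameter-free constant, which is all that is required for likelihood inference.

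The remaining, and main, difficulty is to show that the two approximations just made cost only $\o(N)$ in total over the $\Theta(N)$ ordinates. The per-ordinate errors --- from the scaled-$\chi^2_2$ representation and from substituting the asymptotic means of Lemma \ref{PeriodogramPole} and (\ref{deffy}) for the exact variances --- must each be $\o(1)$; here the assumed twice partial differentiability of $f^{\dagger}$ in $(\pole,\delta)$ is used to control the smooth variation of $\sigma_j^2$ across the realigned grid and the remainder terms in the bias expansions. The delicate part is the off-diagonal contribution: writing $\bm{\Sigma}_{\bm{C}_{\lambda}}=\bm{D}^{1/2}(\bm{I}+\bm{R})\bm{D}^{1/2}$ with $\bm{R}$ having zero diagonal and off-diagonal entries equal to the normalized cross-covariances, one must show that both $\log\det(\bm{I}+\bm{R})$ and the quadratic-form discrepancy $\bm{C}_{\lambda}^{\transpose}(\bm{\Sigma}_{\bm{C}_{\lambda}}^{-1}-\bm{D}^{-1})\bm{C}_{\lambda}$ are $\o(N)$.

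I would close this last step by bounding $V_{\lambda_{k_1},\lambda_{k_2},N}(\pole,\delta)$ for $k_1\neq k_2$ and exploiting that the normalized cross-covariances are non-negligible only for the bounded set of near-pole index pairs where $c_N(\pole,\cdot)=\O(1)$, while they decay for well-separated frequencies, where the demodulated periodogram behaves like that of a short-memory process with asymptotically uncorrelated ordinates. Consequently the entries of $\bm{R}$ of order one are confined to an $\O(1)\times\O(1)$ block around $j=0$, contributing $\O(1)$ to both $\log\det(\bm{I}+\bm{R})$ and the quadratic form, while the decaying tail contributes $\o(N)$; a diagonal-dominance (Gershgorin-type) control of $\bm{R}$ makes this rigorous. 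I expect this covariance-decay bound, and in particular the tracking of \emph{relative} rather than absolute errors so that the single large ordinate at $j=0$ of size $\O(N^{2\delta})$ does not dominate, to be the main obstacle.
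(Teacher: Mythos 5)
Your proposal is correct and takes essentially the same route as the paper's own proof (Appendices A.2--A.4): the paper likewise writes the exact Gaussian log-likelihood of the demodulated coefficients $\bm{C}_{\lambda}$, replaces $\bm{\Sigma}_{\bm{C}_{\lambda}}$ by its diagonal (with per-ordinate variances $\tfrac{1}{2}B_{\lambda_j,N}(\pole,\delta)f(\lambda_j)$, identified through Lemma \ref{PeriodogramPole} at $j=0$ and the \citet{OMS2004} relative-bias results elsewhere, which is exactly your identification $2\sigma_j^2 = 1/\eta_j$), and then shows that the correlation remainder --- the log-determinant of the normalized covariance plus the quadratic-form discrepancy --- is $\o(N)$ using Robinson-type bounds of order $\O\{k^{-1}\log(j)\}$ on the normalized cross-covariances for $\log(N)<k<j$ and a direct summation (Minkowski-inequality) argument over the resulting $\O\{k^{-2}\log^2(j)\}$ terms, which is the same covariance-decay control you flag as the main obstacle. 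The only differences are cosmetic: the paper sums the off-diagonal contributions explicitly rather than invoking Gershgorin-type diagonal dominance, and it additionally records (Proposition \ref{zerothord}) that the $\O(1)$ jumps induced by the $\pole$-dependence of $J_1,J_2$ are negligible, a point relevant to later use of the likelihood rather than to the fixed-parameter $\o(N)$ approximation itself.
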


\begin{proof}
See the Appendices A.2-A.4.
\end{proof}

\vspace{0.1in} \noindent\textit{\Note[Note I]} The approximation to the
likelihood is equivalent to that of independent exponential random variables
with rate parameters $\eta_j$ that depend on $j$ and $\delta$ but not on
$\pole$.  In equation (\ref{Theorem1Etaj}), the function $\Polebias$
appropriately scales the periodogram contribution from the Fourier frequency
aligned with $\pole$. $\Polebias$ is monotonically increasing in $\delta$, with
$\lim_{x\rightarrow 0} {\mathbf{B}}_\xi(\xi,x)=1$, and $\Polebias$ is bounded away from zero.
As the function is monotonic the derivatives of $\Polebias$ are also bounded
away from zero. If $f^{\dagger}(\cdot)$ is also bounded away from zero, then
the log likelihood is bounded in $\pole$ and $\delta$. Thus it is possible to
find efficiently the MLEs of $\pole$ and $\delta$ numerically.

\noindent\textit{\Note[Note II]} This likelihood is not differentiable with
respect to $\pole$ at all values of $\pole$; although $\dot{I}_0(\pole + j/N)$
is available in simple form, the dependence of $J_1=-j_{0,N}(\pole)$ and
$J_2=M-j_{0,N}(\pole)$ on $\pole$ renders the overall function discontinuous.
However, the discontinuities are $\O(1)$ in magnitude, and the log likelihood
is uniformly at least $\O(N)$, so in fact the discontinuities are negligible,
but motivate us to look, in standard fashion, at the $N$-standardized
likelihood function $\ell(\pole,\delta,\bm{\theta},\sigma^2_{\epsilon})/N$. See
Appendix A.4 for further details.

\noindent\textit{\Note[Note III]} The formulation in Theorem \ref{ApproxLike}
summarizes the data in the frequency domain via the demodulated periodogram for
a given $\pole$. We avoid the introduction of the substantial bias and
covariance terms found in \citet{OMS2004}, as the demodulated periodogram is
perfectly aligned with this singularity.  When other demodulations are chosen
the likelihood cannot be approximated in such a fashion. Even for frequencies
of sufficient distance from any irregular behaviour, the results of
\citet{OMS2004} cannot be applied directly, and to find the approximate Whittle
likelihood we additionally need to make assumptions about the spectral density
function, and its smoothness (see \citet{Dzham1983} and \citet{Taniguchi2000}).

\noindent\textit{\Note[Note IV]} The result differs with that of \citet{HB1993}
in a number of ways. The ordinates subscripted $j$ and $-j$ in the DFT are no
longer complex conjugates, and the likelihood at evaluated at $\lambda_j$ is
now approximately $\chi_2^2$ (rather than a mixture of two different $\chi^2$
terms) even for those coefficients closest to the pole. Strictly, the
definition for $\eta_j$ in equation (\ref{Theorem1Etaj}) has an additional term
$V_{\lambda_j,\lambda_k,N}(\pole,\delta)$ for $j,k\in \mathbb{Z}$, but these
terms can be bounded appropriately, and thus contribute in a negligible
fashion. The bias at the pole reported in \citet{HB1993} is (identically)
present in our formulation, but is $\o(N)$, and is thus subsumed into the final
term -- see \citet{Hurvich1998} for relevant supporting arguments.

\vspace{0.1 in}


\subsection{Existence and Consistency of the ML estimators}
We now use the results of the previous section to construct likelihood-based
estimators of $\pole$ and $\delta$ and establish their properties. The
following theorem establishes the existence and consistency of the ML
estimators derived from the likelihood in Theorem \ref{ApproxLike}.

\begin{theorem}
\label{GiraitisLike} \textbf{Existence and Consistency}\newline For the
likelihood of Theorem \ref{ApproxLike}, the ML estimators of $\pole$ and
$\delta$, $\polehat$ and $\deltahat$, exist and are consistent, with
convergence rates $N$ and $N^{1/2}$ respectively.
\end{theorem}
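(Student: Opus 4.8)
The plan is to read the approximation (\ref{approxlikeeqn}) as an exact exponential objective and to run a standard M-estimation argument, but with a separation rate that differs in the two coordinates. By Note I, conditional on a trial pole $\pole$ the ordinates $I_0(\pole+j/N)$ behave as independent exponentials with rates $\eta_j=\eta_j(\delta)$ that do not depend on $\pole$; the $\o(N)$ error of Theorem \ref{ApproxLike} and the negligible dependence and discontinuity terms of Notes II--IV are carried along and shown not to affect the comparison after $N$-standardisation. Writing $\mu_j(\pole)=\mathrm{E}\{I_0(\pole+j/N)\}$, the normalisation built into $\eta_j$ is precisely $\eta_j(\deltastar)\,\mu_j(\polestar)=1+\o(1)$ for every $j$ --- this is Lemma \ref{PeriodogramPole} at $j=0$ together with (\ref{deffy}) for $j\neq0$ --- so the mean objective at the truth equals $\sum_j\{\log\eta_j-1\}+\o(N)$, while at a competitor it equals $\sum_j\{\log\eta_j-\eta_j\mu_j(\pole)\}$. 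I would show that $\ell/N$ concentrates around this mean uniformly and that the mean separates $(\polestar,\deltastar)$ from all competitors, at rate $N$ in $\pole$ and $N^{1/2}$ in $\delta$.

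\textit{Existence.} For fixed $N$ the objective is a finite sum, smooth in $(\pole,\delta)$ on each of the finitely many intervals on which $j_{0,N}(\pole)=[N\pole]$ is constant, and jumping only by $\O(1)$ across the endpoints (Note II). Since $\Polebias$ and $f^{\dagger}$ are bounded and bounded away from zero with bounded derivatives (Note I), $\ell/N$ is bounded on $(0,1/2)\times[\delta_-,\delta_+]$ for any $0<\delta_-<\delta_+<1/2$; a direct estimate shows that the supremum is not approached as $\delta\downarrow0$ or $\delta\uparrow1/2$ for large $N$, so a maximiser $(\polehat,\deltahat)$ exists in the interior.

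\textit{Rate $N$ for $\polehat$.} This is the non-regular core. Fix $\delta=\deltastar$ and let $m=m(\pole)$ be the integer shift of the demodulated grid relative to the truth, so that $\pole+j/N=\polestar+(j-m)/N+\O(N^{-1})$ and $\mu_j(\pole)=\mu_{j-m}(\polestar)=\eta_{j-m}^{-1}\{1+\o(1)\}$. The mean penalty is
\[
\Delta(\pole)=\sum_j\eta_j\mu_j(\pole)-\sum_j1=\sum_j\Big(\frac{\eta_j}{\eta_{j-m}}-1\Big)+\o(N),
\]
in which the ratios tend to $1$ as $|j|\to\infty$. The decisive summand is $j=m$: it pairs the pole ordinate $\mu_m(\pole)=\mu_0(\polestar)=\O(N^{2\delta})$ with the small rate $\eta_m$, so that $\eta_m/\eta_0=|m|^{2\delta}\Polebias\{1+\o(1)\}$, and since $\Polebias>1$ for $\delta>0$ a single misaligned step costs $\Polebias-1>0$ in the mean while a shift of $m$ costs $\gtrsim|m|^{2\delta}$, diverging with $|m|$. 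To promote this into $N(\polehat-\polestar)=\O_p(1)$ I would control the fluctuation of $\ell(\pole)-\ell(\polestar)$ uniformly over the $\O(N)$ admissible shifts: its dominant stochastic part is $-(\eta_m-\eta_0)I_0(\polestar)=-\zeta_0\,(|m|^{2\delta}\Polebias-1)\{1+\o(1)\}$ with $\zeta_0=\eta_0 I_0(\polestar)$ a unit-mean exponential, so on $\{\zeta_0>\epsilon\}$ (probability $1-\o(1)$ as $\epsilon\downarrow0$) the penalty beats both the $\O(\log|m|)$ deterministic gain and the $\O_p(1)$ remainder from the other shifted ordinates; a maximal inequality over $m$, using the covariance bounds on $V_{\lambda_j,\lambda_k,N}$ noted in Section \ref{Likelihoodsection}, then yields $P(|m(\polehat)|>K)\to0$ as $K\to\infty$ uniformly in $N$.

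\textit{Rate $N^{1/2}$ for $\deltahat$, and the obstacle.} On the event that $\polehat$ lies within $\O(N^{-1})$ of $\polestar$ the grid is asymptotically aligned and the profile objective in $\delta$ is a smooth exponential log-likelihood, whose mean penalty is the Kullback--Leibler form $\sum_j\{r_j-1-\log r_j\}$ with $r_j=\eta_j(\delta)/\eta_j(\deltastar)$, uniquely minimised at $\delta=\deltastar$. Since $\partial_\delta\log\eta_j=2\log(|j|/N)+\O(1)$ and $\sum_{j=1}^{M}\{2\log(j/N)\}^2=N\int_0^{1/2}(2\log x)^2\,dx\,\{1+\o(1)\}=\O(N)$, the score at $\deltastar$ is $\O_p(N^{1/2})$ and the observed information is $\O(N)$ and bounded away from zero, so a one-term Taylor expansion gives $N^{1/2}(\deltahat-\deltastar)=\O_p(1)$ by the usual argmax argument. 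The main obstacle throughout is the uniform stochastic control in the pole direction: the pole ordinate has variance $\O(N^{4\delta})$ and the comparison across the $\O(N)$ misalignments is driven by the single heavy-scaled variable $I_0(\polestar)$, so the required maximal inequality cannot come from naive summation but must exploit the exponential tail of $\zeta_0$ together with the decay of the $V_{\lambda_j,\lambda_k,N}$; disentangling this non-regular pole behaviour from the regular $\O(N)$ information in $\delta$ is where the argument is most delicate.
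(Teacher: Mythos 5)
Your existence argument and your conditional treatment of $\deltahat$ are fine, but the core of your pole-rate argument has a genuine gap: you compare $\ell(\pole,\deltastar)$ \emph{directly} with $\ell(\polestar,\deltastar)$, and unless $N(\pole-\polestar)$ is an integer these two quantities are built from \emph{disjoint} demodulated grids. Your ``remainder from the other shifted ordinates'' is then a sum of order $N$ differences of the form $\eta_j\{I_0(\polestar+(j+\tau)/N)-I_0(\polestar+j/N)\}$, where $\tau$ is the fractional offset; each such difference has variance bounded away from zero (the correlation between periodogram ordinates a fixed fraction of $1/N$ apart is bounded away from one) and the cross-covariances decay, so this remainder is $O_p(N^{1/2})$, not $O_p(1)$ as you claim. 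Equivalently, Proposition \ref{dist8} gives $\ell_{\pole}(\chistar)=O_p(N^{3/2})$, so the log-likelihood fluctuates by $O_p(N^{1/2})$ across a single $O(N^{-1})$ step in $\pole$. Against this noise your mean penalty cannot win: it is the constant $\Polebias-1$ for a one-step misalignment and at most $O(|m|^{2\delta})\le O(N^{2\delta})$ in general, with $2\delta<1$; the noise dominates the penalty over all misalignments $|m|=o(N^{1/(4\delta)})$, a range which for every $\delta<1/2$ extends at least to $N^{1/2}$. So a direct mean-separation comparison against $\chistar$ cannot even deliver rate $N^{1/2}$ for the pole, let alone rate $N$. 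Indeed the likelihood at fractional offsets of $\polestar$ exceeds $\ell(\chistar)$ by $O_p(N^{1/2})$ with appreciable probability --- this is precisely why $N(\polehat-\polestar)$ has a nondegenerate (Cauchy) limit rather than collapsing onto the aligned grid --- hence \emph{no} argument that separates $\polestar$ in the mean from all continuum competitors can succeed.

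The device you are missing, which is how the paper's proof in Appendix \ref{ConsistencyAppendix} closes this, is to never compare across grids. For each trial $\chival=(\pole,\delta)$ the paper compares $\ell(\chival)$ with $\ell(\tildechistar)$, where $\tildechistar=(\kapj,\deltastar)$ and $\kapj=\pole+j_p/N$ is the point of the $\pole$-demodulated grid closest to $\polestar$. Since $\kapj-\pole$ is an integer multiple of $1/N$, both log-likelihoods are functions of the \emph{same} set of periodogram ordinates: the $O_p(N^{1/2})$ grid-offset noise is common to the two sides and cancels identically, and the difference reduces to a comparison of spectral weights on a fixed grid, exactly as in Theorem 3.1 of \citet{Giraitis2001}, with two extra pole-ordinate terms ($I_0(\kap)W_1$ and $I_0(\kapj)W_2$ in the paper's notation) whose suprema over $\Omega(K)$, after division by $u_N$, have expectation $O(K^{-1})$. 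The triangle inequality $N|\pole-\polestar|\le N|\pole-\kapj|+1/2$ then converts control of the integer misalignment into the claimed $N$-consistency, and the $\delta$ part rides along in $u_N$. Your per-ordinate penalty computation (the factor $|m|^{2\delta}\Polebias$) is the right heuristic for the signal in this same-grid comparison, but without the same-grid reduction --- or some substitute that makes the offset noise common to both sides --- your maximal-inequality programme cannot be completed.
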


\begin{proof}
See Appendix \ref{ConsistencyAppendix}.
\end{proof}

\vspace{0.1 in}

\noindent The $N$-consistency of $\polehat$ matches the convergence rate of
\citet{Giraitis2001}. It is unusual to find superconsistent estimators in
likelihood based procedures. An intuitive understanding of the rate can be
found in the time domain. As we collect $N\polestar$ full periods of the data
the periodicity of the data is determined to an accuracy of $\O(N^{-1})$. The
reason why this rate is achieved is that the log-likelihood is varying
$\O(N^{3/2})$ (see proposition 14) over distances in $\pole$ of $\O(N^{-1})$
near the value $\pole = \polestar.$ However, the convergence rate is different
to that of \citet{ChenWuDahlhaus}. The latter model the seasonality as a
deterministic seasonal component embedded in stationary noise. In Chen {\em et
al.} a regression model is employed to estimate the amplitude and locations of
the seasonality, and a rate of $N^{-3/2}$ rather than $N^{-1}$ is achieved. We
employ a different model and hence do not expect the same convergence rates as
is achieved by Chen {\em et al.}.

\subsection{Properties of The Fisher Information Matrix}
\begin{theorem}
\label{Fisher}{\textbf{The Fisher Information.}}\\
For a series from a periodic long memory model as described by
(\ref{giraitis}), for large $N$, the components of the Fisher information
\[
\fisherN=
\left(
\begin{array}{cc}
\fisherNe_{\pole,\pole} & \fisherNe_{\pole,\delta}\\
\fisherNe_{\pole,\delta} & \fisherNe_{\delta,\delta}
\end{array}
\right)
\]
are given by
\begin{eqnarray*}
\fisherNe_{\delta,\delta} = \fisherNlme_{\delta,\delta}N+\o(N), ~~~~~
\fisherNe_{\pole,\delta} = \fisherNlme_{\pole,\delta}^{(1)}N+o\{\log(N)\},~~~~~
\fisherNe_{\pole,\pole} = \fisherNlme_{\pole,\pole} N^2+\o(N^2),
\end{eqnarray*}
where $\fisherNlme_{\delta,\delta}$,
$\fisherNlme_{\pole,\delta}^{(1)}$ and
$\fisherNlme_{\pole,\pole} $ are constants independent of
$N$ but are functions of the true values of $\pole$ and
$\delta$.
\end{theorem}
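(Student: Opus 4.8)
The plan is to identify the Fisher information with the expectation of the negative Hessian of the log-likelihood (\ref{approxlikeeqn}) and to exploit the fact that, writing $g_j=\log\eta_j$ and $\ell=\sum_{j=J_1}^{J_2}\{g_j-\eta_j I_0(\pole+j/N)\}$, each summand involves the periodogram only at the single demodulated frequency $\lambda_j=\pole+j/N$. Every second partial derivative is therefore again a single sum over $j$ whose only random ingredients are $I_0(\lambda_j)$, $\dot{I}_0(\lambda_j)$ and $\ddot{I}_0(\lambda_j)$, so the expected Hessian reduces to the three single-frequency moments $E\{I_0(\lambda_j)\}$, $E\{\dot{I}_0(\lambda_j)\}$ and $E\{\ddot{I}_0(\lambda_j)\}$. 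I would obtain these by differentiating the (Fej\'er-smoothed) expected periodogram $E\{I_0(\lambda)\}$ in $\lambda$, taking the leading near-pole behaviour from the expansions of \citet{OMS2004} and, for the aligned ordinate $j=0$, from Lemma \ref{PeriodogramPole}. The derivatives of $g_j$ and $\eta_j$ in $(\pole,\delta)$ are explicit from (\ref{Theorem1Etaj}), with $\partial g_j/\partial\delta=2\log(|j|/N)+\O(1)$ and $\partial g_j/\partial\pole=\O(1)$. A useful structural point is that only single-frequency expectations occur, so the cross-frequency periodogram covariances bounded earlier never enter, and the three different rates separate cleanly.

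For the $(\delta,\delta)$ entry, $I_0(\lambda_j)$ is free of $\delta$, and using $E\{I_0(\lambda_j)\}=1/\eta_j$ together with $\eta_j=e^{g_j}$ the second-derivative expression collapses to
\[
\fisherNe_{\delta,\delta}=-E\left\{\frac{\partial^2\ell}{\partial\delta^2}\right\}=\sum_{j=J_1}^{J_2}\left(\frac{\partial g_j}{\partial\delta}\right)^2=\sum_{j=J_1}^{J_2}\{2\log(|j|/N)+\O(1)\}^2 .
\]
With $\O(N)$ terms and $(1/N)\sum_j\{2\log(|j|/N)\}^2$ a convergent Riemann sum for $\int_0^{1/2}\{2\log x\}^2\,dx<\infty$, this gives $\fisherNlme_{\delta,\delta}N+\o(N)$ with $\fisherNlme_{\delta,\delta}$ a finite function of $(\pole,\delta)$.

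The $N^2$ rate of the $(\pole,\pole)$ entry originates entirely from the grid-shift term $-\eta_j\ddot{I}_0(\lambda_j)$, whose expectation dominates $-E\{\partial^2\ell/\partial\pole^2\}$. Using $E\{I_0(\lambda)\}\sim f^{\dagger}(\pole)\,|\lambda-\pole|^{-2\delta}$ near the pole, the second frequency-derivative behaves like $2\delta(2\delta+1)f^{\dagger}(\pole)|\lambda-\pole|^{-2\delta-2}$ at $\lambda_j$; multiplying by $\eta_j\sim|j|^{2\delta}/\{N^{2\delta}f^{\dagger}(\pole)\}$, each term tends to $2\delta(2\delta+1)N^2/j^2$. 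The decisive observation is that $\sum_{j\neq0}j^{-2}$ converges, so $\sum_j\eta_j E\{\ddot{I}_0(\lambda_j)\}=\O(N^2)$ rather than $\O(N^3)$; the aligned ordinate $j=0$ adds a further finite $\O(N^2)$ amount through $\Polebias$ and Lemma \ref{PeriodogramPole}. The remaining contributions, $\sum_j(\partial g_j/\partial\pole)^2=\O(N)$ and the mixed piece $2\sum_j(\partial\eta_j/\partial\pole)E\{\dot{I}_0(\lambda_j)\}$, are subdominant, the latter being controlled by the near-pole antisymmetry of $E\{\dot{I}_0(\lambda_j)\}$ in $j$. Hence $\fisherNe_{\pole,\pole}=\fisherNlme_{\pole,\pole}N^2+\o(N^2)$.

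The $(\pole,\delta)$ entry splits as $\sum_j(\partial g_j/\partial\pole)(\partial g_j/\partial\delta)+\sum_j(\partial\eta_j/\partial\delta)E\{\dot{I}_0(\lambda_j)\}$. The first sum is $\O(N)$ as before. In the second, $\partial\eta_j/\partial\delta=\eta_j\,\partial g_j/\partial\delta$ and $E\{\dot{I}_0(\lambda_j)\}\sim-2\delta f^{\dagger}(\pole)|\lambda_j-\pole|^{-2\delta-1}\mathrm{sgn}(j)$ combine to a summand proportional to $\log(|j|/N)/j$, which is odd in $j$; over the almost symmetric index range $J_1,\dots,J_2$ its leading part cancels, leaving a residual of order $N$ with the stated $\o\{\log(N)\}$ correction. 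I expect the main obstacle to be twofold. First, the near-pole expansions of $E\{\dot{I}_0(\lambda_j)\}$ and $E\{\ddot{I}_0(\lambda_j)\}$ must be made rigorous and uniform in $j$, including the small-$|j|$ and boundary ordinates where the singularity and the Fej\'er kernel interact and Lemma \ref{PeriodogramPole} must be invoked. Second, the symmetry-cancellation bookkeeping must be done carefully enough both to sharpen the cross term to $\o\{\log(N)\}$ and to confirm that the mean-$\dot{I}_0$ terms are genuinely subdominant to the $N^2$ term; the $\O(1)$ discontinuities in $J_1$ and $J_2$ (Note II) enter only these lower-order pieces and are absorbed into the error terms.
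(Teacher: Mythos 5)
Your proposal is correct and follows essentially the same route as the paper's proof in Appendices A.6--A.7: expand the expected negative Hessian of the demodulated likelihood so that only the single-frequency moments $\mathrm{E}\{I_0(\lambda_j)\}$, $\mathrm{E}\{\dot{I}_0(\lambda_j)\}$ and $\mathrm{E}\{\ddot{I}_0(\lambda_j)\}$ enter (via \citet{OMS2004} and Lemma \ref{PeriodogramPole} at $j=0$), then sum using the convergent $\sum_{j\neq 0}j^{-2}$ for the $(\pole,\pole)$ entry, a Riemann sum of the $\log^{2}\left|N/j\right|$ terms for the $(\delta,\delta)$ entry, and the odd-in-$j$ cancellation of $\mathrm{E}\{\dot{I}_0(\lambda_j)\}$ over the nearly symmetric index range for the cross entry. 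The only difference is presentational: you obtain the derivative moments by differentiating the Fej\'{e}r-smoothed expected periodogram, whereas the paper computes them as expectations of quadratic forms in the trigonometric sums of equation (\ref{trig}); both reduce to the same kernel integrals, and the key orders you invoke, namely $\dot{B}_{\lambda_j,N}(\pole,\delta)=\O(j^{-1})$ and odd in $j$, $\ddot{B}_{\lambda_j,N}(\pole,\delta)=\O(j^{-2})$, and $B_{\lambda_j,N}(\pole,\delta)=1+\O(\log(j)/j)$, are precisely those the paper establishes.
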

\begin{proof} See Appendices \ref{FirstOrder} and \ref{SecondOrder}.
\end{proof}

\vspace{0.1 in} \noindent For a full analysis in a regular ML setting, the
second order properties of the MLEs can in a general setting be deduced from
the above quantities. Large sample properties, specifically consistency and
asymptotic variance, may be considered via a Taylor expansion of the
log-likelihood, see for example \cite{cheng}. However, we note that we are
\textbf{not} in a standard setting; even if we may expand the log-likelihood
near the true value of the parameter, because of the non-standard behaviour of
the derivatives of the log-likelihood, the observed Fisher information does
\textbf{not} converge to a diagonal matrix with constant entries, but rather
the (appropriately standardized) observed Fisher information for $\pole$
converges to a random variable with order one variance. We will discuss the
interpretation of the Fisher information in this context, in the appendix. For
the derivatives involving the location of the pole, extra terms of magnitude
$N$ are introduced and thus the variance of the observed Fisher information in
$\pole$ is $\O(N^5).$ The magnitude of the variance of the observed Fisher
information in $\pole$ implies that a standardization of the random variable
must be employed that results in a negligible expectation of the restandardized
random variable. We also therefore discuss the large sample theory of the {\em
observed} Fisher information.

\subsection{The Asymptotic Properties of the MLEs}
We now consider the use of the Fisher information to determine the asymptotic
variance. Consider a Taylor expansion of the score near the true value
$\chistar$ of the parameters $\chival = (\pole,\delta)$ evaluated at the MLE
$\chihat$.  We denote the observed Fisher information by $\obsfisherN
\left(\chival\right)$, and let $\chiprime$ lie between $\chival$ and
$\chistar$. We denote by $\ldot(\chival)$ the score in $\chival$, noting that
the score is well defined if the log-likelihood is evaluated ignoring the
$\pole$ dependence of $J_1$ and $J_2$, see section A-4:
\begin{equation}
\label{scoreinlambda}
\ldot(\chival)=
\left(\begin{array}{c}
\ell_{\pole}\left(\chival\right)\\
\ell_{\delta}\left(\chival\right)
\end{array}
\right). \end{equation} Then using a first-order expansion of the log
likelihood in the usual way for $N$ sufficiently large, we have the (vector)
score function:
\[
\ldot(\chihat) = \ldot(\chistar)-
\obsfisherN(\chiprime)( \chihat-\chistar) \qquad
 \Longleftrightarrow \qquad
\obsfisherN(\chiprime)(\chihat-\chistar ) =
\ldot(\chistar)-\ldot(\chihat).
\]
Thus the difference between $\chihat$ and $\chistar$, when appropriately scaled
by the random matrix $\obsfisherN(\chiprime)$ corresponds to the value of the
score at $\chistar$ in the usual fashion.  The statistical properties of
$\obsfisherN(\chiprime)$ are not straightforward in this non-regular problem,
and require further investigation. Following \citet{Sweeting2}, we define a
suitable standardization matrix $\bm{B}_N$ and the standardized observed Fisher
information by
\begin{equation}
\label{defofBN}
\bm{B}_N=\left(\begin{array}{cc} N^{5/2} & 0 \\
0 & N \fisherNlme_{\delta,\delta}
\end{array}\right) \qquad \obsfisherstdN=\bm{B}_N^{-1/2}\obsfisherN \bm{B}_N^{-1/2},
\end{equation}
as the large sample properties of $\obsfisherstdN$ are tractable, and their
determination is an important step to finding the large sample properties of
the MLE.  Specifically, we let
\[
\bm{B}_N^{-1/2} \obsfisherN(\chiprime){\bm{B}}_N^{-1/2}
\bm{B}_N^{1/2}(\chihat-\chistar) =
\bm{B}_N^{-1/2}\{\ldot(\chistar) -\ldot(\chihat)\},
\]
so that
\[
\obsfisherstdN(\chiprime)
{\bm{B}}_N^{1/2}(\chihat-\chistar ) =
{\bm{B}}_N^{-1/2}\ldot(\chistar)=\stscore(\chistar).
\]
The latter expression defines the standardized score $\stscore(\cdot).$ See the
Appendix for a full discussion of these quantities. Note that $\bm{B}_N$ is the
large $N$ approximation to the Fisher information matrix for $\delta$ and
corresponds to an appropriate order normalisation for $\pole,$ thus
$\obsfisherstdN\left(\chival\right)$ is the observed Fisher information
renormalized by ${\bm{B}}_N$.

We note that for $N$ large enough, the expected value of
$\obsfisherstdN(\chistar)$ is the identity matrix for the $\delta$ entry, but
the expectation of the first entry of $\obsfisherstdN(\chistar)$ is $\o(1)$
whilst the variance of the first entry is $\O(1).$ In a standard setting the
expectation is $\O(1)$ and the variance $\o(1)$.

\begin{theorem}
\label{Score}{\textbf{Distribution of the Score and Observed Fisher Information.}}\\
For the likelihood of Theorem \ref{ApproxLike} the standardized score
$\stscore(\chistar)$ and the standardized Observed Fisher information matrix
$\obsfisherstdN(\chistar)$ asymptotically have the following properties:
\begin{equation}
\stscore(\chistar)\overset{\cal{L}}{\longrightarrow}\limstscore,\quad
\obsfisherstdN(\chistar)\overset{\cal{L}}{\longrightarrow}\bm{W},
\end{equation}
where the entries of $\limstscore = (k_1,k_2)^\transpose$ and $\bm{W}$ are uncorrelated and
\begin{equation}
\begin{array}{lcccccccccr}
W_{11}&\sim& \mathcal{N}\left(0,8 \pi^4/15 \right),& & W_{12}&=& 0, & & W_{22}&=&1\\
k_1&\sim& \mathcal{N} \left(0,\pi^2/3 \right),& & k_{2}&\sim& \mathcal{N}\left(0,1\right) .& & &&
\end{array}
\end{equation}
\end{theorem}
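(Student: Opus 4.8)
The plan is to exploit the representation in Theorem~\ref{ApproxLike}: to the stated accuracy the demodulated ordinates $\Pgram(\pole+j/N)$, $j=J_1,\dots,J_2$, behave like independent exponentials with rate $\eta_j$, so that at $\chistar$ the variables $U_j=1-\eta_j\Pgram(\pole+j/N)$ have mean zero, unit variance, third moment $-2$, and are to leading order mutually independent. Differentiating (\ref{approxlikeeqn}) directly and using $\partial_\pole\eta_j=\eta_j\,\partial_\pole\log\eta_j$ gives
\[
\ell_\delta(\chistar)=\sum_{j}(\partial_\delta\log\eta_j)\,U_j,
\qquad
\ell_\pole(\chistar)=\sum_{j}(\partial_\pole\log\eta_j)\,U_j-\sum_{j}\eta_j\,\dotPgram(\lamj),
\]
with longer expressions for the second derivatives feeding $\obsfisherN(\chistar)$, whose leading stochastic parts involve $\Pgram(\lamj)$, $\dotPgram(\lamj)$ and $\ddotPgram(\lamj)$. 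First I would classify every term by its order in $N$ using $\eta_j\asymp|j|^{2\delta}/(N^{2\delta}f^{\dagger}_j)$, $\partial_\delta\log\eta_j=2\log(|j|/N)+\O(1)$ and $\partial_\pole\log\eta_j=\O(1)$, discarding contributions that are $\o$ of the normalizing power carried by $\bm{B}_N$. In particular $\sum_j(\partial_\pole\log\eta_j)U_j$ has variance $\O(N)$ and is negligible against $N^{5/2}$, so the $\pole$-score is governed entirely by the periodogram-derivative sum $\sum_j\eta_j\,\dotPgram(\lamj)$, whereas the $\delta$-score is the diagonal form $\sum_j(\partial_\delta\log\eta_j)U_j$ whose variance is $\fisherNlme_{\delta,\delta}N+\o(N)$ by the computation already underlying Theorem~\ref{Fisher}.

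The second ingredient is the exact large-sample second-order structure, on the aligned grid, of the Gaussian vector formed by the real and imaginary DDFT parts and their first and second frequency derivatives. Here I would extend the \citet{OMS2004} covariance formulas (the $V_{\freqj,\freqk,N}$ kernels recalled in Section~\ref{demodsec}) to the demodulated transform and to one and two frequency derivatives, evaluating all cross-moments by the Isserlis formula since every quantity is linear in the data. This reduces each score and information entry to a quadratic form in jointly Gaussian variables with an explicit coefficient array of derivative-of-Dirichlet-kernel type; writing $A_j,B_j$ for the aligned real and imaginary parts one finds, for instance, that $\dotPgram(\lamj)$ has no diagonal ($\Pgram(\lamj)$) component and reduces to an \emph{off-diagonal} bilinear form $\sum_{k\neq j}c_{jk}(A_jA_k+B_jB_k)$. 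The limiting variances then emerge as series of the type $\sum_{j\neq0}j^{-2}$ and $\sum_{j\neq0}j^{-4}$; crucially the $\delta$-dependence carried by $\eta_j$ cancels against the $|j|^{-2\delta}$ scaling of the periodogram moments, so that the surviving constants are universal, namely $\pi^2/3=\sum_{j\neq0}j^{-2}$ for $k_1$ and a fixed multiple of $\sum_{j\neq0}j^{-4}=\pi^4/45$ equal to $8\pi^4/15$ for $W_{11}$, while the $\delta$-block reproduces the unit scales $k_2,W_{22}$ by the construction of $\bm{B}_N$.

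With the variances in hand, joint asymptotic normality follows from a central limit theorem for quadratic forms in weakly dependent Gaussian arrays: by the Cramér--Wold device any fixed linear combination of the entries of $\stscore(\chistar)$ and $\obsfisherstdN(\chistar)$ is again such a form, which I would show is asymptotically normal by expressing it as a martingale-difference sum over the frequency index and verifying a Lyapunov condition (finite fourth moments of the $U_j$, coefficients spread across $\Theta(N)$ indices). The qualitative claims are then orthogonality statements. Since $\ell_\delta$ and the fluctuation of the $(\delta,\delta)$ information are \emph{diagonal} forms $\sum_j w_jU_j$, while $\ell_\pole$ is an off-diagonal bilinear form, and $\Expect\{U_j A_kA_l\}=0$ for $k\neq l$, the diagonal and off-diagonal forms are uncorrelated; this yields the asymptotic uncorrelatedness of $\limstscore$ with $\bm{W}$ and the separation of the $\pole$- and $\delta$-blocks, the residual cross-covariances being shown $\o$ of the product of standard deviations by direct Isserlis evaluation. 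Finally $W_{22}\to1$ and $W_{12}=0$ follow because the $\delta$- and cross-information concentrate (mean of order $N$, fluctuation of order $N^{1/2}$), whereas the defining non-regular feature---that the standardized $(\pole,\pole)$ information does \emph{not} concentrate but tends to a genuinely random $\mathcal{N}(0,8\pi^4/15)$---is recovered from the fact that, after the diagonal/off-diagonal cancellations, the deterministic part of the $(\pole,\pole)$ information is only $\O(N^2)=\o(N^{5/2})$, so its $\O(N^{5/2})$ stochastic fluctuation survives standardization.

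The hard part is the variance bookkeeping for the $\pole$-block. Treating the $\dotPgram(\lamj)$ as independent across $j$ overestimates the order, whereas retaining only the leading diagonal covariances and dropping the residual cross-grid correlations $V_{\freqj,\freqk,N}$ over-cancels them; obtaining the correct intermediate order $N^{5/2}$ (hence the $N^{5/4}$ and $N^{5/2}$ entries of $\bm{B}_N$) requires keeping the subleading, non-negligible covariances concentrated near the pole and summing them exactly. Establishing that these corrections neither vanish nor diverge, and that they deliver precisely $\pi^2/3$ and $8\pi^4/15$ independently of $(\polestar,\deltastar)$, is the crux of the argument, and is where the detailed calculations of Appendices~\ref{FirstOrder} and~\ref{SecondOrder} must be carried out.
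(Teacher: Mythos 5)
Your skeleton coincides with the paper's. The $\delta$-score is the linear form $\sum_j R_j^{(1)}(1-\Istd_j)$ handled by a routine CLT; the $\pole$-score is dominated by the periodogram-derivative sum (your $-\sum_j\eta_j\dotPgram(\lamj)$ is exactly the paper's $-N\sum_j \Istddot_j$ in equation (\ref{normscore2})); the standardized $(\pole,\pole)$ information is dominated by $-N^{-1/2}\sum_j\Istdddot_j$, whose mean is $\O(N^2)/N^{5/2}=\o(1)$ so only the fluctuation survives (the paper's Proposition \ref{Fisherinfoconv}); and the uncorrelatedness claims are Isserlis computations plus order bounds (Propositions \ref{scorefishercov} and \ref{crosscovariance}). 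Where you genuinely differ is in representation and in the CLT: the paper keeps the $t$-weighted transforms $C_j,D_j,E_j,F_j$ as additional Gaussian coordinates, writes each $\Istddot_j$, $\Istdddot_j$ as a low-dimensional quadratic form with an explicit eigenvalue decomposition (Proposition \ref{covariancematrix}), computes the \emph{marginal} variances as integrals ($\dot{C}_{\lambda_j,N}\rightarrow 2\pi^2/3$, $\tilde{\sigma}_j^2\rightarrow 16\pi^4/15$), bounds the cross-frequency covariances through the long-memory kernels $V,\dot{V},W$ (Proposition \ref{covarianceofder}), and proves normality by a conditional characteristic-function argument in the style of Hurvich; you re-expand everything as off-diagonal bilinear forms in the DDFT ordinates and invoke a Cram\'{e}r--Wold/martingale/Lyapunov CLT.

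The gap is that the step you yourself identify as the crux is never carried out, and your description of it is incompatible both with the paper and with the constants you assert. In the paper, the limits $\pi^2/3$ and $8\pi^4/15$ come \emph{entirely} from the marginal variances, i.e.\ from treating the $\Istddot_j$ (resp.\ $\Istdddot_j$) as asymptotically uncorrelated across $j$: the cross terms in Proposition \ref{covarianceofder} are proportional to $V^2$, $\dot{V}^2$ and $VW$ and are argued negligible. You claim the opposite: that the independence treatment overestimates the order and that "subleading covariances concentrated near the pole" must be summed exactly — yet you still assert the independence-type answers ($\pi^2/3$ identified with $\sum_{j\neq 0}j^{-2}$, and $8\pi^4/15$ as an underived "fixed multiple" of $\sum_{j\neq 0}j^{-4}$). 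Both cannot hold. Moreover, your own bilinear representation shows the tension is real: writing $C_k-iD_k=\sum_m d_m Z_{k-m}$ with $d_m\approx N/(2\pi i m)$, the cross-moments $E\{\standB_j \standC_k\}$ acquire a $\delta$-free component of order $(j-k)^{-1}$ (visible exactly in the white-noise case), so different $\dotPgram(\lamj)$ share ordinates and carry pairwise covariances of relative size $(j-k)^{-2}$ that do not appear in the $V,\dot{V},W$ bookkeeping and that, summed over the grid, are of the \emph{same} order as the diagonal sum and of opposite sign. Whether these terms cancel the diagonal contribution, merely renormalize it, or are negligible is precisely what fixes the orders $N^{3/2}$, $N^{5/2}$ and the constants in $k_1$ and $W_{11}$; your proposal neither performs that summation nor reconciles it with Proposition \ref{covarianceofder}. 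Until that is done, the stated limits $\mathcal{N}(0,\pi^2/3)$ and $\mathcal{N}(0,8\pi^4/15)$ are assertions, not consequences of your argument, so the proposal does not prove the theorem.
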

\begin{proof}
An outline of the proof given in the Appendix, see Proposition \ref{dist8},
Section \ref{FisherLim} and Proposition \ref{mledistt}.
\end{proof}

\begin{theorem}
\label{mle}{\textbf{Distribution of the MLE.}}\\
For the likelihood of Theorem \ref{ApproxLike}, the ML estimators of $\pole$
and $\delta$, $\polehat$ and $\deltahat$ have distributions that for large
sample approximately take the form:
\begin{equation*}
N(\polehat-\polestar) =
\left\{\frac{N^{5/2}}{-\loglike_{\pole,\pole}(\chiprime)} \right\}
\left\{\dfrac{\loglike_{\pole}(\chistar)}{N^{3/2}} \right\}
\: \overset{\cal{L}}{\longrightarrow} \; \frac{\sqrt{5}}{2\pi\sqrt{2}} C,
\end{equation*}
where $C$ is distributed according to the standard Cauchy distribution, and
\begin{equation}
\sqrt{N}(\widehat{\delta}-\delta^{\star})
\sim \mathcal{AN}\left(0,N \left\{\fisherNe_{\delta,\delta}\right\}^{-1} \right).
\end{equation}
An estimator of the asymptotic variance is formed via
${\hatfisherNe}_{\delta,\delta}=\fisherNe_{\delta,\delta}(\widehat{\xi},\widehat{\delta})$.
The forms of $\fisherNe_{\delta,\delta}(\xi,\delta)$ and
$\fisherNlme_{\delta,\delta}(\xi,\delta)$ are given in the Appendix.
\end{theorem}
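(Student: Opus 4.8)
The plan is to invert the first-order score expansion already displayed above the statement, namely $\obsfisherstdN(\chiprime)\,\bm{B}_N^{1/2}(\chihat-\chistar)=\stscore(\chistar)$, and then read off the two coordinates separately, exploiting the fact from Theorem \ref{Score} that the limiting observed information $\bm{W}$ is diagonal ($W_{12}=0$). First I would record that by Theorem \ref{GiraitisLike} the estimators are consistent, so $\chihat\to\chistar$ and hence the intermediate value $\chiprime$ also converges to $\chistar$; the vanishing of $\ldot(\chihat)$ at the maximum holds up to the $\O(1)$ discontinuities of Note II, which are negligible against a log-likelihood of order $N$, so the expansion is valid to leading order. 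Because the limiting information is diagonal, the cross-derivative contribution $\loglike_{\pole,\delta}(\chiprime)(\deltahat-\deltastar)$ to the pole row is asymptotically negligible relative to the leading pole score $\loglike_{\pole}(\chistar)$ (and symmetrically the term $\loglike_{\pole,\delta}(\chiprime)(\polehat-\polestar)$ in the $\delta$ row), so the two coordinate equations decouple to leading order.

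For $\delta$, the decoupled second coordinate reads, in the limit, $W_{22}\,\lim\sqrt{N\fisherNlme_{\delta,\delta}}(\deltahat-\deltastar)=k_2$; since $W_{22}=1$ and $k_2\sim\mathcal{N}(0,1)$ by Theorem \ref{Score}, Slutsky's theorem gives $\sqrt{N}(\deltahat-\deltastar)\overset{\cal{L}}{\longrightarrow}\mathcal{N}(0,\fisherNlme_{\delta,\delta}^{-1})$. I would then substitute the large-sample identity $\fisherNe_{\delta,\delta}=\fisherNlme_{\delta,\delta}N+\o(N)$ from Theorem \ref{Fisher} to rewrite the variance as $N\{\fisherNe_{\delta,\delta}\}^{-1}$, and invoke consistency of $(\polehat,\deltahat)$ with continuity of $\fisherNe_{\delta,\delta}(\cdot,\cdot)$ to justify the plug-in ${\hatfisherNe}_{\delta,\delta}$; this yields the stated $\mathcal{AN}$ approximation and is entirely standard.

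The pole coordinate is where the non-regularity enters. Writing the first decoupled row as the displayed product $N(\polehat-\polestar)=\left\{\frac{N^{5/2}}{-\loglike_{\pole,\pole}(\chiprime)}\right\}\left\{\frac{\loglike_{\pole}(\chistar)}{N^{3/2}}\right\}$, I would appeal to the joint weak convergence in Theorem \ref{Score}: the standardized pole score $\loglike_{\pole}(\chistar)/N^{3/2}\overset{\cal{L}}{\longrightarrow}k_1\sim\mathcal{N}(0,\pi^2/3)$, while the standardized observed information $-\loglike_{\pole,\pole}(\chistar)/N^{5/2}\overset{\cal{L}}{\longrightarrow}W_{11}\sim\mathcal{N}(0,8\pi^4/15)$, so the first bracketed factor converges to $1/W_{11}$ (the continuous-mapping step is legitimate because $W_{11}$ is a.s. nonzero). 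The key point is that these limits arise as the \emph{joint} Gaussian limit of the standardized score and information in Theorem \ref{Score} and are uncorrelated, hence independent; the ratio of two independent zero-mean Gaussians is Cauchy with scale $\sqrt{(\pi^2/3)/(8\pi^4/15)}=\sqrt{5}/(2\pi\sqrt{2})$. The continuous mapping theorem applied to the product then delivers $N(\polehat-\polestar)\overset{\cal{L}}{\longrightarrow}\{\sqrt{5}/(2\pi\sqrt{2})\}\,C$ with $C$ standard Cauchy.

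The hard part will be justifying the passage from $\chiprime$ to $\chistar$ inside $-\loglike_{\pole,\pole}(\chiprime)$, so that the pair $(\loglike_{\pole}(\chistar),\loglike_{\pole,\pole}(\chiprime))$ inherits the joint limit of $(\loglike_{\pole}(\chistar),\loglike_{\pole,\pole}(\chistar))$. In a regular problem the observed information concentrates on the expected information by a law of large numbers, so it may be evaluated at any consistent point; here, by Theorem \ref{Fisher} and the surrounding discussion, the standardized observed information $-\loglike_{\pole,\pole}/N^{5/2}$ does \emph{not} concentrate but converges to the genuinely random, mean-zero $W_{11}$, whose $\O(1)$ fluctuations dominate its expectation. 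I would therefore need a stochastic-equicontinuity argument showing that $-\loglike_{\pole,\pole}(\chiprime)/N^{5/2}$ and $-\loglike_{\pole,\pole}(\chistar)/N^{5/2}$ share the same weak limit even though $\chiprime$ may differ from $\chistar$ only on the $\O(N^{-1})$ scale on which the likelihood oscillates and on which the indices $J_1,J_2$ jump. This local-uniformity control over a shrinking neighbourhood, rather than any moment computation, is the crux; once it is in place, the decoupling $W_{12}=0$ and the joint Gaussianity from Theorem \ref{Score} make the ratio representation and the Cauchy conclusion immediate. Note finally that the mean-zero denominator $W_{11}$ is exactly what produces a heavy-tailed limit with no finite moments, consistent with the slow convergence to the asymptotic distribution remarked on earlier.
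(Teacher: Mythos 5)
Your proposal is correct and follows essentially the same route as the paper: decoupling the two coordinates via the diagonal limit $W_{12}=0$, a Slutsky argument with $W_{22}\overset{P}{\to}1$ for $\sqrt{N}(\deltahat-\deltastar)$, and for the pole the ratio of the asymptotically independent zero-mean Gaussians $k_1\sim\mathcal{N}(0,\pi^2/3)$ and $W_{11}\sim\mathcal{N}(0,8\pi^4/15)$, giving the Cauchy limit with scale $\sqrt{5}/(2\pi\sqrt{2})$. The passage from $\chiprime$ to $\chistar$ that you flag as the crux is exactly what the paper supplies in Proposition \ref{Fisherinfoconv}, which establishes the weak limit of $[\bm{W}_N(\chival)]_{11}$ uniformly over the Sweeting-style local neighbourhoods $\phi_N^s$, together with the asymptotic independence of Proposition \ref{scorefishercov}.
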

\begin{proof}
An outline proof is given in the Appendix, see Proposition \ref{mledistt} and section \ref{distscore}.
\end{proof}

\noindent\textit{\Note} \citet{Giraitis2001} do not find the limiting
distribution of their estimator, $\hat{\xi}_G$, of $\pole$. They note that this
is an artefact of the maximization over the specified grid.  This constraint is
not enforced in our approach.  Note that $\mbox{E}(|\widehat{\xi} -
\widehat{\xi}_G|)= O\left(N^{-1}\right)$ but that $N|\widehat{\xi} -
\widehat{\xi}_G|$ is {\em not} constrained to be zero or even to have a
tractable distribution, this result is demonstrated empirically in the
simulations. The convergence to the Cauchy for extreme values of $\delta$ is
quite slow, we provide, in the Appendix, a second approximation to the
distribution of the renormalized estimator of the pole, via more carefully
approximating the dominant contributions to the mean and variance of the
numerator and denominator that define the random variable the estimator
follows.

To compare the two large sample and asymptotic forms of the distributions, we
refer to Figure \ref{poly} (a) and (b). As $\delta$ increases in magnitude it
takes longer for the large sample approximation to be close the asymptotic
distribution, as is obvious from these plots. For a list of critical values of
the distribution see Table \ref{table1}.

The likelihood of the data changes in magnitude dramatically depending on the
value of $\xi$ and its alignment with the grid of frequencies at which the
periodogram is evaluated, determination of the best value of $\xi$ is pivotal
for characterizing the system, and must be the first stage of any analysis. For
completeness we now discuss the estimation of the additional parameters, i.e.
$\bm{\theta}$ and $\sigma^2_{\epsilon}$.

\subsection{White Noise Variance and Nuisance Parameters}
We now consider the estimation of the white noise component and regular
spectral component.  We model the sdf parametrically by
\begin{alignat}{1}
f(\lambda)&=\frac{\sigma^2_{\epsilon}\left|h(\lambda;\bm{\theta})\right|^2}{\left|2\cos\left(2\pi\lambda\right)
-2\cos\left(2\pi\xi\right) \right|^{2\delta}},\quad
f^{\dagger}(\lambda)=\frac{\left|h(\lambda;\bm{\theta})\right|^2\sigma^2_{\epsilon}\left|\lambda-\xi
\right|^{2\delta}}{\left|2\cos\left(2\pi\lambda\right)
-2\cos\left(2\pi\xi\right) \right|^{2\delta}}.
\end{alignat}
Differentiating $\ell(\pole,\delta,\bm{\theta},\sigma^2_{\epsilon})$ from
equation (\ref{approxlikeeqn}) with respect to $\sigma^2_{\epsilon}$ we obtain
that:
\begin{alignat}{1}
\nonumber
\frac{\partial \loglike(\xi,\delta,\bm{\theta},\sigma^2_{\epsilon})}{\partial
\sigma^2_{\epsilon}}
\nonumber
&=\sum_{j=J_1}^{J_2}\left\{-\frac{1}{\sigma^2_{\epsilon}}+\frac{\eta_jI_0
\left(\xi+j/N\right)}{\sigma^2_{\epsilon}} \right\}\\
\nonumber
\widehat{\sigma}^2_{\epsilon}/\sigma^2_{\epsilon}&=\frac{1}{-J_1+J_2+1}\sum_{j=J_1}^{J_2}
\widehat{\eta}_j I_0
\left(\xi+j/N\right)\\
\widehat{\eta}_j&=\frac{
\left|j\right|^{2\widehat{\delta}}}{N^{2\widehat{\delta}}}\frac{\left|2\cos\left(2\pi\lambda\right)
-2\cos\left(2\pi\xi\right) \right|^{2\widehat{\delta}}}
{\sigma^2_{\epsilon}|h(\lambda;\widehat{\bm{\theta}})|^2|\lambda-\xi
|^{2\widehat{\delta}}}.
\end{alignat}
Thus it follows that (Taylor expanding the other MLEs and using
their rates of convergence):
\begin{alignat}{1}
\widehat{\sigma}^2_{\epsilon}/\sigma^2_{\epsilon}&\overset{\mathcal{L}}{=}\frac{1}{2(-J_1+J_2+1)}
\chi^2_{2(-J_1+J_2+1)}+\o(1),
\end{alignat}
for $J_1,$ $J_2$ sufficiently large. This follows as the estimators of the
other parameters of the sdf are nearly unbiased for sufficiently large values
of $N$. We note that the covariance of $\widehat{\sigma}^2_{\epsilon}$ with
$\widehat{\delta}$ and $\widehat{\xi}$ can be treated analogously to the
results deriving the covariance of $\widehat{\xi}$ and $\delta$ or using
standard results for $\delta$ and/or $\bm{\theta}$ and
$\widehat{\sigma}^2_{\epsilon}$. Denoting by
\[
\hhat(\lambda) = h(\lambda;\widehat{\bm{\theta}}) \qquad
\dot{h}_i(\lambda;\bm{\theta})=\frac{\partial
\left|h(\lambda;\bm{\theta})\right|^2} {\partial
\theta_i}\qquad {\mathrm{and}} \qquad
\ddot{h}_{ik}(\lambda;\bm{\theta})=\frac{\partial^2
\left|h(\lambda;\bm{\theta})\right|^2} {\partial
\theta_i\partial \theta_k},
\]
we determine that
\begin{alignat}{1}
\nonumber \frac{\partial
\loglike(\xi,\delta,\bm{\theta},\sigma^2_{\epsilon})}{\partial \theta_i}
&=-\sum_{j=J_1}^{J_2}\frac{\dot{h}_i(\lambda_j;\bm{\theta})}
{|\hhat(\lambda_j)|^2}\left\{1-\eta_j I_0 \left(\lambda_j \right) \right\}
=\O(N),
\end{alignat}
and
\begin{alignat}{1}
\nonumber \frac{\partial^2
\loglike(\xi,\delta,\bm{\theta},\sigma^2_{\epsilon})}{\partial \theta_i\partial \theta_k}
&=\sum_{j=J_1}^{J_2}\left[-\frac{\ddot{h}_{ik}(\lambda_j;\bm{\theta})}
{|\hhat(\lambda_j)|^2}\left\{1-\eta_j I_0 \left(\lambda_j\right)
\right\}\right.\left.+\frac{\dot{h}_i(\lambda_j;\bm{\theta})\dot{h}_k(\lambda_j;\bm{\theta})}
{|\hhat(\lambda_j)|^4}\left(1-2\eta_j I_0 \left(\lambda_j\right) \right)
\right],
\end{alignat}
which is $\O(N).$ Thus we find that
\begin{alignat}{1}
\nonumber
{\mathrm{E}}\left\{\frac{\partial^2
\loglike(\xi,\delta,\bm{\theta},\sigma^2_{\epsilon})}{\partial \theta_i\partial
\theta_k} \right\}&=-\sum_{j=J_1}^{J_2}
\frac{\dot{h}_i(\lambda_j;\bm{\theta})\dot{h}_k(\lambda_j;\bm{\theta})}
{|\hhat(\lambda_j)|^4}=\breve{\mathbf{V}}^{-1}_{N,ik}\\
{\mathrm{E}}\left\{\frac{\partial^2
\loglike(\xi,\delta,\bm{\theta},\sigma^2_{\epsilon})}{\partial^2 \theta_i}
\right\}&=-\sum_{j=J_1}^{J_2} \frac{\dot{h}_i^2(\lambda_j;\bm{\theta})}
{|\hhat(\lambda_j)|^4}=\breve{\mathbf{V}}^{-1}_{N,ii}=\O(N).
\end{alignat}
This then provides the required score equations. Furthermore using regular ML
theory, we have that:
\begin{alignat}{1}
\sqrt{N}\left(\widehat{\bm{\theta}}-\bm{\theta}\right)\overset{\mathcal{
L}}{\longrightarrow} {\cal N}\left(\bm{0},\breve{\mathbf{V}}\right),
\end{alignat}
where $\breve{\mathbf{V}}$ contains the Fisher information, and
$N^{-1}\breve{\mathbf{V}}_{N}\rightarrow \breve{\mathbf{V}}$. This allows us to
fit the more general class of GARMA rather than Gegenbauer models, see
\citet{Gray1989}.

\section{Examples \label{examples}}

\vspace{0.1 in}

\subsection{Analysis of Simulated Data}
\label{SimExamp} For our simulation studies we examine the performance of our
adjusted Whittle likelihood-based estimators in comparison with those derived
from the classic Whittle likelihood.  Data were simulated in the time domain
using the covariance recursion formulae given in \cite{Lapsa1997} for a
seasonally persistent Gegenbauer process with $\pole = 1/7$ corresponding to an
weekly cycle in daily data, and $\delta = 0.3, 0.4$ and $0.45$. We generate
2000 replicate series of lengths $N=1024, 2048, 4096$ and 8192. Tables
\ref{simresults1} to \ref{simresults4} demonstrate the performance of the ML
estimators of $\pole$ and $\delta$, in terms of bias, variance, and the
relative efficiency ($\sigma^2_d/\sigma^2_W$) of our estimators compared with
those derived using the classic Whittle likelihood.  For $N=1024$ the
demodulated estimator significantly improves the bias present in the Whittle
estimators for both $\xi$ and $\delta$.  As $N$ increases and the spacing in
the Fourier grid decreases, both estimators for $\pole$ perform well, however,
the bias in the Whittle estimator for $\delta$ is still present even for
$N=8192$, and becomes more severe as $\delta$ increases.

To illustrate the problems with the Whittle likelihood for smaller $N$ and
large $\delta$, Figure \ref{simsplot} shows the mean conditional likelihoods
evaluated at the ML estimates. The improvement gained by demodulation is
evident, the scale of the improvement will be dependent on the distance of the
pole from the Fourier grid.  The plots also demonstrate the discontinuities in
the likelihood for $\pole$, as discussed in Section \ref{Likelihoodsection}.

\subsection{U.S. Weekly Crude Oil Imports}

\label{CrudeOil}

The first real data set comprises 756 observations of U.S. Weekly Crude Oil
Imports (in millions of barrels per day) from 6th December 1991 to 26th May
2006, downloaded from
\[
{\verb"http://tonto.eia.doe.gov/dnav/pet/hist/wcrimus2w.htm"}.
\]
The data were detrended using a linear trend, and are displayed in Figure
\ref{USPetrolFigure}.  Periodic behaviour is evident in the raw detrended data.

For these data, we fitted a low order Gegenbauer-ARMA (GARMA) model; the
process $\{X_t\}$ is represented as the unique stationary solution of
\[
\phi(B) X_t = \theta(B) G_t
\]
where $B$ is the backshift operator, and polynomial operators $\phi$ and
$\theta$ define an ARMA  process in the usual way, and where $\{G_t\}$ is a
pure Gegenbauer process as defined by the sdf in equation (\ref{giraitis}) with
$h$ the identity function. We consider at most ARMA(1,1) models, so that
$\phi(z) = 1 - \phi z$ and $\theta(z) = 1+\theta z$ where, under the
assumptions of stationarity and invertibility, $|\phi|, |\theta| < 1$.  Using
standard results, the parametric sdf that we consider takes the form
\[
f(\lambda) = \frac{\sigma^2_{\epsilon}}{\left|1-2e^{-2i\pi\lambda}\cos(2\pi
\pole)+e^{-4i\pi\lambda} \right|^{2\delta}} \frac{(1+\theta \cos(2\pi
\lambda)+\theta^2)}{(1-\phi \cos(2\pi \lambda)+\phi^2)}
\]
In our notation, a GARMA(1,1) model has both $\phi$ and $\theta$ non-zero; for
GARMA(1,0), $\theta \equiv 0$, whereas for GARMA(0,1), $\phi\equiv 0$.
GARMA(0,0) corresponds to the Gegenbauer model with no ARMA component.

\vspace{0.05 in}

\noindent \textbf{Results :}  Using numerical methods (the \texttt{optim}
function in \texttt{R}), each model was fitted using our demodulation approach
and also using the standard Whittle likelihood, and the results compared using
BIC.  The results are presented in Table \ref{USPetrolTable}.  The best model
is the overall is the GARMA(0,1) model fitted under demodulation, indicating
that the use of a non-standard Fourier grid can improve the quality of fit,
that is, the fit of the model under the standard derivation (we term this the
standard Whittle model) is inferior.

For the selected model, the parameter estimates and approximate standard errors
are displayed in Table \ref{USPetrolEsts}.

\subsection{Farallon temperature data}

\label{Farallon}

The second real data set is a surface temperature series for the shore station
at the Farallon Islands, California, United States.  Daily temperature data
were obtained from the ftp site
\[
{\verb"ftp://ccsweb1.ucsd.edu/shore/CURRENT_DATA/Temperature/"}
\]
and formed monthly averages for the period 1960-1996; missing daily
quantities were omitted from the monthly averages, whole missing
months (there were six in the period of study) were imputed by
taking averages for that calendar month across the 37 years of
study.  In total there were 444 monthly average observations.

We analyze these data in two ways to compare the Whittle maximum likelihood
estimates with our demodulation approach.  First, we take the 444 data in their
entirety, then we perform a second analysis using only the last 440
observations. As the expected annual periodicity would induce a pole in the
spectrum at frequency 1/12, and 12 divides 444, the pole will lie at a Fourier
frequency when the whole data set is analyzed.  However 12 does not divide 440,
so for the second analysis, the pole will not lie at a Fourier frequency.

\vspace{0.1 in}

\noindent \textbf{Results :}  Each of the low order GARMA models were fitted
and compared using BIC.  The two cases, $N=444$ and $N=440$ were analyzed.  The
results are presented in Table \ref{FarallonTable}, and the raw time series as
well as fitted models are plotted in Figure \ref{FarallonFigure}. The model
with the highest BIC is, in both cases, the GARMA(1,0), but for the two values
of $N$, the different approaches are favoured in the two cases.  For $N=444$,
the classic Whittle approach yields a higher log-likelihood, but for $N=440$
the demodulated model performs better, yielding a higher log-likelihood.
Parameter estimates from the model are presented in Table \ref{USPetrolEsts}
for the two values of $N$.

This data set and analysis illustrates perfectly another of the advantages of
using the demodulated likelihood with the bias-adjustment procedured outlined
in Section \ref{Likelihoodsection} and Theorem \ref{ApproxLike}. In the classic
Whittle likelihood, when a Fourier frequency exactly coincides with the pole,
the on-the-pole likelihood contribution erases the contribution of that
periodogram element.  Note first that the omission of a data point from the
likelihood causes the likelihood to increase (that is, become less negative)
and this explains the higher likelihood value for the classic Whittle
likelihood.  For the Farallon data set, this omission also leads the remaining
periodogram appearing as if it corresponded to a short memory process, hence
the low estimated value of $\delta$ that is essentially no different from zero.
The conclusion of such an analysis would be that the underlying process has a
pure seasonality at the estimated $\pole$, in this case $\pole = 1/12$, and the
seasonally differenced series was essentially a white noise process. However,
seasonal first differencing of the original series leads to a new series that
is not a white noise process; in fact the differenced series appears
over-differenced. Hence, such a model does not provide an adequate explanation
of the data.  When $N$ is changed to 440, inferences using the classic Whittle
method change dramatically.  Note, however, that for the new demodulated
likelihood, parameters estimates are closely comparable across different values
of $N$.

\subsection{Southern Oscillation Index}

\label{SOI} We consider the Southern Oscillation Index (SOI) data analyzed by,
for example, \citet{HuertaWest1999}.  The version of the data we consider has
$N=1668$, the data and fitted spectrum are presented in Figure \ref{SOIFigure}.
For this large sample size, the difference between the two approaches is
minimal; the BIC values are negligibly different, and the estimates and
estimated 95 \% intervals are presented in Table \ref{SOIEsts}. In this case,
the estimates of the pole position obtained from the likelihood approaches are
markedly different from the naive estimate obtained by taking the ordinate
corresponding to the maximum of the periodogram (shown as a dotted line in
Figure \ref{SOIb}).

\section{Implications for Non-Likelihood Approaches}

The results derived in previous sections focus explicitly on likelihood based
procedures.  However, they motivate the use of adjusted versions of currently
existing estimation procedures that improve the performance of those procedures
when applied to seasonally persistent series. Given the special role of the
location of the singularity when formulating the likelihood, we propose a
series of procedures that profit on the simplified distribution that arises by
using the demodulation by the (estimated) pole.

\subsection{Profile Likelihood} \label{PL}

The profile likelihood of $\pole$ is a pseudo-likelihood function given, for
each possible $\pole$, by
\begin{equation}
\label{ProfileLikelihood} \loglike_{N}^{P}\left( \pole\right) =\max_{\delta
,\bm{\theta}|\pole}\loglike_{N}\left( \pole,\delta ,\bm{\theta}\right) .
\end{equation}
$\loglike_{N}^{P}\left( \pole\right)$ may be maximized, yielding a maximum
pseudo-likelihood (MPL) estimate of $\pole$, denoted $\polehatPr$. Then the
values of $\delta$ and $\bm{\theta}$ which maximize the conditional likelihood
given $\pole = \polehatPr$, are computed. Specifically, the ML estimate of
$\delta$ based on the demodulated likelihood for \textbf{all} values of $\pole
\in (0,1/2)$ is computed. Finally, the estimate
$\widehat{\delta}=\widehat{\delta}(\polehatPr)$ based on demodulation at
$\polehatPr$ is obtained.

In many cases the MPL and ML estimators agree closely; in given applications,
the MPL approach may potentially be more readily implemented. Note that some
care must in generality be used when applying profile likelihood estimation,
(see, for example, \cite{Berger}), but given the rate of convergence of the MLE
of $\pole$ such problems are unlikely to arise.

\vspace{0.1 in}
\subsection{A Semi-Parametric Analysis: The Geweke-Porter-Hudak Estimator}

\label{SPA}

The Geweke-Porter-Hudak (GPH) procedure \citep{GPH1983} implements
semiparametric estimation of $\delta$ for the case $\pole=0$ which can be
adapted to incorporate a demodulation procedure and profile marginalization.
The GPH procedure examines the behaviour of the periodogram on the log scale
near frequency zero, and estimates the long memory parameter $\delta$ using
ordinary least squares and a linear regression.  We omit full details for
brevity, but outline a possible adjustment based on a recent formulation given
by \citet{HandS2004}.  It is sufficient to say that the GPH procedure relies on
distributional properties of the periodogram near the presumed pole.

In light of the results of earlier sections of this paper, to obtain an
improved estimate of the $\delta$ using GPH we could take two alternative
approaches. First, we could adjust the GPH to the demodulated setting, taking
the distribution of the periodogram at the singularity fully into account.
Alternatively, we could utilize large sample arguments and consider the score
function. We consider frequencies indexed $j$ whose likelihood contributions
are influenced by the singularity. The log-likelihood then has three
parameters; $\pole$, $\delta$, and
$C=f^{\dagger}\left(\pole;\pole,\delta,\bm{\theta},\sigma^2_{\epsilon}\right)$
which can be treated as a constant, if the $j$ included are chosen judiciously.

%
\citet[p. 58-59]{HandS2004} consider the modified GPH by
introducing the following notation:
\[
g\left(\omega\right) = -\log\left(\left|1-e^{i\omega}\right|\right)
\qquad \bar{g}_m =\frac{1}{m}\sum_{k=1}^m g\left(2\pi \freqk \right)
\qquad s_m^2  =  2\sum_{k=1}^m \left\{g\left(2\pi \freqk \right)-
\bar{g}_m\right\}^2
\]
where $m$ periodogram ordinates on either side of the pole are
included in the regression. They also define (with slightly
different notation) $a_k=s_m^{-2}\{g(2\pi \freqk)-\overline{g}_m\}$,
and define the estimator to be:
\begin{equation}
\label{GPHHS} {\widehat{\delta}}_{GPH}=\sum_{1\le \left|k\right|\le
m} a_k\log\{\Pgram (\freqk+\polehat_{S}) \}.
\end{equation}
Hidalgo and Soulier note that the asymptotic distribution of $\polehat_{S}$ is
not known, and that estimation of the pole is an open problem. In their
simulation studies, 5000 replications of series length 256, 512 and 1024 are
used, with $\pole=1/4$, and thus there is grid alignment with the pole. They
implement the GPH procedure, assuming $\polehat_{S}$ is correct on the
demodulated periodogram, excluding the contribution from the pole itself.
Notice also that they chose $m=N/4$, $m=N/8$ and $m=N/16$, i.e $m=\O(N)$.


Having found the distribution of the periodogram at the pole in Lemma
\ref{PeriodogramPole}, we can adjust the GPH estimator using a similar profile
likelihood approach.  Assume that $\pole$ is known, and consider $\lambda_k$
such that
\[
\frac{\Pgram(\lambda_k) \left| \lambda_k \right |^{2\delta}}{f^{\dagger
}(\lambda_k)} \sim \chi _{2}^{2}.
\]
As $\xi$ is known we are on the grid, and $B_{\lambda_k,N}
(\pole,\delta)=1+\O(k^{-1})$. Thus we may ignore the contributions of the
$\O(k^{-1})$ term, and omit this from the procedure as the terms sum to a
negligible contribution. Based on these values of $k$, least squares is then
used to estimate $\delta $. This requires knowledge of $\pole$; note that
Hidalgo and Soulier estimate $\pole$ as the Fourier frequency at which the
periodogram is maximized, and therefore are restricted to an $\O(N^{-1})$ grid.
In contrast, for any $\pole$, we demodulate the periodogram by $\pole$ giving
\[
\log\{
\Pgram(\lambda_k(\pole))\}-2\delta \log(| \lambda_k(\pole)
|)-\log(C)\sim\log(\chi_{2}^{2}),
\]
where $\lambda_k$ is calculated for the specified $\pole$, not necessarily on
the Fourier grid; in practice it is straightforward to use a finer grid over
which to do a systematic search.  More generally we may allow $\pole$
continuously across the interval $(0,1/2)$, and to use numerical routines, and
choose $\polehat$ to minimize the residual sum of squares after a least squares
fit. Not that we can approximate the distribution of the log periodogram
accordingly \textbf{only} if we demodulate, as otherwise the distribution is
shifted in location by a constant depending on $\pole$. In this case the
correlation between the periodogram at frequencies spaced $N^{-1}$ apart is
non-negligible, thus necessitating usage of weighted least squares.

\section{Discussion}

This paper has illustrated the inherent problems with seasonally persistent
processes and approximation based on the periodogram. We have demonstrated that
realigning the grid of frequencies at which the periodogram is evaluated will
simplify the distributional properties and enables us to specify a useful
approximation to a likelihood function.  Analysis of seasonal persistence will
usually be based on frequency domain descriptions. For the usual Fourier grid,
the distributional properties of the periodogram are generally not useful for
SPPs, if given by previously derived theory \citet{OMS2004}. This paper shows
how a small technical adjustment to the DFT to the DDFT alters the
distributional properties substantially, making analytic investigation of the
properties of the MLEs possible. The theoretical and practical utility of this
adjustment is apparently under-appreciated in the literature. Potentially, even
for short memory models (with bounded but highly peaked spectra) for moderate
values of $N$, there will be an advantage in demodulation.

In this paper, attempts have been made to fill the gaps of current theory. To
avoid the problems associated with the location of the singularity,
\citet{Giraitis2001} constrained the maximization of the Whittle likelihood to
a set of frequencies spaced $\O(N^{-1})$ apart, where the likelihood performs
well under the assumption that the true value of location of the singularity is
constrained to this set. Their important result states that
$\polehat-\polestar=\O_{p}\left( N^{-1} \right)$.  In fact, this is ensured
(informally) by picking a Fourier frequency a distance $C/N $ from the
singularity, hence not even necessarily the closest Fourier frequency. In
contrast, we have studied the sensitivity of the likelihood of the periodogram
to $\O(N^{-1})$ perturbations in $\pole$, and found that the estimate of
$\delta$ for large finite sample sizes is very sensitive to such variation,
thus clarifying that despite the very rapid convergence of $\polehat$ to
$\polestar$ the potential misalignment of the Fourier grid with the unknown
$\polestar$ must be acknowledged. We also derive the large sample form of the
distribution of $\widehat{\delta}$ and $\widehat{\pole},$ where the latter when
re-normalised appropriately has a scaled Cauchy distribution.

Our results relate to frequency domain based analysis at some grid of
frequencies. For processes with absolutely convergent autocovariance sequences
for large samples, no gain is made by a particular choice of Fourier domain
gridding, however for processes with seasonal persistence it is of fundamental
importance to chose the correct grid alignment, even in large samples, as this
simplifies the distributional results substantively.

Simulated examples show the superiority of our approach in finite sample
situations.  Furthermore, the methodology has the philosophical advantage of
acknowledging the estimation of $\pole$. While other methods do well
asymptotically for estimation of the long memory parameter, it is worth noting
that for any fixed (maybe large) sample-size, improvements can usually be found
by explicitly considering the estimation of $\pole$ separately. The profile
likelihood methods can be simply employed in the extended GPH estimator
discussed by \citet{HandS2004}, extending the ideas to semi-parametric models,
and facilitating a tractable analysis.

\bibliography{newbib2}
\bibliographystyle{chicagorssb}

\newpage

\renewcommand{\arraystretch}{1.3}

\begin{table}
\centering \caption{\label{simresults1}Demodulated ML estimates of $\pole$}
\begin{tabular}{|c|c|c|c|c|c|} \hline
N & $\delta$ & bias ($\times 10^{-5}$)
& sd($\times 10^{-3}$) & $\slfrac{\sigma_d^2}{\sigma_W^2}$ & 95\% interval \\ \hline
1024 & 0.30 & 0.8125 & 1.0218 & 0.7793 &(0.1406, 0.1453) \\
1024 & 0.40 & 1.2402 & 0.5829 & 0.8262 &(0.1416, 0.1442) \\
1024 & 0.45 & 0.6699 & 0.3574 & 0.6424 &(0.1421, 0.1435) \\
\hline
2048 & 0.30 &-4.4170 & 0.5355 & 0.7461 &(0.1414, 0.1439)\\
2048 & 0.40 & 0.1699 & 0.3011 & 0.6511 &(0.1422, 0.1435) \\
2048 & 0.45 & 0.7178 & 0.1958 & 0.5286 &(0.1425, 0.1433) \\
\hline
4096 & 0.30 & 1.1035 & 0.2770 & 0.9237 &(0.1422, 0.1436) \\
4096 & 0.40 & 0.4150 & 0.1459 & 0.9518 &(0.1426, 0.1432) \\
4096 & 0.45 &-0.5254 & 0.0941 & 1.0443 &(0.1426, 0.1431)\\
\hline
8192 & 0.30 & 0.9830 & 0.2031 & 0.9516 &(0.1424, 0.1433) \\
8192 & 0.40 & 0.2851 & 0.1450 & 0.9986 &(0.1426, 0.1432) \\
8192 & 0.45 & 0.1254 & 0.0722 & 1.0283 &(0.1426, 0.1431)\\
\hline
\end{tabular}
\end{table}

\begin{table}
\centering \caption{\label{simresults2}Whittle estimates of $\pole$}
\begin{tabular}{|c|c|r|r|c|c|} \hline
N & $\delta$ & bias($\times 10^{-5}$)
& sd($\times 10^{-3}$) & 95\% interval \\ \hline
1024 & 0.30 &  -2.3159 & 1.1575 &  (0.1406,0.1455) \\
1024 & 0.40 &   -9.8354 & 0.6413 & (0.1416,0.1445) \\
1024 & 0.45 &  -17.3549 & 0.4460 & (0.1416,0.1436) \\
\hline
2048 & 0.30 &   -1.7787 & 0.6196 & (0.1411,0.1440) \\
2048 & 0.40 &   3.5435 & 0.3731 & (0.1421,0.1436) \\
2048 & 0.45 &   7.6451 & 0.2694 & (0.1426,0.1436) \\
\hline
4096 & 0.30 &    0.2720 & 0.2882 & (0.1423,0.1436) \\
4096 & 0.40 &   -1.7299 & 0.1495 & (0.1426,0.1433) \\
4096 & 0.45 &   -3.4389 & 0.0920 & (0.1426,0.1431) \\
\hline
8192 & 0.30 &   -0.1842 & 0.2082 & (0.1424,0.1433) \\
8192 & 0.40 &   -0.2633 & 0.1451 & (0.1426,0.1432) \\
8192 & 0.45 &   -1.1890 &  0.0712 & (0.1426,0.1431) \\
\hline
\end{tabular}
\end{table}
\begin{table}
\caption{\label{simresults3}Demodulated ML estimates for $\delta$} \centering
\begin{tabular}{|c|c|c|r|r|c|c|} \hline
N & $\delta$ & mean & bias ($\times 10^{-4}$)
& sd ($\times 10^{-2}$) & $\slfrac{\sigma_d^2}{\sigma_W^2}$ & 95\% interval \\ \hline
1024 & 0.30 &  0.2990 & -9.6364 & 1.9200 & 0.8662 &(0.2642,0.3370)\\
1024 & 0.40 &  0.3995 & -4.9333 & 1.8672 & 0.7150&(0.3600,0.4334) \\
1024 & 0.45 &  0.4492 & -8.4182 & 1.6481 &0.5222 &(0.4136,0.4773)\\
\hline
2048 & 0.30 &  0.2998 & -1.7333 & 1.3862 &  0.8898&(0.2709,0.3267)\\
2048 & 0.40 &  0.3999 & -0.9515 & 1.3914 & 0.6902&(0.3715,0.4249)\\
2048 & 0.45 &  0.4499 & -0.8000 & 1.2573 & 0.4348&(0.4246,0.4736) \\
\hline
4096 & 0.30 &  0.3001 & 1.3333 & 0.9528 & 0.9523&(0.2818,0.3200) \\
4096 & 0.40 &  0.4003 & 2.9515 & 0.9475 &  0.8964&(0.3812,0.4188) \\
4096 & 0.45 &  0.4505 & 5.0612 & 0.9173 & 0.7909 &(0.4316,0.4663) \\
\hline
8192 & 0.30 &  0.3001 & 1.0872 & 0.5028 & 1.0068&(0.2912,0.3124) \\
8192 & 0.40 &  0.3997 & -2.5643 & 0.4655 & 0.9099 &(0.3906,0.4088) \\
8192 & 0.45 &  0.4497 & -3.0083 & 0.4056 &0.9313  &(0.4414,0.4576) \\
\hline
\end{tabular}
\end{table}
\begin{table}
\caption{ \label{simresults4}Whittle estimates for $\delta$} \centering
\begin{tabular}{|c|c|c|r|r|c|} \hline
N & $\delta$ & mean & bias ($\times 10^{-4}$)  & sd($\times 10^{-2}$) & 95\% interval \\ \hline
1024 & 0.30 &  0.3004 & 4.0909  & 2.0630 & (0.2606,0.3436) \\
1024 & 0.40 &  0.4076 & 76.1111 & 2.2083 & (0.3636,0.4505) \\
1024 & 0.45 &  0.4677 & 176.915 & 2.2807 & (0.4209,0.4997)\\
\hline
2048 & 0.30 &  0.3015 & 14.9899 & 1.4696 & (0.2717,0.3293) \\
2048 & 0.40 &  0.4077 & 76.8081 & 1.6748 & (0.3737,0.4414) \\
2048 & 0.45 &  0.4680 & 180.012 & 1.9068 & (0.4324,0.4997)\\
\hline
4096 & 0.30 &  0.3004 &  4.0204 & 0.9764 & (0.2816,0.3204) \\
4096 & 0.40 &  0.4018 & 17.5306 & 1.0008 & (0.3816,0.4205)\\
4096 & 0.45 &  0.4537 & 36.8571 & 1.0285 & (0.4337,0.4745)\\
\hline
8192 & 0.30 &  0.3003 & 3.3474 & 0.5011 & (0.2874,0.3133) \\
8192 & 0.40 &  0.4009 & 9.3895 & 0.4880 & (0.3900,0.4131)\\
8192 & 0.45 &  0.4522 &21.9531 & 0.4203 & (0.4401,0.4651)\\  \hline
\end{tabular}
\end{table}
\renewcommand{\arraystretch}{1}

\clearpage

\begin{table}
\centering \caption{\label{USPetrolTable}BIC values for U.S. Petroleum
Data}\vspace{0.05 in}
\begin{tabular}{|l|l|c|}
\hline
Method & Model & BIC\\
\hline
Demodulated & GARMA(0,0) & 71.246\\
& GARMA(1,0) & 32.211\\
& GARMA(0,1) & 27.153\\
& GARMA(1,1) & 30.921\\
\hline
Standard Whittle & GARMA(0,0) & 73.330\\
& GARMA(1,0) & 42.794\\
& GARMA(0,1) & 38.905\\
& GARMA(1,1) & 42.021\\
\hline
\end{tabular}
\end{table}

\begin{table}
\centering \caption{\label{USPetrolEsts}U.S. Petroleum data - GARMA(0,1)
parameter estimates}\vspace{0.05 in}
\begin{tabular}{|l|cccc|}
\hline
& & & & \\
& $\polehat \times 10^{-2} $ & $\deltahat$ & $\thetahat$ & $\widehat{\sigma}^2_{\epsilon}$ \\
\hline
Estimate          & 1.918         & 0.295 & -0.517 & 0.372\\
Approx 95 \% CI   & (1.762,1.956) & (0.221,0.384) & (-0.675,-0.360) & (0.337,0.412)\\
\hline
\end{tabular}
\end{table}

\begin{table}
\centering \caption{\label{FarallonTable}BIC values for Farallon
data.}\vspace{0.05 in}
\begin{tabular}{|l|l|c|c|}
\hline
&   & $N=444$ & $N=440$\\
Method & Model & BIC & BIC\\
\hline
Demodulated & GARMA(0,0) & 274.918 & 272.710\\
& GARMA(1,0) & 257.080& 254.612\\
& GARMA(0,1) & 266.528& 262.589\\
& GARMA(1,1) & 262.474& 259.754\\
\hline
Standard & GARMA(0,0) & 278.567&281.290\\
& GARMA(1,0) & 243.009&260.035\\
& GARMA(0,1) & 265.769&274.484\\
& GARMA(1,1) & 246.742&265.129\\
\hline
\end{tabular}
\end{table}

\begin{table}
\centering \caption{\label{FarallonEsts}Farallon data - GARMA(1,0) parameter
estimates.}\vspace{0.05 in}
\begin{tabular}{|lll|cccc|}
\hline
& & & & & &\\
& & & $\polehat \times 10^{-2} $ & $\deltahat$ & $\phihat$ & $\widehat{\sigma}^2_{\epsilon}$ \\
\hline
Demodulated & $N=444$ & Estimate &  8.358 & 0.221 & 0.628 & 0.431\\
            &         & 95 \% CI & (8.206,8.438) & (0.157,0.314) & (0.558,0.726) & (0.266,0.562)\\[6pt]
            & $N=440$ & Estimate &  8.295 & 0.234 & 0.644 & 0.401\\
            &         & 95 \% CI & (8.290,8.391) & (0.156,0.311) & (0.558,0.728) & (0.252,0.556)\\
\hline
Standard    & $N=440$ & Estimate &  8.409 & 0.156 & 0.629 & 0.520 \\
            &         & 95 \% CI & (8.222,8.497) & (0.133,0.305) & (0.562,0.736) & (0.286,0.594)\\
\hline
\end{tabular}
\end{table}

\begin{table}
\centering \caption{\label{SOIEsts}SOI data - GARMA(0,0) parameter
estimates}\vspace{0.05 in}
\begin{tabular}{|ll|ccc|}
\hline
& & & & \\
& & $\polehat \times 10^{-2} $ & $\deltahat$ & $\widehat{\sigma}^2_{\epsilon}$ \\
\hline
Demodulated & Estimate          & 2.366         & 0.237 & 0.782\\
            & Approx 95 \% CI   & (1.452,2.399) & (0.215,0.254) & (0.728,0.833)\\
\hline
Standard    & Estimate          & 2.247         & 0.235 & 0.778\\
            & Approx 95 \% CI   & (1.402,2.381) & (0.215,0.255) & (0.730,0.833)\\
\hline
\end{tabular}
\end{table}

\newpage

\begin{figure}
  \centering
    \subfigure[$\delta=0.40$]{\label{fig:poly-a}\includegraphics[scale=1.5]{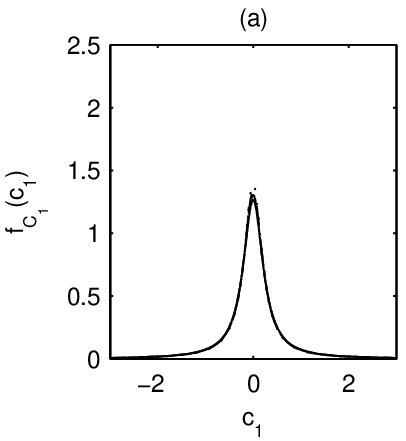}}
    \subfigure[$\delta=0.45$]{\label{fig:poly-b}\includegraphics[scale=1.5]{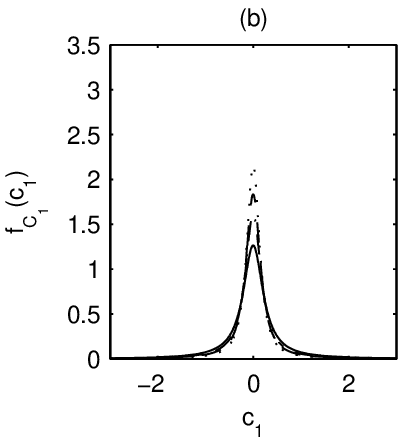}}
  \caption{\label{poly}Simulated Data:
The finite $N$ approximation to the distribution of $N(\widehat{\pole}-\polestar)$ for
(a) $\delta=0.4$ and (b) $\delta=0.45$. The dotted and
dash-dotted curves give the proposed finite large sample approximation for
different values of $N$ whilst the solid line gives the Cauchy asymptotic form. It is clear from the plot that for
large values of $\delta$ the distribution is quite slow converge to the Cauchy.}
\end{figure}

\begin{figure}
\centering
\subfigure[Raw Data]{\label{USPeta} {\rotatebox{-90}{\includegraphics[scale=0.4]{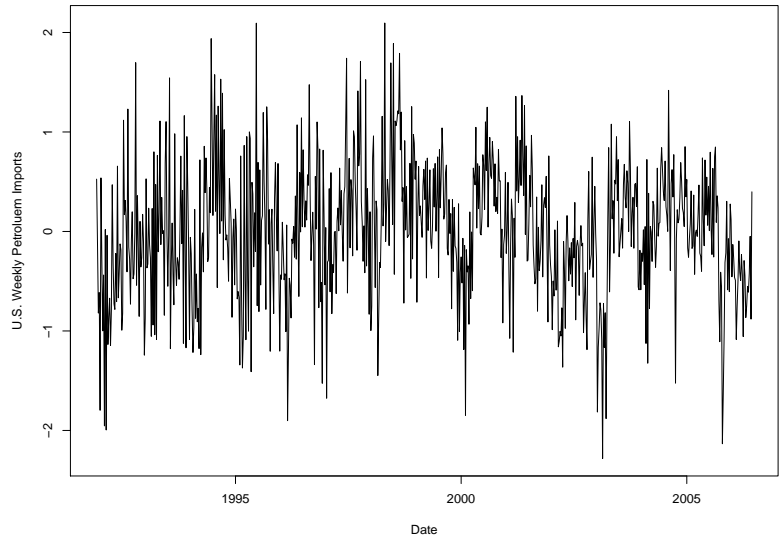}}}}\\
\subfigure[Fit of two GARMA models]{\label{USPetb} {\rotatebox{-90}{\includegraphics[scale=0.4]{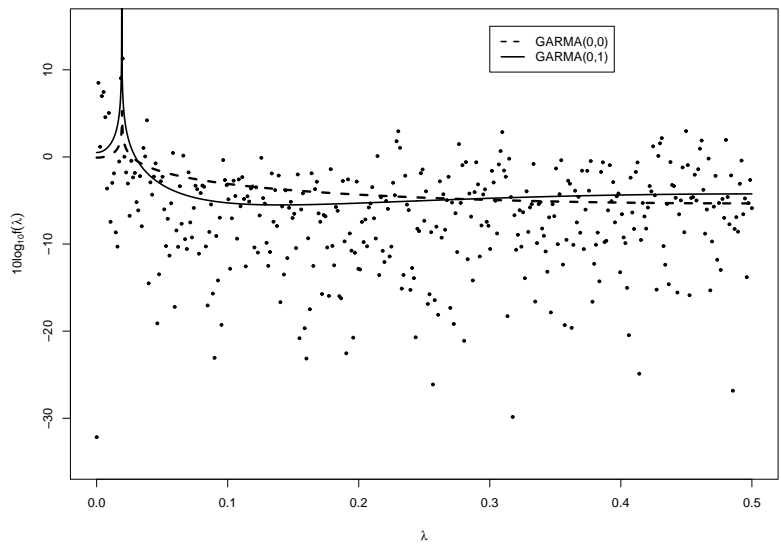}}}}\\
\caption{U.S. Petroleum Data: Raw data and spectral fits of GARMA(0,0) and GARMA(0,1) models.  The GARMA(0,1) model
yields a lower BIC value.}
\label{USPetrolFigure}
\end{figure}

\begin{figure}
\centering
\subfigure[Raw Data]{\label{Farallona} {\rotatebox{-90}{\includegraphics[scale=0.4]{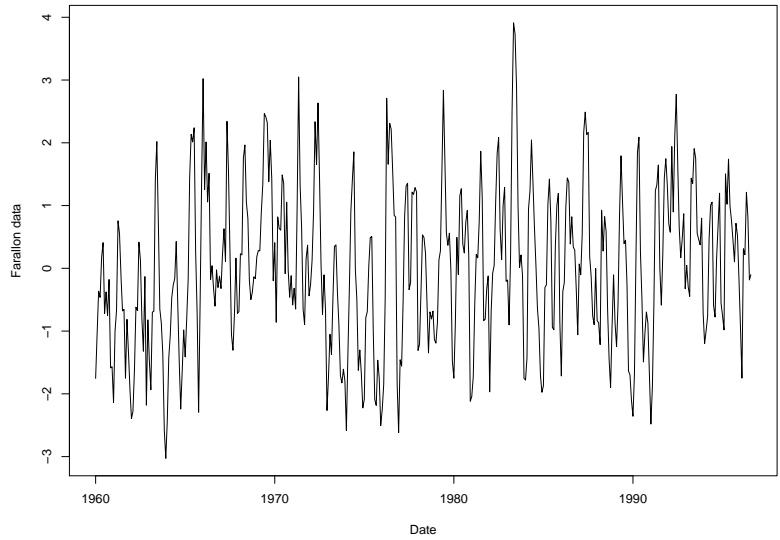}}}}\\
\subfigure[Fit of two GARMA models]{\label{Farallonb} {\rotatebox{-90}{\includegraphics[scale=0.4]{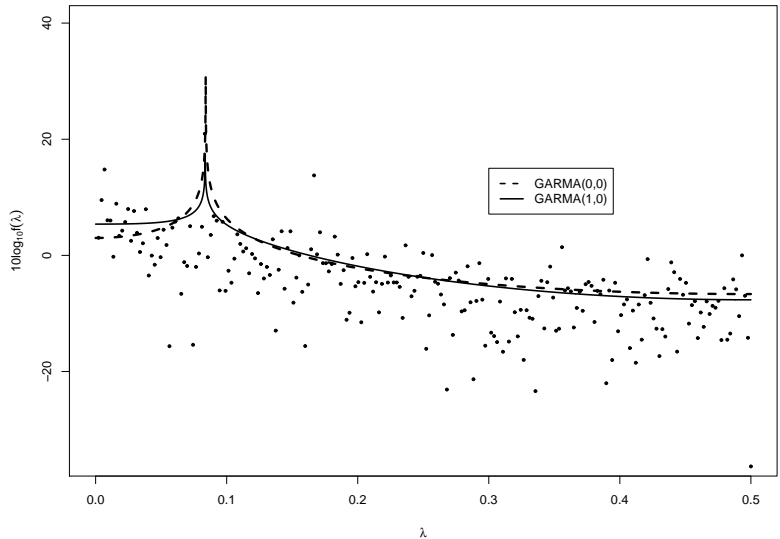}}}}\\
\caption{Farallon data: Raw data and spectral fits of of GARMA(0,0) and GARMA(1,0) models to the data
set with $N=440$ observations.  The fit of the models using the Whittle likelihood are similar, but
inferior in BIC terms.}
\label{FarallonFigure}
\end{figure}

\begin{figure}
\centering
\subfigure[Raw Data]{\label{SOIa} {\rotatebox{-90}{\includegraphics[scale=0.4]{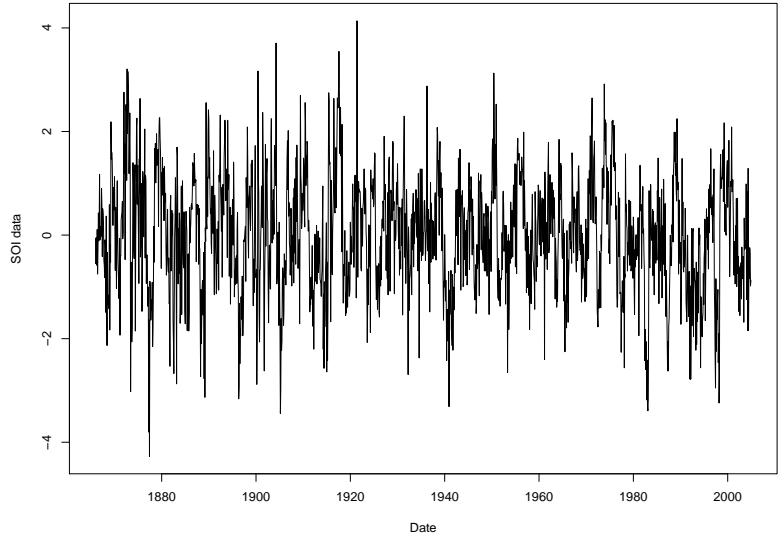}}}}\\
\subfigure[Fit of GARMA models under standard and demodulated approaches]{\label{SOIb} {\rotatebox{-90}{\includegraphics[scale=0.4]{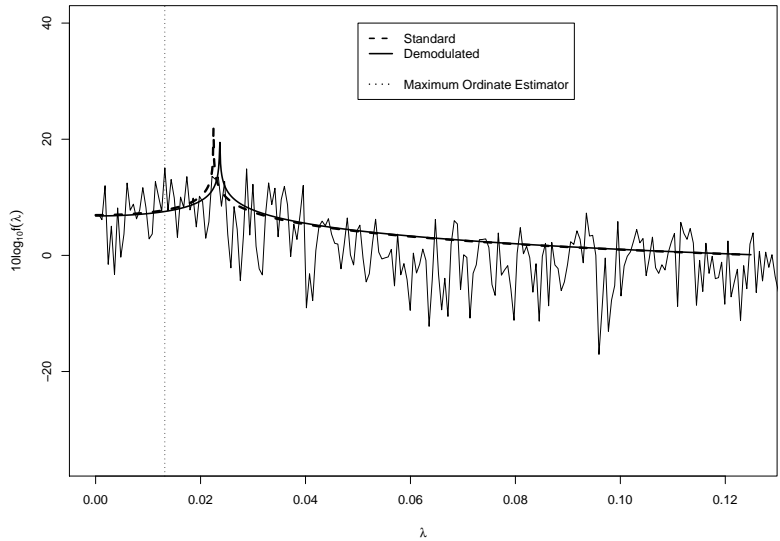}}}}\\
\caption{Southern Oscillation Index data: Raw data and spectral fits of GARMA(0,0) model under standard Whittle and demodulation.  For comparison, the
estimator that takes the maximum periodogram ordinate as the estimate is also displayed.}
\label{SOIFigure}
\end{figure}

\clearpage

\newpage

\setcounter{section}{0} \setcounter{equation}{0}
\renewcommand{\thesection}{{\Alph{section}}}
\renewcommand{\thesubsection}{\Alph{section}.\arabic{subsection}}
\renewcommand{\thesubsubsection}{\Alph{section}.\arabic{subsection}.\arabic{subsubsection}}
\renewcommand{\theequation}{\Alph{section}-\arabic{equation}}
\section{Appendix: {\em Proofs}}

{ \small
\subsection{Expectation of the Periodogram at the Pole}
\label{PeriodogramAppendix} Starting with the same method of calculation as in
\citet[p. 623]{OMS2004} we find a large $N$ approximation to the expected value
of the periodogram at $\pole$, after demodulation via $\pole$.  We have
\begin{eqnarray}\nonumber
{\mathrm{E}}\left\{\frac{\Pgram(\pole)}{N^{2\delta}f^{\dagger}(\pole)}\right\}&=& \frac{1}{
N^{2\delta}f^{\dagger}(\pole)}\int_{\pole-\frac{1}{\surd{N}}}^{\pole+\frac{1}{\surd{N}}
} \frac{f^{\dagger}(\lambda)}{|\lambda-\pole|^{2\delta}}
\frac{\sin^2\{\pi N(\lambda-\pole)\}}{N
\sin^2\{\pi (\lambda-\pole)\} }\;d\lambda+o(1) \\\nonumber
\nonumber\\
&=& \frac{1}{f^{\dagger}(\pole)}\int_{-\surd{N}}^{\surd{N}} \frac{
f^{\dagger}(\pole+u/N)}{|u|^{2\delta}} \frac{%
\sin^2(\pi u)}{ N \sin^2(\pi u/N)
}\;\frac{du}{N} \nonumber+o(1)\\\nonumber
\\
&=& \nonumber\frac{1}{\pi^2}\int_{-\infty}^{\infty}
 \frac{\sin^2(\pi u)}{%
|u|^{2\delta+2}}\;du+\o(1) =
\frac{2}{\pi^2}\int_{0}^{\infty} \frac{
\sin^2(\pi u)}{u^{2\delta+2}}\;du +\o(1)\\
\nonumber\\
&=& -\Gamma(-1-2\delta)\cos\{\pi(
1/2+\delta)\}2^{2\delta+1}\pi^{2\delta-1} + \o(1)
 = \Polebias + \o(1) \label{refsing}
\end{eqnarray}
This result implicitly defines $\Polebias$.  The final line follows from
\citet[\S 3.823]{Gradshteyn}. From equation (\ref{giraitis}),
\[f^{\dagger}(\pole)= \frac{\sigma_{\epsilon}^2
\left|h\left(\pole;\bm{\theta}\right)\right|^2}{\left\{4\pi\left|
\sin (2\pi \pole )\right|\right\}^{2\delta}}=f_0^{\dagger},
\]
as $\lambda \rightarrow \pole$. Thus after demodulation, the expectation at the
singularity is given by equation (\ref{refsing}):
\[
{\mathrm{E}}\left\{I_0 (\pole )\right\}\bumpeq-N^{2\delta}\frac{2\Gamma\left(-1-2%
\delta\right) \cos\left\{\pi\left(\frac{1}{2}+\delta\right)\right\}
\sigma_{\epsilon}^2\left|h\left(\pole;\bm{\theta}\right)\right|^2}{\pi\left\{2\left|
\sin (2\pi \pole )\right|\right\}^{2\delta}}+\o\left(N^{2\delta} \right).
\]

\subsection{Bounding the Covariance Contributions}
Under Gaussianity of the original time series, the DDFT will also be jointly
proper complex Gaussian, thus we only need only to approximate for large $N$
the first and second order joint properties of these variables; the zeroth
order properties are given in Appendix \ref{PeriodogramAppendix}, in
conjunction with the results in \cite{OMS2004}.

We consider the discontinuities of the likelihood of the DDFT coefficients
explicitly, and also the effects of ignoring the weak correlation between the
Fourier coefficients near the pole (see Robinson (1995)). It is easier to deal
with the demodulated sequence only, and so we shall only evaluate the frequency
domain quantities at frequencies $\freqdj$ from (\ref{lambdagrid}). Let
$\Pgrm_j=\Pgram(\freqdj),$ and take $A_j=A_{0}(\freqdj)$ and
$B_j=B_{0}(\freqdj).$  As we only consider demodulation by $\lambda_D$ we in
this section suppress the subscript $D$. We note that with $i'=i/2$ for $i$
even and $i'=(i-1)/2$ for $i$ odd, and similarly for $j$, then
\begin{eqnarray*}
(\bm{\Sigma}_{\bm{C}_{\lambda}})_{i,j} &=&\left\{\begin{array}{ll}
\frac{1}{2}B_{\lambda_{i'},N}(\xi,\delta)f(\lambda_{i'}) &i=j\\[6pt]
0 & (i-j)\mod 2=1\\[6pt]
V_{i',j',N}\left(\frac{j_{0,N}(\xi)}{N},\delta\right)\sqrt{f(\lambda_{i'})f(\lambda_{j'})} +
\o(1)\sqrt{f(\lambda_{i'})f(\lambda_{j'})}
 &(i-j)\mod 2=0,\;i\neq j\\[6pt]
\end{array} \right.
\end{eqnarray*}
Let $\bm{D}_{\lambda}={\mathrm{diag}}\left(
\sqrt{\frac{1}{2}B_{\lambda_{J_1},N}(\xi,\delta)f(\lambda_{J_1})} \;\dots \;
\sqrt{\frac{1}{2}B_{\lambda_{J_2},N}(\xi,\delta)f(\lambda_{J_2})} \right)$, and
let $ \tilde{\bm{\Sigma}}_{\bm{C}_{\lambda}}=\bm{D}_{\lambda}^{-1}
\bm{\Sigma}_{\bm{C}_{\lambda}}\bm{D}_{{\lambda}}^{-1}. \nonumber $ Twice the
log-likelihood based on the sample $\bm{C}_{\lambda}$ takes the form:
\begin{eqnarray}
\nonumber
2\ell_N^{(f)}(\xi,\delta,\bm{\theta},\sigma^2_{\epsilon})
&=&-N\log(2\pi)-\log|\bm{\Sigma}_{\bm{C}_{\lambda}}|-\bm{C}_{\lambda}^\transpose
\bm{\Sigma}_{\bm{C}_{\lambda}}^{-1}\bm{C}_{\lambda}+o(N)\\
&=&-N\log(2\pi)-\log|\bm{D}_{\lambda}^2|-\bm{C}_{\lambda}^\transpose
\bm{D}_{\lambda}^{-2}\bm{C}_{\lambda}
+R(\bm{\theta},\pole,\delta,\sigma^2_{\epsilon})\nonumber+o(N)\\
&=&2\ell(\xi,\delta,\bm{\theta},\sigma^2_{\epsilon})+R(\bm{\theta},\pole,\delta,\sigma^2_{\epsilon})
+\o(N),
\label{newthing1}
\end{eqnarray}
where
\begin{equation*}
R(\bm{\theta},\pole,\delta,\sigma^2_{\epsilon})=-\log|\tilde{\bm{\Sigma}}_{\bm{C}_{\lambda}}|-
\bm{C}_{\lambda}^\transpose\bm{D}_{\lambda}^{-1}(
\tilde{\bm{\Sigma}}_{\bm{C}_{\lambda}}^{-1}-\bm{I}_2)\bm{D}_{\lambda}^{-1}
\bm{C}_{\lambda}.
\end{equation*}
Note that $2\ell(\xi,\delta,\bm{\theta},\sigma^2_{\epsilon})=\O(N).$ Also, $
\log|\tilde{\bm{\Sigma}}_{\bm{C}_{\lambda}}|=\log\{\O(1)\}. $ The latter
statement holds as the magnitude of this object can be bounded by considering
the trace of the matrix $\tilde{\bm{\Sigma}}_{\bm{C}_{\lambda}},$ and the fact
that for $\log(N)<k<j$ the covariance terms can be bounded by $k^{-1} \log(j)$
({\em cf} \cite{Robinson1995a}). If, for $\log(N)<k<j$, we consider the terms
in the log-likelihood involving $A_j A_k$ and $B_j B_k$, then these are
$\O\{k^{-2} \log^2(j)\}$. The higher order terms are obtained by inverting the
covariance matrix, and the second term coming directly from the order of the
contributions. We write ${\mathrm{E}}\left\{A_j
A_k\right\}={\mathrm{E}}\left\{B_j B_k\right\}= T_{jk} k^{-2} \log^2(j)$, for
$T_{jk}=\O(1)$ and let $\overline{T}=\max_j \max_k |T_{jk}|$.

When summing the covariance terms we need to split up the terms indexed by
negative and positive $j$ into two sum. Consider one of the two sums, and sum
the contributions over indices $\log(N)<k<j<J=\O(N)$, denoting the sum $R_2$.
To formally derive this for contributions to the left and right of the pole, we
can use twice this term, and the order of the contributions are the most
important result. Then we note that using Minkowski inequality arguments:
\begin{eqnarray*}
\frac{1}{N}|R_2|&\le&\frac{1}{N}\sum_{j=\log(N)}^{J}\sum_{k=\log(N)}^{j}\overline{T} \frac{\log^2(j)}{k^2}
=\frac{\overline{T}}{N^2}\sum_{j=\log(N)}^{J}  \frac{\log^2(j)}{N}\sum_{k=\log(N)}^{j} \frac{N^2}{k^2}\\
&=&\frac{1}{N}\int_{\log(N)/N}^{J/N} \left\{\log^2(x)+2\log(x)\log(N)+\log^2(N)\right\}
\left\{\frac{N}{\log(N)}-\frac{1}{x} \right\}\;dx+\o(1)\\
&=&\frac{1}{\log(N)}\left[ x\{\log^2x-2\log(x)+1\}+2x\{\log (x)-1\}\log(N)+
\log^2(N)x
\right]_{\log(N)/N}^{J/N}\\
&&-\left[\frac{\log^3(x)}{3}+\log(N)\log^2(x)+\log(x)\log^2(N) \right]_{\log(N)/N}^{J/N} = \o(1).
\end{eqnarray*}
Note that $A_j B_k$ is for any choice of $j$ and $k$, $\o(1)$. Thus as
$(2\ell_N^{(f)}(\xi,\delta,\bm{\theta},\sigma^2_{\epsilon}))/N$ is $\O(1)$ we
can ignore the covariance contributions. Asymptotically, using the likelihood
from equation (\ref{approxlikeeqn}) yields equivalent results to using the
likelihood constructed from independent exponential random variables with
non-equal variances, due to the weak correlation between the Fourier
coefficients.

\subsection{Additional Notation}
Define $\chival=\left(\pole,\delta\right)^\transpose,$ and denote the true
values of the parameters by $\chistar.$ We suppress the dependence on other
parameters, i.e. the dependence on $\bm{\theta}$ and $\sigma^2_{\epsilon}$.
Consider first expansions of the log-likelihood defined by equation
(\ref{approxlikeeqn}), $\ell\left(\chival \right)$. Let
\[ \lddot= \left(\begin{array}{cc}
\ell_{\pole,\pole}\left(\chival\right) &
\ell_{\delta,\pole}\left(\chival\right)
\\
\ell_{\delta,\pole}\left(\chival\right) &
\ell_{\delta,\delta}\left(\chival\right)
\end{array}
\right)=-\obsfisherN(\chival),\quad
{\mathrm{E}}\left\{\obsfisherN(\chival)\right\}= \fisherN(\chival).
\]
denote the matrix of second partial derivatives. Furthermore, it is convenient
to introduce additional random variables, required to study the properties of
the score and the observed Fisher information. We denote by $\Idot_j$ and
$\ddotPgramj$ the quantities $\dot{I}_0(\freqdj)$ and $\ddot{I}_0(\freqdj)$
respectively, and by $\Istd_j,$ $\Istddot_j$ and $\Istdddot_j$ the standardized
periodogram, derivative of the periodogram wrt the $\pole$ and the second
derivative of the periodogram wrt to the $\pole,$ all evaluated on the shifted
grid.  Then
{\small
\[
\Istd_j= \left\{ \begin{array}{c} \dfrac{I_0 (
\freqdj)}{ f\left(\lamj\right)},  \\[9pt]
\dfrac{I_0(\lambda_{0})}{\Polebias N^{2\delta }f^{\dagger}_0},
\end{array} \right.\;\; \Istddot_j= \left\{ \begin{array}{c} \dfrac{\dot{I}_0
(\lamj)}{N f\left(\freqdj\right)}, \\[9pt]
\dfrac{\dot{I}_0(\lambda_{0})}{\Polebias N^{2\delta+1 }f^{\dagger}_0},
\end{array} \right.\;\; \Istdddot_j= \left\{ \begin{array}{cl} \dfrac{\ddot{I}_0
(\freqdj)}{N^2 f\left(\lamj\right)} & j \neq 0, \\[10pt]
\dfrac{\ddot{I}_0(\lambda_{0})}{\Polebias N^{2\delta+2 }f^{\dagger}_0} & j = 0.
\end{array} \right.
\]}
These quantities can be written in terms of the real and imaginary part of the
DDFT and its derivatives, and so we define for $j=J_1,\dots,J_2:$
\begin{equation}
\label{trig}
\begin{array}{cc}
A_j=\dfrac{1}{\sqrt{N}}\sum_t X_t \cos\left\{2\pi (\pole+j/N)t\right\}
&
B_j=\dfrac{1}{\sqrt{N}}\sum_t X_t \sin\left\{2\pi (\pole+j/N)t\right\}\\
C_j=\dfrac{1}{\sqrt{N}}\sum_t t X_t \cos\left\{2\pi
(\pole+j/N)t\right\} &
D_j=\dfrac{1}{\sqrt{N}}\sum_t t X_t \sin\left\{2\pi (\pole+j/N)t\right\} \\
E_j=\dfrac{1}{\sqrt{N}}\sum_t t^2 X_t \cos\left\{2\pi
(\pole+j/N)t\right\}&
F_j=\dfrac{1}{\sqrt{N}}\sum_t t^2 X_t \sin\left\{2\pi (\pole+j/N)t\right\}\\
G_j=\dfrac{1}{\sqrt{N}}\sum_t t^3 X_t \cos\left\{2\pi
(\pole+j/N)t\right\}& H_j=\dfrac{1}{\sqrt{N}}\sum_t t^3 X_t
\sin\left\{2\pi (\pole+j/N)t\right\},
\end{array}
\end{equation}
for $j = J_1, \ldots, J_2$, where the sum over $t$ ranges over $t=0,\dots,N-1$.
Also, let
\[\begin{array}{cccc}
\standA_j=\dfrac{A_j}{\sqrt{f\left(\lamj\right)}} &
\standB_j=\dfrac{B_j}{\sqrt{f\left(\lamj\right)}} &
\standC_j=\dfrac{C_j}{N\sqrt{f\left(\lamj\right)}} &
\standD_j=\dfrac{D_j}{N\sqrt{f\left(\lamj\right)}}\\
\\
\standE_j=\dfrac{E_j}{N^{2}\sqrt{f\left(\lamj\right)}} &
\standF_j=\dfrac{F_j}{N^{2}\sqrt{f\left(\lamj\right)}} &
\standG_j=\dfrac{G_j}{N^3\sqrt{f\left(\lamj\right)}} &
\standH_j=\dfrac{H_j}{N^3\sqrt{f\left(\lamj\right)}}
\end{array}.
\]
be the corresponding suitably standardized quantities. We shall also derive
expressions for the expectation of $\Istd_j,$ $\Istddot_j$ and $\Istdddot_j$
and these will be denoted $B_{\lambda_j,N},$ $\dot{B}_{\lambda_{j},N},$ and
$\ddot{B}_{\lambda_j,N},$ respectively. Their variances take quite complicated
forms, and we denote the theoretical constants that give their forms for
$\Istddot_j$ and $\Istdddot_j$ via $\dot{C}_{\lambda_j,N},$
$\ddot{C}_{\lambda_j,N}^{(1)}$ and $\ddot{C}_{\lambda_j,N}^{(2)},$ where the
first of these terms is a rough approximation to the variance of $\Istddot_j.$
More details follow later in the text when appropriate. Furthermore, the
covariances of the $j$th and $k$th DDFT coefficients and their derivatives, are
denoted by $V_{\lambda_j,\lambda_k,N}$, $\dot{V}_{\lambda_j,\lambda_k,N}$ and
$W_{\lambda_j,\lambda_k,N}$ respectively.

\subsection{Zeroth Order Properties}
To acknowledge the dependence of the likelihood on the indices
$J_1=-j_{0,N}(\pole)+1,$ and $J_2=M-1-j_{0,N}(\pole)$, and the fact that these
indices depend on $\pole$, we thus in this section write explicitly
$\ell\left(\chival,J_1,J_2\right)$. Note that $J_1<0.$ For any finite value of
$N$ this dependence introduces a discontinuity in the log-likelihood in the
form of a jump when the demodulation makes the range to the left decrease by
one, and the range on the right increase by one, or vice-versa. This fact is
inconvenient for our calculations, as it makes the log-likelihood discontinuous
and hence not differentiable. However, it transpires that the magnitude of the
discontinuities are of an order that can be ignored for large sample sizes, as
will be shown by the first proposition, so that subsequent calculations will be
in terms of $\ell(\chival),$ where $J_1$ and $J_2$ are treated as {\em fixed}
with respect to $\pole$ and of order $\O(N).$

\begin{proposition} \label{zerothord} Consider the log-likelihood at
$\pole=\pole^{\prime}+\Delta,$ and assume that $\pole^{\prime} \neq 0,1/2.$ Without
loss of generality, assume that $J_2(\pole^{\prime}+\Delta )=J_2(\pole^{\prime})+1,$
so that $J_1(\pole^{\prime}+\Delta)=J_1(\pole^{\prime})+1$.
Let
\[
\Lambda_N = \ell\left(\pole^{\prime}+\Delta,\delta,J_1+1,J_2+1\right)-
\ell\left(\pole^{\prime}+\Delta,\delta,J_1,J_2\right)
\]
be the magnitude of the discontinuity introduced by perturbing
$\pole^{\prime}.$ Then
\begin{eqnarray*}
{\mathrm{E}}\left[ \Lambda_N \right]=\O(1) \qquad {\mathrm{var}}\left[\Lambda_N
\right]=\O(1)
\end{eqnarray*}
and for every $\epsilon > 0$
\[
P\left(N^{-1}\left| \Lambda_N \right|\ge
\epsilon\right)\rightarrow 0 \quad \textrm{ as }N \rightarrow \infty.
\]

\end{proposition}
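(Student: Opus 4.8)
The plan is to exploit the fact that the two log-likelihoods in $\Lambda_N$ are built from the \emph{same} summand evaluated at the \emph{same} fixed value $\pole=\pole'+\Delta$, differing only in their range of summation. First I would write out both sums from (\ref{approxlikeeqn}): the sum over $j=J_1,\dots,J_2$ and the sum over $j=J_1+1,\dots,J_2+1$ share the common block $j=J_1+1,\dots,J_2$, which cancels term-by-term. Hence $\Lambda_N$ collapses to a two-term expression,
\[
\Lambda_N = \bigl\{\log\eta_{J_2+1}-\eta_{J_2+1}\Pgram(\pole+(J_2+1)/N)\bigr\} - \bigl\{\log\eta_{J_1}-\eta_{J_1}\Pgram(\pole+J_1/N)\bigr\},
\]
namely the single ordinate gained at the right endpoint minus the single ordinate dropped at the left endpoint. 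Note that the on-the-pole index $j=0$ lies strictly inside the common block (since $J_1<0<J_2$), so it cancels and the generic formula (\ref{Theorem1Etaj}) for $\eta_j$ applies at both surviving endpoints.

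Next I would bound the two ingredients of each surviving term. For the deterministic part, observe from (\ref{Theorem1Etaj}) together with $f(\lambda_j)=f^{\dagger}(\lambda_j)|j/N|^{-2\delta}$ that $\eta_j = 1/f(\lambda_j)$, so $\log\eta_j = -\log f(\lambda_j)$. At the endpoints the distances $|J_1|/N$ and $|J_2+1|/N$ are bounded away from zero — they are asymptotically $\pole$ and $1/2-\pole$ respectively — so, since $f^{\dagger}(\cdot)$ is bounded above and below, both $f(\lambda_{J_1})$ and $f(\lambda_{J_2+1})$ are bounded away from $0$ and $\infty$; hence $\log\eta_{J_1}$ and $\log\eta_{J_2+1}$ are $\O(1)$. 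For the stochastic part, $\eta_j\Pgram(\lambda_j)=\Pgram(\lambda_j)/f(\lambda_j)$ is precisely the standardized periodogram at a frequency a \emph{constant} distance from $\pole$, i.e.\ $c_N(\pole,\lambda_j)=\O(N)$. Here the large-sample theory of \citet{OMS2004} applies in its ordinary away-from-pole regime: by (\ref{deffy}) the relative bias tends to $1$, so the mean is $\O(1)$, and the stated second-order structure gives an $\O(1)$ variance (the ordinate being asymptotically $\chi^2_2/2$). The main obstacle is precisely this step — one must confirm that at $\O(1)$ distance from the singularity the demodulated ordinates at $J_1$ and $J_2+1$ genuinely inherit the bounded standardized mean and variance of a well-separated periodogram value, rather than the inflated $\O(N^{2\delta})$ behaviour seen at the pole in Lemma \ref{PeriodogramPole}.

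Combining these bounds, $\Lambda_N$ is a fixed finite combination of $\O(1)$ constants and random variables with $\O(1)$ mean and variance, whence $\mathrm{E}[\Lambda_N]=\O(1)$ and $\mathrm{var}[\Lambda_N]=\O(1)$, which are the first two claims. For the final claim I would apply Chebyshev's inequality: for any $\epsilon>0$,
\[
P\bigl(N^{-1}|\Lambda_N|\ge\epsilon\bigr) = P\bigl(|\Lambda_N|\ge N\epsilon\bigr) \le \frac{\mathrm{E}[\Lambda_N^2]}{N^2\epsilon^2} = \frac{\mathrm{var}[\Lambda_N]+(\mathrm{E}[\Lambda_N])^2}{N^2\epsilon^2} = \O(N^{-2}),
\]
which tends to $0$ as $N\to\infty$, completing the argument.
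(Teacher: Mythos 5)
Your proposal is correct and follows essentially the same route as the paper, whose proof is only a sketch asserting that $\Lambda_N$ has $\O(1)$ mean and variance and then concluding convergence in probability; you supply exactly the omitted details — the term-by-term cancellation of the common block leaving only the two endpoint ordinates, the identification $\eta_j = 1/f(\lambda_j)$ with the endpoint frequencies a fixed distance from the pole so that both the deterministic and standardized stochastic parts are $\O(1)$, and a clean Chebyshev bound in place of the paper's loose appeal to the weak law of large numbers.
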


\begin{proof}
(Sketch) It is straightforward to show that the discontinuities, $\Lambda_N$,
in the likelihood are random quantities with mean and variance that are
$\O(1)$, so after standardization it follows from the weak law of large numbers
that $\Lambda_N \stackrel{P}{\longrightarrow} 0$ and the result follows.  Full
details are omitted.
\end{proof}

This difference between the log-likelihoods at different values of $\xi$ that
induce a change of the grid is $\O(1).$ We can therefore apply arguments such
as those developed by \cite{coursol}, to justify the usage of a form of the
likelihood which ignores the the jump in the indices, when deriving the
properties of using a form of the likelihood that does experience
discontinuities as the value of $\pole$ alters. We may from the above
calculations note that for large samples it is equivalent to use
$\ell\left(\pole+\Delta,\delta,J_1,J_2\right)$ or
$\ell\left(\pole+\Delta,\delta\right)$ in the analysis of the data; see also
detailed discussion by \cite{Dzham1983}. For our weak convergence result, we
standardize the log-likelihood by a factor of $N^{-1}$ as the log-likelihood
terms are both $\O(N).$ The log-likelihoods are constructed from a data-sample
of size $N,$ and so we can ignore any contribution of order $\O(1).$  Then
$\ell\left(\pole,\delta\right)/N$ will be $\O(1)$, and we shall discuss limits
of properties expressed in terms of this standardized quantity. Finally,
informally, whilst any individual term is contributing $\O(1)$ to the
likelihood, on differentiating the log-likelihood, this is no longer true - the
individual contributions to the score in $\pole$ will be $\O(N)$ near the pole,
and $\O(1)$ away from the pole. This effect renders the discontinuities even of
lesser importance. Note that we can establish a large sample approximation to
the distribution of the standardized log likelihood. We approximate the sum by
an integral and as the correlation between the Fourier coefficients is
sufficiently weak we have
\begin{eqnarray}
\nonumber
\frac{1}{N}\ell(\xi,\delta)&\asymp &
-\int_{-\xi^{\star}}^{1/2-\xi^{\star}} \log\left(f(\lambda) \right)\;d\lambda
-\frac{1}{2N}\chi^2_{N}+\o(1).
\label{forstaeq}
\end{eqnarray}
These results for the entire log likelihood at any fixed value of the
parameters agree with standard likelihood theory. We shall see that the
behaviour of the pole is such that subsequently no result for the estimation of
the pole follows as standard likelihood theory would make us anticipate.
However, with a suitable standardization, the properties of the MLEs and the
likelihood are still tractable.

\subsection{Existence and Consistency Proof}
\label{ConsistencyAppendix}

The existence of the ML estimators is guaranteed as it is easy to show that the
log-likelihood is everywhere bounded on the parameter space.  The proof of
consistency proceeds very similarly to \cite{Giraitis2001}, who assume that the
maximisation over $\xi$ is over a grid of frequencies, where is each grid-point
is spaced $\O(N^{-1})$ apart. This is a sensible choice as the estimation is
most often carried out over the Fourier frequency grid via the DFT. Define
\[
\widetilde{f}_{G}\left(\lambda;\delta,\pole\right)=\sigma^{-2}_{\epsilon}
f_{G}(\lambda;\pole,\delta),
\]
and note that this constrains $\log(\widetilde{f}_{G}(\lambda;\pole,\delta))$
to integrate to zero. \cite{Giraitis2001} show strong convergence of the
estimated location of the singularity to the point on the grid closest to the
true value of the pole, $\polestar,$ using the likelihood defined by equation
(\ref{discfourlik}).

The likelihood approximation defined in Theorem \ref{ApproxLike} cannot be
treated identically to the function of $\pole$ defined in (\ref{discfourlik}),
as the Fourier transform in the former likelihood is calculated at a different
set of frequencies whenever a different value of $\pole$ is picked. However to
compare the magnitude of the log-likelihood at $\pole$ and at $\polestar$ we
need to compare likelihood based on {\em different} Fourier grids. This may
seem problematic, but recall that the DDFT is a linear orthogonal transform,
and so both likelihoods may be directly related to the likelihood of the time
domain sample whatever grid is used. It is hence suitable to compare the
magnitude of the likelihood of the DDFT at different grids. To be able to do
this, we introduce some extra notation. Recall the demodulated grid
$\lambda_j(\pole)=\xi+j/N,\;j=J_1\dots ,J_2.$ First, define $j_p=
j_{p,N}(\pole,\polestar) =
\arg\min_{j\in{\mathbb{Z}}}\left|\polestar-\lamj\left(\pole\right)\right|.$
Thus at any value of $N,$ when the true value of the pole is $\polestar,$ but
the likelihood is evaluated at a grid evenly spaced around $\pole:$ $j_p$ is
then the index of the frequency on the grid demodulated by $\pole$ that is
closest to $\polestar$. Thus $\left|j_p-N\pole\right|\le 1/2$, and we define
$j_p$ uniquely by taking the least of possible values is the pole is evenly
spaced between two demodulated Fourier frequencies. Similarly define $\kap =
\lambda_0(\pole)=\pole$ and $\kapj = \lambda_{j_p} (\pole)= \pole+j_p/N$. Thus
$\kap$ is the demodulated Fourier frequency corresponding to $\pole$ whilst
$\kapj$ is the demodulated Fourier frequency closest to $\polestar.$ Note that
using the triangle inequality
\begin{equation}
N\left|\pole-\polestar\right| \le
N\left|\pole-\kapj\right|+N\left|\kapj-\polestar\right| \leq N\left|\pole-\kapj\right|+1/2.
\end{equation}
This allows us to consider the properties of the log-likelihood at the same
grid explicitly, as $P\left(N\left|\pole-\kappa_{j_p,N}(\pole) \right|\ge
K\right)\ge P\left(N\left|\pole-\polestar \right|\ge K+1/2\right)$. If we
establish the result for $N\left|\pole-\kapj\right|,$ we can redefine $K$ to
derive the same result for $N\left|\pole-\polestar\right|.$ In the vein of
\cite{Giraitis2001}, to show consistency, we fix $\epsilon$ and consider
choosing $K$ such that
\begin{equation}
\label{eq1} P\left(N |\widehat{ \delta  }- \deltastar |^2 \ge
K\right)+P\left(N |\polehat-\polestar |\ge (K+1)\right)\le \epsilon
\end{equation}
Let
\[
u_{N}\left(\chival \right)
=N\left| \delta - \deltastar \right|^2
+\left|N\pole-j_{p,N}(\pole,\polestar)\right|.
\]
We may obtain a bound
for (\ref{eq1}), $2P\left(u_N\left(\chival \right)\ge K\right)$, by considering
\begin{equation}
\label{eq1b} P\left(N  | \deltahat-\deltastar |^2 \ge
K\right)+P\left(N |\polehat- \kappa_{j_p,N}(\widehat{\pole},\polestar) |\ge K\right)\le \epsilon.
\end{equation}
Define $\Omega\left(K\right),$ a subset of the parameter space $(\pole,\delta),$
defined for each fixed constant $K,$ by
\[\Omega\left(K\right)=
\left\{\chival:\;\pole\in\left(0,1/2\right),\;\delta\in\left(0,1/2\right),\;
u_N\left(\chival \right)\ge K \right\}.\] Let
$\tildechistar=(\kapj,\deltastar)$. Analogous to Giraitis {\em et al.}, we
bound (\ref{eq1b}) by
\begin{equation}
\label{bd1}
P\left(\inf_{\chival\in\Omega(K)}\left[\frac{1}{N}
\left\{\ell(\chistar)-\ell(\chival)\right\}
\right]\le 0\right)=
P\left(\inf_{\chival\in\Omega(K)}\left[\frac{1}{N}
\left\{\ell(\chistar)-\ell(\chival)\right\}
\right]/u_N(\chival)\le 0\right)
\end{equation}
Note that the constant $\Polebias$ (see equation (\ref{refsing})) does not explicitly depend on
$N$ or $\pole$ (although the bias is computed at a fixed $\pole$). Also denote the Kronecker-delta
by $\delta_{ij}$ as usual. Consider first
\begin{eqnarray*}
\ell(\tildechistar)-\ell(\chival) &\overset{(1)}{=}&U_N+T_N-1+\frac{\widetilde{f}_{G}\left(\kap
;\deltastar,\kapj\right)}{\widetilde{f}_{G}\left(\kap;\delta,\kap\right)}-\frac{\Pgram(
\kap )}{\sigma_{\epsilon}^2 \widetilde{f}_{G}\left(\kap;\delta,\kap\right)}+
\frac{\Pgram(\kapj )}{\sigma_{\epsilon}^2
\widetilde{f}_{G}\left(\kapj;\delta,\kapj\right)}\\
&&-\frac{\Pgram( \kapj ) }{ B_{\polestar,N}(\polestar,\deltastar )N^{2\deltastar
}f^{\dagger }(\kapj,\deltastar,\kapj)} +\frac{\Pgram( \kap
)}{ B_{\pole,N}(\pole,\delta)N^{2\delta }f^{\dagger }(\kap,\delta,\kap)}\\
\\
&=&U_N+T_N-1+\delta_{j_p,0} -\left\{ \frac{\Pgram( \kap)}{\sigma_{\epsilon}^2
\widetilde{f}_{G}\left(\kap;\delta,\kap\right)}- \frac{\Pgram( \kap)}{
B_{\pole,N}(\pole,\delta)N^{2\delta }f^{\dagger }
(\kap,\delta,\kap)}\right\}\\
\\
&&-\left\{ \frac{\Pgram( \kapj)}{ B_{\polestar,N}(\polestar,\deltastar
)N^{2\deltastar }f^{\dagger }(\kapj,\deltastar ,\kapj)}-
\frac{\Pgram( \kapj)}{\sigma_{\epsilon}^2 \widetilde{f}_{G}\left(\kapj
;\delta,\kap\right)}\right\}\\
\\
&=&U_N+T_N+\delta_{j_p,0}-1-\Pgram( \kap)W_1-\Pgram(
\kapj)(W_2-W_3),
\end{eqnarray*}
where
\[
W_1 = -\frac{1}{ \Polebias N^{2\delta }f^{\dagger
}(\kap,\delta,\kap)} \quad W_2 = \frac{1}{ \Polestarbias
N^{2\deltastar }f^{\dagger }(\kapj,\deltastar,\kapj)} \quad
W_3 = - \frac{1}{\sigma_{\epsilon}^2 \widetilde{f}_{G}\left(\kapj ;\delta,\kap\right)}
\]
where in (1) we have defined $U_N$ and $T_N$ as in \cite{Giraitis2001}. We can
bound the probability in (\ref{bd1}), in a similar fashion:
\begin{eqnarray*}
&&P\left( \sup_{\chival\in\Omega(K)}\left|u_N^{-1}U_N\right|+
\sup_{\chival\in\Omega(K)}\left|u_N^{-1}\left\{1+\Pgram(
\kapj)W_3\right\}
\right|\right.\\
&&\qquad \qquad \qquad\left.+ \frac{1}{K}+
\sup_{\chival\in\Omega(K)}\left|u_N^{-1} \Pgram( \kap)W_1\right|+
\sup_{\chival\in\Omega(K)}\left|u_N^{-1} \Pgram( \kapj)W_2\right|\ge
\inf_{\chival\in\Omega(K)}\left|u_N^{-1}T_N\right| \right)
\end{eqnarray*}
Most terms are the same as in \cite{Giraitis2001}, and
bound in an identical fashion, apart from $\left|u_N^{-1}\Pgram(
\kap)W_1\right|$ and $\left|u_N^{-1}\Pgram( \kapj)W_2\right|.$
Clearly
\[{\mathrm{E}}\left\{\sup_{\chival\in\Omega(K)}\left|u_N^{-1}
\Pgram( \kap)W_1\right|\right\}=C_2 K^{-1} \qquad \textrm{and} \qquad
{\mathrm{E}}\left\{\sup_{\chival\in\Omega(K)}\left|u_N^{-1}
\Pgram( \kapj)W_2\right|\right\}=C_3 K^{-1}.
\]
Hence the result follows, see Theorem 3.1 in \cite{Giraitis2001}. The proof
follows Giraitis {\em et al.'s} and thus for a fixed grid with even spacing
from $\xi,$ shows that the maximiser in terms of $\xi$ of $\ell(\chival)$
becomes close to the point on the grid closest to $\polestar,$ which by the
properties of the grid has to be at most $1/(2N)$ from $\polestar.$ This
obviously is not the distance between the maximiser and $\polestar$ but can be
used to show convergence in probability. This strategy lets us avoid dealing
with problems in the singularity of the likelihood, as well as the local
periodic ripples.

\subsection{First Order Properties of Derivatives of the Likelihood}
\label{FirstOrder}

\begin{proposition}
\label{firstord} For an SPP with parameters $\chistar$, the expectation of the
score evaluated at the $\chival = \chistar$ is zero, that is
${\mathrm{E}}\left\{\ldot(\chistar) \right\}=\o(N)$.
\end{proposition}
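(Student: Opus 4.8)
The plan is to treat the two components of $\mathrm{E}\{\ldot(\chistar)\}$ in (\ref{scoreinlambda}) separately, differentiating the log-likelihood of Theorem \ref{ApproxLike} through the identity $\partial\eta_j=\eta_j\,\partial\log\eta_j$. Because $\eta_j I_0(\lambda_j)=\Istd_j$ at the true parameters, this recasts the score in terms of the centred variables $1-\Istd_j$, whose expectations are governed by the relative bias $B_{\lambda_j,N}$, together with a single genuinely new object in the $\pole$ direction involving $\Idot_j=\dot{I}_0(\lambda_j)$.

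First I would dispose of the $\delta$ component. Differentiating (\ref{approxlikeeqn}) gives $\ell_\delta(\chival)=\sum_j(\partial_\delta\log\eta_j)(1-\Istd_j)$, so that $\mathrm{E}\{\ell_\delta(\chistar)\}=-\sum_j(\partial_\delta\log\eta_j)(B_{\lambda_j,N}-1)$. Here $\partial_\delta\log\eta_j=2\log(|j|/N)-\partial_\delta\log f_j^\dagger=\O(\log N)$, the deviations $B_{\lambda_j,N}-1$ are $\O(|j|^{-1})$ for ordinates away from the pole by (\ref{deffy}), the finitely many near-pole indices contribute $\O(1)$ each, and the $j=0$ term is controlled by Lemma \ref{PeriodogramPole}, which gives $\mathrm{E}\{\Istd_0\}=1+\o(1)$ after the $\Polebias$ normalisation built into $\eta_0$. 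Summing, $\mathrm{E}\{\ell_\delta(\chistar)\}=\O(\mathrm{polylog}\,N)=\o(N)$.

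The $\pole$ component is where the work lies. Carrying out the derivative, treating $J_1,J_2$ as fixed in the manner justified by Proposition \ref{zerothord}, yields $\ell_\pole(\chival)=\sum_j\{(\partial_\pole\log\eta_j)(1-\Istd_j)-\eta_j\Idot_j\}$, and hence $\mathrm{E}\{\ell_\pole(\chistar)\}=-\sum_j(\partial_\pole\log\eta_j)(B_{\lambda_j,N}-1)-\sum_j\eta_j\,\mathrm{E}\{\Idot_j\}$. The first sum is handled exactly as for $\delta$ and is $\o(N)$. For the second I would compute $\mathrm{E}\{\Idot_j\}=4\pi(\mathrm{E}\{B_jC_j\}-\mathrm{E}\{A_jD_j\})$ from the definitions (\ref{trig}) via the spectral representation of $\gamma_\tau$. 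The decisive structural fact is that demodulation aligns the grid exactly with $\polestar$, so the singular part of $\mathrm{E}\{I_0(\polestar+x)\}$ is even in $x$; its frequency derivative is therefore odd, and since $\eta_j$ is even in $j$ to leading order the dominant contributions of $\eta_j\mathrm{E}\{\Idot_j\}$ cancel in the antisymmetric pairing $j\leftrightarrow -j$.

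The main obstacle is that this pairing leaves a smooth Riemann-sum residual (from the variation of $f_j^\dagger$, the asymmetry of the range $J_1,\dots,J_2$, and the frequency derivative of $B_{\nu,N}$) that is only manifestly $\O(N)$, not yet $\o(N)$. To close the gap to the stated $\o(N)$ I would compare with the exact frequency-domain Gaussian likelihood $\ell^{(f)}_N$ of (\ref{newthing1}): because $\bm{C}_\lambda$ is an orthogonal transform of $\bm{X}$ with constant Jacobian, $\ell^{(f)}_N$ coincides with the time-domain likelihood (\ref{ExactTimeLike}), whose expected score vanishes identically at $\chistar$. Since $\ell$ differs from $\ell^{(f)}_N$ only through the diagonalisation remainder $R$, which the covariance-bounding argument following (\ref{newthing1}) shows to be $\o(N)$, it remains to verify that $\mathrm{E}\{\partial_\pole R(\chistar)\}=\o(N)$ as well. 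Differentiation can in principle inflate the order, so establishing that it does not, by re-running the Minkowski-inequality summation of the $\O(k^{-1}\log j)$ off-diagonal correlations on the differentiated covariance entries, is the hardest and most delicate step.
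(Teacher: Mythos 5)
Your handling of the $\delta$-component coincides with the paper's own proof: $\partial_\delta\log\eta_j=\O(\log N)$ against ${\mathrm{E}}\{1-\Istd_j\}=\O(\log j/j)$, with the $j=0$ ordinate absorbed by the $\Polebias$ normalisation of Lemma \ref{PeriodogramPole}, giving a polylogarithmic total. Your $\pole$-component also begins exactly as the paper does: the same decomposition $\ell_\pole=\sum_j\{S_j^{(1)}(1-\Istd_j)-\eta_j\Idot_j\}$ with $J_1,J_2$ frozen via Proposition \ref{zerothord}, the same reduction to ${\mathrm{E}}\{\Idot_j\}=4\pi\,{\mathrm{E}}\{B_jC_j-A_jD_j\}$, and the same antisymmetry coming from demodulated alignment. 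But the paper finishes right there, by direct computation inside the approximate likelihood: it evaluates the integral in (\ref{valofone}) to get ${\mathrm{E}}\{\Idot_j\}=Nf(\lambda_j)\dot{B}_{\lambda_j,N}(\pole,\delta)+\o(N^{2\delta})$, shows ${\mathrm{E}}\{\Idot_0\}=\o(N^{2\delta})$ because the integrand is odd, establishes $\dot{B}_{\lambda_j,N}=\O(j^{-1})$ by a Robinson-style decomposition of the integral together with $\dot{B}_{\lambda_j,N}=-\dot{B}_{\lambda_{-j},N}$, and then sums. It never invokes the exact DFT likelihood or the vanishing of the exact score.

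The gap in your proposal is the step you yourself defer: that ${\mathrm{E}}\{\partial_\pole R(\chistar)\}=\o(N)$, where $R$ is the diagonalisation remainder of (\ref{newthing1}) (together with the derivative of the other $\o(N)$ remainders there). This is the load-bearing claim of your detour, and you offer no argument beyond "re-running" the Minkowski summation. It is not a routine extension, for exactly the reason you name but do not resolve: in this problem each $\pole$-derivative inflates orders by a factor of $N$, because it introduces the $t$-weighted transforms $C_j,D_j$ of (\ref{trig}), which are $N$ times the size of $A_j,B_j$ (the score itself is $\O(N^{3/2})$, the observed information in $\pole$ has variance $\O(N^5)$). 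Pointwise $\o(N)$ control of $R$ therefore says nothing about $\partial_\pole R$; one would need bounds on the differentiated cross-covariances — quantities of the type $\dot{V}_{\lambda_j,\lambda_k,N}(\pole,\delta)$, which the paper only develops later, in Proposition \ref{covarianceofder} — and a fresh summation with those bounds. Without that, the proof is incomplete where the paper's is finished. As a side remark, your motivating worry is legitimate: the antisymmetric pairing over the asymmetric range leaves a residual $N\sum_{|J_1|<j\le J_2}\O(j^{-1})=\O(N)$, which the paper's final display passes over silently. But the remedy consistent with the paper's development is not the exact-likelihood comparison; it is the observation that everything downstream (Theorem \ref{Score}, Proposition \ref{dist8}) uses only the standardized score $N^{-3/2}\ell_\pole(\chistar)$, whose mean is $\O(N^{-1/2})=\o(1)$ even if the unstandardized residual is of exact order $N$.
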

\begin{proof}
To deal with the statistical properties, we first note the expectation of the
standardized periodogram as given in \cite{OMS2004}, and Lemma 1 in this proof,
so that
\[
{\mathrm{E}}\left\{\Istd_j\right\}=B_{\lambda_j,N}(\pole,\delta)+o(1)=1+
\O \left( \log(j)/j \right)+o(1),
\]
by results derived from \cite{Robinson1995a}. For large $N$, the second order
properties of the score is dominated by the $\Istddot_j$ terms, that are
distributed like quadratic forms of correlated normal random variables. We
start by deriving the expectation of $\Idot_j$ in terms of the trigonometrical
forms defined in equation (\ref{trig}). We find that:
\begin{eqnarray}
\nonumber {\mathrm{E}}\{  \Idot_j \} = 4\pi
{\mathrm{E}}\left\{B_jC_j-A_jD_j \right\} &=& -2\delta
Nf\left(\lamj\right)\int_{-\infty}^{\infty} u^{-1}
\left|\frac{u}{j}\right|^{-2\delta} \frac{\sin^2\left\{\pi
(j-u)\right\}}{
\left\{\pi (j-u)\right\}^2}\;du+\o(N^{2\delta})\\
\nonumber \\
\label{valofone} &=& Nf\left(\lamj\right)
\dot{B}_{\lambda_j,N}(\pole,\delta) +\o(N^{2\delta}),
\end{eqnarray}
this defining $\dot{B}_{\lambda_j,N}(\pole,\delta)$. From this expression it is
obvious that $\dot{B}_{\lambda_j,N}(\pole,\delta) =
-\dot{B}_{\lambda_{-j},N}(\pole,\delta)$. For large $j$ we have that
$\dot{B}_{\lambda_j,N}(\pole,\delta)=\O\{j^{-1}\},$ where to derive this
result, consider the decomposition
\[\int_{-\infty}^{\infty}=
\int_{-\infty}^{-\epsilon}+\int_{-\epsilon}^{\epsilon}+
\int_{\epsilon}^{j-\epsilon}+\int_{j-\epsilon}^{j+\epsilon}
+\int_{j+\epsilon}^{\infty}.
\]
As in \cite{Robinson1995a} we bound the individual contributions of these
integrals. Using identical arguments, we find for $j=0$,
\[
\nonumber {\mathrm{E}}\{ \Idot_0\}=4\pi {\mathrm{E}}\left\{B_0
C_0-A_0 D_0 \right\} = -\left(2\delta\right)N^{1+2\delta} f^{\dagger}
\left(\pole\right)\int_{-\infty}^{\infty} \frac{u^{-1}}{
|u|^{2\delta}} \frac{\sin^2(\pi u)}{
(\pi
u)^2}\;du+\o(N^{2\delta})=\o(N^{2\delta})
\]
as the integral is zero. Thus when $\pole = \pole^{\star}$,
\[
{\mathrm{E}} \left\{ \Istddot_0\right\}=0 \qquad {\mathrm{E}}\left\{
\Istddot_j\right\}=\dot{B}_{\lambda_j,N}(\pole,\delta)+o(1),
\]
where $\dot{B}_{\lambda_j,N}(\pole,\delta)=O\left(j^{-1}\right)$, and
furthermore note that $\dot{B}_{\lambda_j,N}(\pole,\delta) =
-\dot{B}_{-\lambda_j,N}(\pole,\delta)$. Recall the definition of $\eta_j$ from
equation (\ref{Theorem1Etaj}) in Theorem 1; locally $\eta_j=\eta_{-j}$. We also
define for $j=J_1,\dots,J_2$
\[
R_j^{(1)}=\frac{\partial }{\partial \delta}\log(\eta_j)=-2\log\left|\frac{N}{j}\right|-
\left\{\frac{B_{\xi,N;\delta}(\xi,\delta)}
{B_{\xi,N}(\xi,\delta)}\right\}^{\Upsilon{\{j=0\}}}-
\frac{f^{\dagger}_{j,\delta}}{f^{\dagger}_{j}},\quad
S_j^{(1)}=\frac{\partial }{\partial \pole}\log(\eta_j)=-
\frac{f^{\dagger}_{j,\pole}}{f^{\dagger}_{j}}.
\]
We may then write
\begin{eqnarray*} \ell_{\pole}\left(\chival\right) &=&
\sum_{j=J_1}^{J_2} \left[S_j^{(1)}\left\{1
 -\Istd_j\right\} - N\Istddot_j\right]+\o(N),\\
\ell_{\delta}\left(\chival\right) &=&
\sum_{j=J_1}^{J_2} R_j^{(1)}\left\{1
 -\Istd_j\right\} +\o(N)\\
{\mathrm{E}}\left\{\ell_{\pole}\left(\chistar\right\} \right\}&=&
\sum_{j=J_1}^{J_2} \left(S_j^{(1)}\O\left(\log(j)/j\right)
- N\dot{B}_{\lambda_j,N}(\pole,\delta)\right)+\o(N)=\o(N),\\
{\mathrm{E}}\left\{\ell_{\delta}\left(\chistar\right\}\right\}& =&
\sum_{j=J_1}^{J_2} R_j^{(1)}\O\left(\log(j)/j\right) +\o(N)=\o(N).
\end{eqnarray*}
Thus ${\mathrm{E}}\left\{\ell_{\pole}\left( {\chistar}\right\}\right\}$ is
$\o(N)$. This  characterizes the first order properties of the score functions.
\end{proof}

\subsection{Second Order Properties of Derivatives of the Likelihood}
\label{SecondOrder} The following result enables us to determine the properties
of the Fisher information. We shall discover that the observed Fisher
information does not converge to a point mass, and so far from standard theory
ensues.
\begin{proposition}
\label{secondord} For an SPP with parameters $\chival = \chistar$,
the Fisher information evaluated at $\chistar$ is given by
\begin{eqnarray*}
{\mathrm{E}}\left\{\obsfisherN\left(\chistar\right\}\right\}&=&
\begin{pmatrix}
\fisherNe_{\pole,\pole} & \fisherNe_{\pole,\delta}\\
\fisherNe_{\pole,\delta}& \fisherNe_{\delta,\delta}
\end{pmatrix} = \begin{pmatrix}\fisherNlme_{\pole,\pole} N^2+\o(N^2) & \fisherNlme^{(1)}_{\pole,\delta} N +\o(N)\\
\fisherNlme^{(1)}_{\pole,\delta} N +\o(N)  &
\fisherNlme_{\delta,\delta} N+\o(N)
\end{pmatrix}
\end{eqnarray*}
where $\fisherNe_{\pole,\pole}$ is the expected value of the
negative of the second derivative of the log likelihood taken with
respect to $\pole$, and
\[
\fisherNlme_{\pole,\pole} = \lim_{N \rightarrow \infty}
\frac{\fisherNe_{\pole,\pole}}{N^2},
\]
with $\fisherNlme_{\delta,\delta}$ and $\fisherNlme^{(1)}_{\pole,\delta}$
similarly defined.
\end{proposition}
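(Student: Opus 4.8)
The plan is to read off the three entries of $\obsfisherN(\chistar)=-\lddot$ by differentiating the score expressions of Proposition \ref{firstord} a second time and then taking expectations term by term. Since $\eta_j=1/f(\lamj)$ for $j\neq0$, we have $\Istd_j=\eta_j I_0(\pole+j/N)$, and the likelihood (\ref{approxlikeeqn}) reads $\ell=\sum_{j=J_1}^{J_2}(\log\eta_j-\Istd_j)$. Writing $R_j^{(1)}=\partial_\delta\log\eta_j$ and $S_j^{(1)}=\partial_\pole\log\eta_j$ as in Proposition \ref{firstord}, and setting $R_j^{(2)}=\partial_\delta^2\log\eta_j$, $T_j=\partial_\pole\partial_\delta\log\eta_j$, $S_j^{(2)}=\partial_\pole^2\log\eta_j$ (all $\O(1)$ up to the single $j=0$ correction involving $\Polebias$), I would use that the periodogram does not depend on $\delta$, so $\partial_\delta\Istd_j=R_j^{(1)}\Istd_j$, while $\partial_\pole\Istd_j=S_j^{(1)}\Istd_j+N\Istddot_j$ and $\partial_\pole^2\Istd_j=\big(S_j^{(2)}+(S_j^{(1)})^2\big)\Istd_j+2NS_j^{(1)}\Istddot_j+N^2\Istdddot_j$, the last identities holding exactly because the standardized quantities $\Istddot_j=\eta_j\dot I_0/N$ and $\Istdddot_j=\eta_j\ddot I_0/N^2$ already absorb the factor $f(\lamj)$. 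This yields
\begin{align*}
\ell_{\delta\delta} &= \sum_{j=J_1}^{J_2}\left\{R_j^{(2)}(1-\Istd_j)-(R_j^{(1)})^2\Istd_j\right\},\\
\ell_{\pole\delta} &= \sum_{j=J_1}^{J_2}\left\{T_j(1-\Istd_j)-R_j^{(1)}S_j^{(1)}\Istd_j-NR_j^{(1)}\Istddot_j\right\},\\
\ell_{\pole\pole} &= \sum_{j=J_1}^{J_2}\left\{S_j^{(2)}(1-\Istd_j)-(S_j^{(1)})^2\Istd_j-2NS_j^{(1)}\Istddot_j-N^2\Istdddot_j\right\}.
\end{align*}

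Next I would take expectations, substituting ${\mathrm{E}}\{\Istd_j\}=B_{\lambda_j,N}=1+\O(\log j/j)$, ${\mathrm{E}}\{\Istddot_j\}=\dot B_{\lambda_j,N}=\O(1/j)$ with $\dot B_{\lambda_j,N}=-\dot B_{-\lambda_j,N}$, and ${\mathrm{E}}\{\Istdddot_j\}=\ddot B_{\lambda_j,N}$, the first two being exactly the moments established in Proposition \ref{firstord}. For the $(\delta,\delta)$ entry the $R_j^{(2)}(1-\Istd_j)$ term contributes only $\O(\log j/j)$ per index, so the leading behaviour is $-\sum_j(R_j^{(1)})^2$; since $R_j^{(1)}=-2\log(N/|j|)+\O(1)$, the substitution $x=j/N$ and a sum-to-integral passage give $\sum_j(R_j^{(1)})^2=\fisherNlme_{\delta,\delta}N+\o(N)$ with $\fisherNlme_{\delta,\delta}$ a convergent $N$-free integral in $(\polestar,\deltastar)$, hence $\fisherNe_{\delta,\delta}=\fisherNlme_{\delta,\delta}N+\o(N)$. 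The mixed entry is handled the same way: the deterministic piece $\sum_j R_j^{(1)}S_j^{(1)}=\O(N)$ by the identical integral estimate, whereas $N\sum_j R_j^{(1)}\dot B_{\lambda_j,N}$ is controlled by pairing $j$ with $-j$, using the antisymmetry of $\dot B_{\lambda_j,N}$ against the (nearly even) $R_j^{(1)}$, so that the residual is $\O(N)$; this gives $\fisherNe_{\pole,\delta}=\fisherNlme^{(1)}_{\pole,\delta}N+\o(N)$.

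The $(\pole,\pole)$ entry is the crux. Its top-order term is $-N^2\sum_j\Istdddot_j$, whose expectation $-N^2\sum_j\ddot B_{\lambda_j,N}$ must be shown to be \emph{exactly} $\O(N^2)$ with a convergent normalised limit. This requires evaluating ${\mathrm{E}}\{\ddot I_0(\lamj)\}={\mathrm{E}}\{\partial_\pole^2 I_0(\pole+j/N)\}$ through the trigonometric variables $E_j,F_j,G_j,H_j$ of (\ref{trig}), in direct analogy with the integral computation of $\dot B_{\lambda_j,N}$ in Proposition \ref{firstord} but one order higher, including the special $j=0$ contribution where the $\Polebias\,N^{2\delta}$ scaling of Lemma \ref{PeriodogramPole} enters; one then shows $\sum_j\ddot B_{\lambda_j,N}\to\fisherNlme_{\pole,\pole}$ after normalisation, so $\fisherNe_{\pole,\pole}=\fisherNlme_{\pole,\pole}N^2+\o(N^2)$. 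The remaining pieces $\sum_j(S_j^{(1)})^2\Istd_j=\O(N)$ and $2N\sum_j S_j^{(1)}\dot B_{\lambda_j,N}$ (again $\O(N)$ by the antisymmetry cancellation) are absorbed into the $\o(N^2)$ error.

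The main obstacle is precisely this second-order periodogram computation at and near the pole: obtaining $\ddot B_{\lambda_j,N}$ with enough uniformity in $j$ to justify replacing $\sum_j$ by an integral and to confirm both the sharp $N^2$ order and the convergence of $\fisherNlme_{\pole,\pole}$, the $j=0$ term being the delicate case. Secondary technical points are the antisymmetry cancellations in the cross terms and the verification, via Proposition \ref{zerothord}, that treating $J_1,J_2$ as fixed in $\pole$ does not perturb these expectations; the limiting constants $\fisherNlme_{\pole,\pole}$, $\fisherNlme_{\delta,\delta}$ and $\fisherNlme^{(1)}_{\pole,\delta}$ are then identified as the manifestly $N$-free limits of the normalised sums.
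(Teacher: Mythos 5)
Your scaffolding coincides with the paper's own: the same second differentiation of $\ell$ into deterministic coefficients ($R_j^{(1)},R_j^{(2)},S_j^{(1)},S_j^{(2)},T_j$) multiplying the standardized quantities $\Istd_j$, $\Istddot_j$, $\Istdddot_j$, the same term-by-term expectations using $\mathrm{E}\{\Istd_j\}=B_{\lambda_j,N}=1+\O(\log j/j)$ and $\mathrm{E}\{\Istddot_j\}=\dot{B}_{\lambda_j,N}=\O(j^{-1})$, the same $j\leftrightarrow -j$ antisymmetry cancellation in the cross entry (which is genuinely load-bearing: without it that term is $\O(N\log^2 N)$, not $\O(N)$), and the same sum-to-integral identification of the limiting constants, which the paper makes concrete via the expansion $f^{\dagger}(\lambda)=d_0+d_1(\lambda-\pole)+d_2(\lambda-\pole)^2+\cdots$. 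Your differentiation identities are correct, and your bookkeeping for the $(\delta,\delta)$ and $(\pole,\delta)$ entries would go through.

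The genuine gap is the step you yourself label ``the main obstacle'' and then dispose of by analogy: establishing $\mathrm{E}\{\Istdddot_j\}$. That computation \emph{is} the paper's proof of this proposition. Concretely, one must show (i) for $j\neq 0$, $\mathrm{E}\{\ddot{I}_{0j}\}=f(\lambda_j)N^2\ddot{B}_{\lambda_j,N}(\pole,\delta)+\o(N^2)$ with $\ddot{B}_{\lambda_j,N}(\pole,\delta)=2\delta(2\delta+1)j^{-2}\int(\cdots)=\O(j^{-2})$, and (ii) for $j=0$, $\mathrm{E}\{\ddot{I}_{00}\}=\ddot{B}_{\pole,N}(\pole,\delta)B_{\pole,N}(\pole,\delta)f^{\dagger}(\pole)N^{2+2\delta}+\o(N^{2+2\delta})$, so that after the $\Polebias N^{2\delta+2}f^{\dagger}_0$ standardization the pole term contributes $\O(1)$ to $\sum_j\ddot{B}_{\lambda_j,N}$. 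Neither follows formally from ``the $\dot{B}_{\lambda_j,N}$ computation one order higher.'' The second derivative of the periodogram is the quadratic form $8\pi^2\{C_j^2+D_j^2-A_jE_j-B_jF_j\}$ (your $G_j,H_j$ are not needed for this proposition), and its expectation splits into three distinct kernel integrals ($B^{(N)}_{\lambda_j,N}$, $\ddot{B}^{(N)}_{\lambda_j,N}$, $\dot{C}^{(N)}_{\lambda_j,N}$ in the paper's notation), each needing its own change of variables and a Robinson-type splitting and bounding of the integral near the singular point to obtain the stated orders. The $\O(j^{-2})$ decay in (i) is precisely what makes $\sum_j\ddot{B}_{\lambda_j,N}$ converge to an $N$-free constant, hence what yields $\fisherNe_{\pole,\pole}=\fisherNlme_{\pole,\pole}N^2+\o(N^2)$ with a finite, well-defined $\fisherNlme_{\pole,\pole}$ rather than, say, $\O(N^2\log N)$; and (ii) is what prevents the on-pole ordinate from inflating the order. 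Until (i) and (ii) are carried out, the sharp $N^2$ order and the existence of the limit are unproved, so your proposal is a correct plan that mirrors the paper's strategy but does not yet prove the $(\pole,\pole)$ entry.
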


\begin{proof}
Consider the expectation of the second derivative of the
periodogram; we must calculate
\[{\mathrm{E}}\{\ddotPgramj \}=8\pi^2
{\mathrm{E}}\left\{C_j^2+D_j^2-\left(A_j E_j+ B_j F_j\right\}\right\}.
\]
First, consider $U_j=C_j-i D_j,$ and the standardized version
$\standU_j=\standC_j-i\standD_j.$ After some algebra, suitable
standardization, and integrating by parts on $(\pole-1/\sqrt{N},
\pole+1/\sqrt{N})$, with change of variable to $u$ where
$\pole_1=\pole+u/N$, we have
\begin{eqnarray*}
{\mathrm{E}}\left\{\standU_j \standUconj_k\right\}
&=&\frac{1}{4\pi^2}\left(-1\right)^{k-j}
\int_{-\infty}^{\infty}\left|\frac{j k}{u^2}\right|^{\delta}
\psi(j,k,u) du+o(1)
\end{eqnarray*}
where
{\small
\begin{eqnarray*}
&&\psi(j,k,u)  =  \pi^2 \frac{\sin\left\{\pi(u-j)\right\}}
{\left\{\pi(u-j)\right\}} \frac{\sin\left\{\pi(u-k)\right\}}
{\left\{\pi(u-k)\right\}}-2i\pi \frac{\sin\left\{\pi(u-k)\right\}}
{\pi(u-k)}\left[ \frac{\pi\cos\left\{\pi(u-j)\right\}}
{\left\{\pi(u-j)\right\}}\right. \\
&&\left.-\frac{\sin\left\{\pi(u-j)\right\}}
{\pi^2\{\pi(u-j)\}^2}\right]+ \left[ \frac{\pi\cos\left\{\pi(u-j)\right\}}
{\left\{\pi(u-j)\right\}}-\frac{\sin\left\{\pi(u-j\right\}}
{\pi^2\{\pi(u-j)\}^2}\right] \left[
\frac{\pi\cos\left\{\pi(u-k)\right\}}
{\left\{\pi(u-k)\right\}}-\frac{\sin\left\{\pi(u-k\right\}}
{\pi^2\{\pi(u-k)\}^2}\right].
\end{eqnarray*}}
The calculations are very much in the spirit of \cite{OMS2004}. After some
algebra, we have
\begin{eqnarray*}
{\mathrm{E}}\left\{\standU_j \standUconj_k\right\}
 = \frac{1}{8\pi^2}K_{jk}+o(1)
 = \frac{1}{8\pi^2}\left\{2\pi^2 V_{\lambda_j,\lambda_k,N}(\xi,\delta)
 +W_{\lambda_j,\lambda_k,N}(\xi,\delta) \right\}+o(1),
\end{eqnarray*}
where $V_{\lambda_j,\lambda_k,N}(\xi,\delta)$ is defined in Section
\ref{demodsec} and we define
\begin{equation}
\label{defofW}
W_{\lambda_j,\lambda_k,N}(\xi,\delta)=(-1)^{k-j} 2  \int_{-\infty}^{\infty}
\left|\frac{u^2}{jk}\right|^{-\delta}
\tilde{C}_{\lambda_j,\lambda_k}(u)\;du,
\end{equation}
where
\[
\tilde{C}_{\lambda_j,\lambda_k}(u)=\left[\frac{\partial}{\partial u}
\frac{\sin\{\pi(u-j)\}}{\pi(u-j)}\right]\left[\frac{\partial}{\partial u}
\frac{\sin\{\pi(u-k)\}}{\pi(u-k)}\right].
\]
Similarly, after standardization, and some algebra, we obtain
${\mathrm{E}}\left[\standU_j \standU_k\right]=\o(1)$. Hence
\begin{eqnarray*}
{\mathrm{E}}\left\{U_j U_k*\right\}&=&{\mathrm{E}}\left\{C_j C_k\right\}+{\mathrm{E}}\left\{D_j
D_k\right\}+ i\left\{{\mathrm{E}}\left\{C_j D_k\right\}-{\mathrm{E}}\left\{C_k D_j\right\}
\right\}   = N^2 \sqrt{f\left(\lamj\right) f\left(\lamk\right)}
\frac{1}{8\pi^2}
K_{j,k},\\
{\mathrm{E}}\left\{U_j U_k\right\} &=&{\mathrm{E}}\left\{C_j C_k\right\}-{\mathrm{E}}\left\{D_j
D_k\right\}+ i\left\{{\mathrm{E}}\left\{C_j D_k\right\}+{\mathrm{E}}\left\{C_k D_j\right\}
\right\} = \o\left\{N^2 \sqrt{f\left(\lamj\right)
f\left(\lamk\right)}\right\}.
\end{eqnarray*}
Thus, for large $N$,
\begin{eqnarray*}
{\mathrm{E}}\left\{C_j C_k\right\} = {\mathrm{E}}\left\{D_j D_k\right\} &=&
\frac{1}{16\pi^2}N^2 \sqrt{f\left(\lamj\right)
f\left(\lamk\right)} \left(\Re\left(K_{j,k}\right)+o(1)\right)\\
&=&\frac{1}{16\pi^2}N^2 \sqrt{f\left(\lamj\right)
f\left(\lamk\right)} \left\{2\pi^2 V_{\lambda_j,\lambda_k,N}(\pole,\delta)+
W_{\lambda_j,\lambda_k,N}(\pole,\delta)+o(1)\right\}\\
{\mathrm{E}}\left\{C_j D_k\right\} = -{\mathrm{E}}\left\{C_k D_j\right\}
&=&\frac{1}{16\pi^2}N^2 \sqrt{f\left(\lamj\right)
f\left(\lamk\right)} \left\{\Im\left(K_{j,k}\right)+o(1)\right\}
\end{eqnarray*}
and ${\mathrm{E}}\left\{C_j D_j\right\}=\o(N^2)$. Using similar calculations,
we have that
\begin{eqnarray*}
{\mathrm{E}}\left\{-A_j E_j-B_j F_j\right\} =
\frac{1}{8\pi^2}\left\{-2\pi^2B_{\lambda_j,N}^{(N)}(\pole,\delta)+
\ddot{B}_{\lambda_j,N}^{(N)}(\pole,\delta)-\dot{C}_{\lambda_j,N}^{(N)}(\pole,\delta)\right\},
\end{eqnarray*}
where
\begin{eqnarray*}
B_{\lambda_j,N}^{(N)}(\pole,\delta) & = & \frac{(N-1)^2 }{N} \int_{-1/2}^{1/2} f(u)
\frac{\sin^2\left\{N\pi (\pole+j/N-u)\right\}}
{\sin^2\left\{\pi (\pole+j/N-u)\right\}}\;du\\
\ddot{B}_{\lambda_j,N}^{(N)}(\pole,\delta) & = & -\frac{2}{N} \int_{-1/2}^{1/2} \frac{\partial
f(u)}{\partial u} \frac{\sin\left\{N\pi
(\pole+j/N-u)\right\}}{ \sin\left\{\pi
(\pole+j/N-u)\right\}}\frac{\partial}{
\partial u}
\frac{\sin\left\{N\pi (\pole+j/N-u)\right\}}{
\sin\left\{\pi (\pole+j/N-u)\right\}}\;du\\
\dot{C}_{\lambda_j,N}^{(N)}(\pole,\delta)& = & \frac{2}{N} \int_{-1/2}^{1/2} f(u)
\left[\frac{\partial}{
\partial u}
\frac{\sin\left\{N\pi (\pole+j/N-u)\right\}}{ \sin\left\{\pi
(\pole+j/N-u)\right\}}\right]^2\;du.\\
\end{eqnarray*}

\begin{case}[$j\neq 0$] For large $N$ that looking at the
components of this expectation, and standardizing via
$f\left(\lamj\right)N^2$ that
\[
\frac{B_{\lambda_j,N}^{(N)}(\pole,\delta)}{f\left(\lamj\right)N^2} =
\frac{1}{f\left(\lamj\right)N^2} \frac{(N-1)^2}{ N
}\int_{-1/2}^{1/2} f(u) \frac{\sin^2\left\{N\pi
(\pole+j/N-u)\right\}}{
\sin^2\left\{\pi (\pole+j/N-u)\right\}}\;du= B_{\lambda_j,N}(\pole,\delta)
+\o(1)
\]
This follows directly from OMS. Note that (see \cite{Robinson1995a}):
\begin{equation}
B_{\lambda_j,N}(j,\delta)= 1+\O(\log(j)/j)+o(1).
\end{equation}
Consider the change of variables
from $\pole$ to $u$, where $\pole=\pole+u/N$. Then, after some
algebra
\begin{eqnarray*}
\label{part2}
\frac{1}{8\pi^2}\frac{\ddot{B}_{\lambda_j,N}^{(N)}(\pole,\delta)}{f\left(\lamj\right)N^2}&=&
\frac{1}{f\left(\lamj\right)N^2} \frac{1}{2^3 N
\pi^2}\int_{-1/2}^{1/2} \frac{\partial^2 f(u)}{\partial u^2}
\frac{\sin^2\left\{N\pi (\pole+j/N-u)\right\}}{ \sin^2\left\{\pi
(\pole+j/N-u)\right\}} \;du+o(1)
\end{eqnarray*}
and recalling that
\begin{equation*} \label{hu}
\frac{\partial^2f(u)}{\partial
u^2}=\left|\pole-u\right|^{-2\delta}
\left\{\frac{\partial^2 f^{\dagger}}{\partial u^2}-4\frac{\partial f^{\dagger}}{\partial
u}
\delta\left(\pole-u\right)^{-1}+2f^{\dagger}
\delta(2\delta+1)\left(\pole-u\right)^{-2}\right\}
\end{equation*}
we have
\begin{eqnarray}
\nonumber
{\mathrm{E}}\left\{\frac{1}{8\pi^2}\frac{\ddot{B}_{\lambda_j,N}^{(N)}(\pole,\delta)}
{f\left(\lamj\right)N^2}\right\}
&=& \frac{2\delta(2\delta+1)}{2^3 \pi^2}\int_{-\infty}^{\infty}
\frac{1}{u^2} \left|\frac{u}{j}\right|^{-2\delta} \frac{\sin^2\left\{\pi
(j-u)\right\}}{ \left\{\pi (j-u)\right\}^2} \;du+\o(1) =
\frac{\ddot{B}_{\lambda_j,N}(\pole,\delta)}{2^3\pi^2}+\o(1). \label{eqqyyoon}
\end{eqnarray}
Note that the latter integral converges. Note that for $j$ large
\begin{eqnarray}
\nonumber
\ddot{B}_{\lambda_j,N}(\pole,\delta)&=&2\delta(2\delta+1)\frac{1}{j^2}
\int_{-\infty}^{\infty}
\frac{j^2}{(s+j)^2}\left|\frac{s+j}{j}\right|^{-2\delta} \frac{\sin^2(\pi s)}{ (\pi s)^2} \;ds = \O\left(j^{-2}\right),
\end{eqnarray}
which decays \citep[\S 3.821(9)]{Gradshteyn} with increasing $j.$ The
derivation of this result resembles that of
$\dot{B}_{\lambda_j,N}(\pole,\delta)$ but the integration over $s=0$ needs
direct appeal to {\em mutatis mutandis} of the calculations in
\cite{Robinson1995a}, after the term $j^{-2}$ has been taken outside the
integration. Similarly
\begin{eqnarray*}
\label{part3}
\frac{1}{8\pi^2}\frac{\dot{C}_{\lambda_j,N}^{(N)}(\pole,\delta)}{f\left(\lamj\right)N^2} &=&
\int_{-\infty}^{\infty}
\left|\frac{u}{j}\right|^{-2\delta}\frac{\left[\sin\left\{\pi (j-u)\right\}
-\cos\left\{\pi (j-u)\right\}\pi (j-u)\right]^2} {4\left\{\pi
(j-u)\right\}^4} \;du+\o(1) \\
&=& \frac{1}{8\pi^2}\dot{C}_{\lambda_j,N}(\pole,\delta)+\o(1)
\end{eqnarray*}
We note that $\dot{C}_{\lambda_j,N}(\pole,\delta)\equiv
W_{\lambda_j,\lambda_j,N}(\pole,\delta).$ Note that for $j$ large, {\em mutatis
mutandis} results from \cite{Robinson1995a} bounding the Dirichlet kernel,  for
the expectation of the periodogram (up to terms $o(1)$):
\begin{equation}
\dot{C}_{\lambda_j,N}(\pole,\delta)=2\pi \int_{-\infty}^{\infty}
\left|\frac{s+j}{j}\right|^{-2\delta}
\frac{\left\{\sin\left(s\right)
-\cos\left(s\right)s\right\}^2} {s^4} \;ds = \frac{2 \pi^2}{3}+\O\left\{\frac{\log(j)}{j}\right\}
=\frac{2\pi^2}{3}+\O\left\{\frac{\log(j)}{j}\right\},
\label{largesampC}
\end{equation}
which tends to a constant for increasing $j$. We then have that
\begin{eqnarray*}
\frac{1}{f\left(\lamj\right)N^2}{\mathrm{E}}\left\{ \ddotPgramj \right\}
&=& \frac{8\pi^2}{f\left(\lamj\right)N^2}
{\mathrm{E}}\left\{D_j^2-A_j E_j-B_j F_j+C_j^2\right\} = \ddot{B}_{\lambda_j,N}(\pole,\delta)+\o(1).
\end{eqnarray*}
This gives us
\begin{equation}
\label{secondthing} {\mathrm{E}}\left\{ \ddotPgramj\right\}=
f\left(\lamj\right)N^2 \ddot{B}_{\lambda_j,N}(\pole,\delta)+\o(N^2).
\end{equation}
\end{case}

\begin{case}[$j=0$] For large $N$ considering the
components of this expectation, and standardizing via
$f^{\dagger}\left(\pole\right)N^{2\delta+2}$ it follows that
\[
\label{part10}
\frac{B_{\pole,N}^{(N)}(\pole,\delta)}{f^{\dagger}\left(\pole\right)N^{2+2\delta}}=
\frac{1}{f^{\dagger}\left(\pole\right)}\frac{(N-1)^2}{ N }\int_{-1/2}^{1/2} f(u)
\frac{\sin^2\left\{N\pi (\pole-u)\right\}}{\sin^2\left\{\pi (\pole-u)\right\}}\;du
= \Polebias+\o(1).
\]
This follows directly from Appendix A, including the definition of $\Polebias$.
Again, using the change of variables $\pole=\pole+u/N$, and a similar series of
calculations,
\begin{eqnarray*}
\label{part20}
\frac{1}{8\pi^2}\frac{\ddot{B}_{\pole,N}^{(N)}(\pole,\delta)}
{f^{\dagger}\left(\pole\right)N^{2+2\delta}}
&=&-2\delta \frac{1}{4 \pi}\int_{-\infty}^{\infty}
\left|u\right|^{-2\delta} \left(u\right)^{-1}
\left\{-\frac{\sin^2\left(\pi u\right)}{ \left(\pi u\right)^3}+
\frac{\sin\left(\pi u\right)\cos\left(\pi u\right)}{ \left(\pi
u\right)^2}
\right\}\;du+\o(1)\\
&=&\frac{1}{8\pi^2}\ddot{B}_{\pole,N}(\pole,\delta) B_{\pole,N}(\pole,\delta)+o(1).
\end{eqnarray*}
The $B_{\pole,N}(\pole,\delta)$ term has been added to simplify subsequent calculations.
The integral converges (to see this Taylor expansion of the
integrand at $u=0$).  Finally,
\begin{eqnarray*}
\label{part30}
\frac{1}{8\pi^2}\frac{\dot{C}_{\pole,N}^{(N)}(\pole,\delta)}
{f^{\dagger}\left(\pole\right)N^{2+2\delta}}
&=& \int_{-\infty}^{\infty}
\left|u\right|^{-2\delta}\frac{\left\{-\sin\left(\pi u\right)
+\cos\left(\pi u\right)\pi u\right\}^2} {4\left(\pi u\right)^4}
\;du+\o(1) = \frac{1}{8\pi^2}\dot{C}_{\pole,N}(\pole,\delta) + \o(1).
\end{eqnarray*}
The latter integral also clearly converges.  Thus
\begin{eqnarray*}
\frac{1}{f^{\dagger}\left(\pole\right)N^{2+2\delta}}{\mathrm{E}}\left\{\ddotPgramzero \right\} =
\frac{8\pi^2}{f^{\dagger}\left(\pole\right)N^{2+2\delta}}
{\mathrm{E}}\left\{D_0^2-A_0 E_0-B_0 F_0+C_0^2\right\} =
\ddot{B}_{\pole,N}(\pole,\delta) B_{\pole,N}(\pole,\delta)+o(1),
\end{eqnarray*}
which yields
\begin{equation} \label{secondthing0} {\mathrm{E}}\left\{
\ddotPgramzero \right\}=\ddot{B}_{\pole,N}(\pole,\delta) B_{\pole,N}(\pole,\delta)
f^{\dagger}\left(\pole\right)N^{2+2\delta}+o(N^{2+2\delta})
.
\end{equation}
\end{case}
\noindent These results enable us to determine the properties of the Fisher
Information.

\subsubsection{Asymptotic Properties of the Observed Information
Matrix}
We define
\begin{equation}
\nonumber
R_j^{(2)}=\frac{\partial^2}{\partial \delta^2}\left\{
\log(\eta_j)\right\}\quad
\tilde{R}_j^{(2)}=\frac{1}{\eta_j}\frac{\partial^2}{\partial \delta^2}\left(
\eta_j\right)
\end{equation}
so that
\begin{eqnarray*}
-\ell_{\delta,\delta } &=& -\frac{\partial^2 l }{\partial \delta^2 } =
\sum_j\left\{R_j^{(2)}-\tilde{R}_j^{(2)}\eta_j \Pgrm_j\right\}.
\end{eqnarray*}
Additionally with
\begin{eqnarray*}
T_j&=&\frac{\partial}{\partial \delta}\frac{f^{\dagger}_{j,\pole}}
{f^{\dagger}_j}\quad
\tilde{T}_0=\frac{1}{\eta_j}\frac{\partial}{\partial \delta} \left\{
\frac{1}{B_{\pole,N}(\pole,\delta) N^{2\delta}} \frac{f^{\dagger}_{0,\pole}}
{f^{\dagger 2}_0}\right\}
\\
\breve{T}_0&=&-\frac{N}{\eta_j}\frac{\partial}{\partial \delta} \left\{\frac{1}{B_{\pole,N}(\pole,\delta)
N^{2\delta}f^{\dagger}_0}\right\}\quad
\tilde{T}_j=\frac{1}{\eta_j}\frac{\partial}
{\partial \delta} \left(
\frac{1}{\left|N/j\right|^{2\delta}} \frac{f^{\dagger }_{j,\pole}}
{f^{\dagger 2}_j}\right)\\
\breve{T}_j&=&\frac{N}{\eta_j}\frac{\partial}
{\partial \delta} \left(
\frac{1}{\left|N/j\right|^{2\delta}} \frac{1}
{f^{\dagger }_j}\right)
\end{eqnarray*}
then we find that
\begin{eqnarray*}
-\ell_{\pole,\delta } & =& -\frac{\partial^2 l }{\partial \pole\partial
\delta } = \sum_j \left(T_j-\tilde{T}_j\eta_j \Pgrm_j+
\breve{T}_j\frac{\eta_j}{N} \Idot_j
\right).
\end{eqnarray*}
Finally with
\begin{eqnarray*}
S_j^{(2)}&=&\frac{\partial^2}{\partial \pole^2}\log\left(\eta_j\right)
\quad
\tilde{S}_j^{(2)}=\frac{1}{\eta_j}\frac{\partial^2}
{\partial \pole^2} \left(\eta_j\right),\quad
\breve{S}_j^{(2)}=-2\frac{1}{\eta_j}\frac{\partial^2}
{\partial \delta\pole} \left(\eta_j
\right),
\end{eqnarray*}
we have that:
\begin{eqnarray*}
-\ell_{\pole,\pole}& =&  -\frac{\partial^2 l }{\partial \pole^2 } =
\sum_j\left(S^{(2)}_j-\eta_j\tilde{S}_j^{(2)}\Pgrm_j
+N \breve{S}^{(2)}\frac{\eta_j}{N}\Idot_j
+N^2 \eta_j \frac{\ddotPgramj}{N^2}
\right).
\end{eqnarray*}
Then it transpires
\begin{eqnarray*}
\fisherNe_{\pole,\pole}
&=&\sum_j\left[S^{(2)}_j-\tilde{S}_j^{(2)}
+\breve{S}^{(2)}{\mathrm{E}}\{\Istddot_j\}
+N^2 {\mathrm{E}}\{ \Istdddot_j\}
\right].\\
&=& \frac{ f^{\dagger 2}_{0,\pole}}{f^{\dagger 2}_0} +\sum_{j\neq
0}\frac{ f_{j,\pole}^{\dagger  2}} {f^{\dagger 2
}_j}+N^{2}\ddot{B}_{\pole,N}(\pole,\delta)+\sum_{j\neq 0}\left\{-2 N
\dot{B}_{\lambda_j,N}(\pole,\delta)\frac{f^{\dagger}_{j,\pole}}{f^{\dagger}_{j}}
+N^{2}\ddot{B}_{\lambda_j,N}(\pole,\delta) \right\}+o(N).
\end{eqnarray*}
We can then note that for $N$ large this sum will be dominated by:
\begin{eqnarray*}
\fisherNe_{\pole,\pole} &=&N^2\sum_{j=J_1}^{J_2}\ddot{B}_{\lambda_j,N}(\pole,\delta)
+\o(N^2)=\fisherNlme_{\pole,\pole}N^2+\o(N^2).
\end{eqnarray*}
Note that $J_1/N=-\pole+\o(1)$ and $J_2/N=\frac{1}{2}-\pole+\o(1).$
\begin{eqnarray*}
\fisherNe_{\pole,\delta }&=&
\sum_j \left[T_j-\tilde{T}_j+
\breve{T}_j {\mathrm{E}}\{\dot{I}^{(f,N)}_j\}
\right]\\
&=&\frac{f^{\dagger}_{0,\pole}}
{f^{\dagger}_0}\left\{\frac{B_{\pole,N;\delta}(\pole,\delta)}{B_{\pole,N}(\pole,\delta)}+2\log(N)
+\frac{f_{0,\delta}^{\dagger }}{f^{\dagger}_0}\right\} +\sum_{j\neq
{0}} \left\{ \frac{f^{\dagger }_{j,\pole} }{f^{\dagger }_j}-N
\dot{B}_{\lambda_j,N}(\pole,\delta)\right\}
\left(2\log\left|\frac{N}{j}\right|+\frac{f_{j,\delta}^{\dagger}}{f^{\dagger}_j}
\right)+\o(N).
\end{eqnarray*}
Note that for large values of $j$, the two peaks of equation (\ref{valofone})
separate, and the peak around zero is actually a symmetric peak and trough, and
we have that $\dot{B}_{\lambda_j,N}(\pole,\delta)=\O(j^{-1}),$ whilst
$\dot{B}_{\lambda_j,N}(\pole,\delta) =-\dot{B}_{\lambda_{-j},N}(\pole,\delta).$
If $f^{\dagger}(\lambda)$ admits the representation
$f^{\dagger}(\lambda)=d_0(\chival)+d_1(\chival)(\lambda-\pole)+
d_2(\chival)(\lambda-\pole)^2+\O((\lambda-\pole)^3)$. After some algebra
\begin{eqnarray*}
\fisherNe_{\pole,\delta}
&=&2\log(N)\frac{d_{0,\pole}}{d_{0}}+\o\left\{\log(N)\right\} \\
&&+\sum_{j=J_1,j\neq {0}}^{J_2}
\left\{\frac{d_{0,\pole}}{d_{0}}+\left(
\frac{d_{1,\pole}}{d_{0}}-\frac{d_{0,\pole}d_{1}}{d_{0}^2}
\right)\left|\frac{j}{N}\right|+\o\left(\frac{1}{N}\right)\right\}\\
&& \qquad \qquad
\left\{2\log\left|\frac{N}{j}\right|+\frac{d_{0,\delta}}{d_{0}}+\left(
\frac{d_{1,\delta}}{d_{0}}-\frac{d_{0,\delta}d_{1}}{d_{0}^2}
\right)\left|\frac{j}{N}\right|+\o\left(\frac{1}{N}\right) \right\}\\
&=&2\log(N)q_0+\o\left\{\log(N)\right\}
-4N q_0\pole \log\left|\pole\right|
+N
\frac{q_0 d_{0,\delta}}{2 d_{0}} = \fisherNlme_{\pole,\delta }^{(1)} N+\o\{\log N\},
\end{eqnarray*}
where $q_0=d_{0,\pole}/d_{0}$ is a suitable constant. Finally tedious, but
trivial calculations, based on the Taylor expansion of the function
$f^{\dagger}(\pole)$ yield
\begin{eqnarray}
\nonumber
\fisherNe_{\delta,\delta} &=& \frac{
B_{\pole,N;\delta}^2(\pole,\delta)}{B_{\pole,N}^2(\pole,\delta)}+2\frac{
B_{\pole,N;\delta}(\pole,\delta)f^{\dagger }_{0,\delta}}{B_{\pole,N}(\pole,\delta)
f^{\dagger}_0}+\frac{f^{\dagger 2}_{0,\delta}}{f^{\dagger
2}_0}+4\log(N) \frac{B_{\pole,N;\delta}(\pole,\delta)f_{0,\delta}^{\dagger } }{B_{\pole,N}(\pole,\delta) f^{0,\dagger }}+4\log^2(N)\\ \nonumber
\\ \label{onny2}
&&+\sum_{j=J_1,j \neq 0}^{J_2} \left(
 \frac{f^{\dagger  2}_{j,\delta}}{f^{\dagger 2}_j}
+4\log\left|\frac{N}{j}\right| \frac{f^{\dagger
}_{j,\delta}}{f^{\dagger }_j}+
4\log^2\left|\frac{N}{j}\right|\right)+o(N)\\ \nonumber
\\ \nonumber
&=&\nonumber
\frac{d_{0,\delta}^2}{2d_{0}^2}N + 2N \left[
-4\left\{\left(\log\left|\pole\right|-1\right)\pole\right\}
\frac{d_{0,\delta}^2}{d_{0}^2}+
4\left\{\pole\left(\log^2\left|\pole\right|
- 2\log\left|\pole\right|+2\right)\right\}\right]+\o(N)
\\ \nonumber
&=& \fisherNlme_{\delta,\delta} N+\o(N).
\label{onny1}
\end{eqnarray}
This proves the large sample properties of the Fisher information matrix.
\end{proof}

\subsection{Asymptotic Distributions}
\subsubsection{Distributions of Standardized Scores}
\label{distscore}
\textbf{Score in $\delta$}: To determine the properties of
the MLEs we need to establish the joint distribution of $\ldot$ and
$\obsfisherstdN,$ defined in equations (\ref{scoreinlambda}) and
(\ref{defofBN}). We commence by discussing the first of these quantities. A
usual central limit theorem will apply for $ l_{\delta }\left( \chistar
\right),$ and we already noted that $E\left( l_{\delta }\left( \chistar
\right)\right)=0.$ Furthermore:
\[
{\mathrm{var}}\left\{l_{\delta }\left( \chistar  \right)\right\}
=\sum_{j=J_1}^{J_2}R_j^{(1)2} B_{\lambda_j,N}^2(\pole,\delta)+\o(N) \equiv \fisherNe_{\delta,\delta}+\o(N),
\]
and note that $\fisherNe_{\delta,\delta} = \fisherNlme_{\delta,\delta}N+\o(N).$
We may make the following note and definition:
\begin{eqnarray}
\nonumber
\ell_{\delta}\left( \chistar  \right)&=& \sqrt{N}
Z_1+\o( \sqrt{N}),\quad Z_1\sim
N\left(0,\fisherNlme_{\delta,\delta}
\right)\\
k_{N,2}\left( \chistar  \right)&=&\left( \fisherNlme_{\delta,\delta}N\right)^{-1/2}
\ell_{\delta}\left( \chistar  \right)\overset{{\cal L}}{\rightarrow} Z_2,\quad Z_2\sim \mathcal{N} \left(0,1\right)
\label{scoreindelta}.
\end{eqnarray}
Also we noted in the previous section that
\[{\mathrm{E}}\left\{-\ell_{\delta,\delta}\right\}=\fisherNe_{\delta,\delta} =\O(N),\quad
{\mathrm{var}}\left\{-\ell_{\delta,\delta}\right\}=\sum_j
\tilde{R}_j^{(2)2}B_{\lambda_j,N}^2(\pole,\delta)+o(N)=\O(N).\]
Thus
\[
W_{N,22}=-\frac{\ell_{\delta,\delta}}{\fisherNlme_{\delta,\delta}N}\overset{P}{\rightarrow}1,
\]
and so we may note that as $k_{N,2}$ and $W_{N,22}$ are asymptotically
uncorrelated and Gaussian, we find that using Slutsky's theorem
\begin{eqnarray}
\sqrt{\fisherNlme_{\delta,\delta}N}\left(\widehat{\delta}-\delta^{\ast}\right)
&=&k_{N,2}\left( \chistar  \right)\left[W_{N,22}\right]^{-1} \overset{\cal{L}}{\longrightarrow} \mathcal{N}(0,1),
\label{deltalim}
\end{eqnarray}
and from this result we can deduce Theorem 5. The value of
$\fisherNe_{\delta,\delta}$ and $\fisherNlme_{\delta,\delta}N$ are given by
equations (\ref{onny2}) and (\ref{onny1}), respectively.

\vspace{0.1 in}

\noindent\textbf{Score in $\pole$}: If the likelihood were sufficiently
regular, then the arguments that we used to derive the distribution of
$\sqrt{\fisherNlme_{\delta,\delta}N}\left(\widehat{\delta}-\delta^{\ast}\right)$
could be replicated for $\pole$ instead of $\delta,$ and the large sample
theory would be relative straightforward. However, this is not the case, and we
find that for the parameter $\pole,$ the situation is more complicated. The
first observation of interest is that we may note that the score is dominated
by the derivative of the demodulated periodogram, i.e. $\Istddot_j.$ In fact,
with an appropriate standardization of the score we determine that
\begin{eqnarray}
\nonumber \frac{1}{N^{3/2}}\ell_{\pole}\left(\chistar\right)
&=&\frac{1}{\sqrt{N}}\left[
\frac{1}{N}\sum_{j=J_1}^{J_2}
\left\{S_j^{(1)}\left(1-\eta_j \Istd_j\right)-\eta_j\Istddot_j\right\}
\right] = -\frac{1}{N^{1/2}}\sum_{j=J_1}^{J_2}
\Istddot_j+\o(1)\\
&&\label{normscore2}
\end{eqnarray}
where the sum random variable converges in distribution.  To be able to
determine the large sample properties of this object, we thus need to derive
the joint distribution of the random variables $\{\Istddot_j\}.$ $\Istddot_j$
is a quadratic form in correlated Gaussian random variables
$\standA_j,\;\standB_j,\;\standC_j$ and $\standD_j,$ that make up the
standardized derivative of the periodogram. Their joint distribution can be
determined from their covariance.
\begin{proposition}
\label{covarianceofder}
\begin{equation}
{\mathrm{cov}}\left\{\Istddot_j,
\Istddot_k\right\}=\frac{5\pi^2}{2}
V_{\lambda_j,\lambda_k,N}^2\left(\pole,\delta\right)+
\frac{1}{2} \dot{V}_{\lambda_j,\lambda_k,N}^2(\pole,\delta)+
\frac{1}{4}V_{\lambda_j,\lambda_k,N}\left(\pole,\delta\right)
W_{\lambda_j,\lambda_k,N}\left(\pole,\delta\right)
,\;j\neq k,
\label{covist}
\end{equation}
up to order $o(1)$, where $V_{\lambda_j,\lambda_k,N}\left(\pole,\delta\right)$ is defined in
section \ref{demodsec}, $W_{\lambda_j,\lambda_k,N}\left(\pole,\delta\right)$ is
defined by eqn (\ref{defofW}) and
$\dot{V}_{\lambda_j,\lambda_k,N}(\pole,\delta)$ is given by
\[\dot{V}_{\lambda_j,\lambda_k,N}(\pole,\delta)=-2\delta\int_{-\infty}^{\infty} s^{-1}\left|
\frac{s^2}{jk}\right|^{-\delta}\frac{\sin\{\pi(j-s)\}\sin\{\pi(k-s)\}}{\pi^2
(j-s)(k-s)}\;ds.\]
Note that if $\log(N)<k<j$ then
\[V_{\lambda_j,\lambda_k,N}(\pole,\delta)=\O\left\{\frac{\log(j)}{k}\right\},\quad
\dot{V}_{\lambda_j,\lambda_k,N}(\pole,\delta)=\O\left\{\frac{\log(j)}{k^2}\right\}.\]
\end{proposition}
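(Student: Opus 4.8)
The plan is to use Gaussianity to turn $\Istddot_j$ into a quadratic form and then reduce the covariance, by Isserlis' (Wick's) theorem, to products of the pairwise second moments of the demodulated DFT coefficients and their frequency derivatives. The only genuinely new object is the cross-covariance between a coefficient and a derivative coefficient, and this is precisely what produces $\dot{V}_{\lambda_j,\lambda_k,N}$.

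First I would record the quadratic-form representation. Differentiating (\ref{trig}) in $\pole$ gives $\partial A_j/\partial\pole=-2\pi D_j$ and $\partial B_j/\partial\pole=2\pi C_j$, so that $\dot{I}_0(\lamj)=4\pi(B_jC_j-A_jD_j)$, whence, after standardisation (cf.\ the definitions of $\standA_j,\standB_j,\standC_j,\standD_j$), $\Istddot_j=4\pi(\standB_j\standC_j-\standA_j\standD_j)+o(1)$. Thus $\Istddot_j$ is a quadratic form in the centred Gaussian vector $(\standA_j,\standB_j,\standC_j,\standD_j,\standA_k,\standB_k,\standC_k,\standD_k)$, and $\mathrm{cov}(\Istddot_j,\Istddot_k)=16\pi^2\,\mathrm{cov}(\standB_j\standC_j-\standA_j\standD_j,\standB_k\standC_k-\standA_k\standD_k)$.

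Next I would expand this by Isserlis' theorem. The within-index pairings reproduce $\mathrm{E}\{\Istddot_j\}\mathrm{E}\{\Istddot_k\}=O(j^{-1}k^{-1})$ and cancel in the covariance, leaving only the connected ($j$-with-$k$) contractions. These call for three families of moments. The value--value moments $\mathrm{E}\{\standA_j\standA_k\}=\mathrm{E}\{\standB_j\standB_k\}=\tfrac12 V_{\lambda_j,\lambda_k,N}+o(1)$ and $\mathrm{E}\{\standA_j\standB_k\}=o(1)$ are the \citet{OMS2004} moments quoted in Section \ref{demodsec}. The derivative--derivative moments $\mathrm{E}\{\standC_j\standC_k\}=\mathrm{E}\{\standD_j\standD_k\}=\tfrac{1}{16\pi^2}\{2\pi^2 V_{\lambda_j,\lambda_k,N}+W_{\lambda_j,\lambda_k,N}\}+o(1)$, with $W$ as in (\ref{defofW}) and $\mathrm{E}\{\standC_j\standD_k\}=o(1)$, were already obtained in the proof of Proposition \ref{secondord}. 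The genuinely new ingredient is the value--derivative family $\mathrm{E}\{\standA_j\standC_k\}$, $\mathrm{E}\{\standB_j\standC_k\}$, $\mathrm{E}\{\standA_j\standD_k\}$, $\mathrm{E}\{\standB_j\standD_k\}$ and their $j\leftrightarrow k$ transposes.

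I would compute this new family by the device used throughout \citet{OMS2004} and in Proposition \ref{secondord}: substitute $\gamma_{s-t}=\int f(\mu)e^{2\pi i\mu(s-t)}\,d\mu$, insert the near-pole form $f(\mu)\approx f^{\dagger}(\pole)|\mu-\pole|^{-2\delta}$, change variables $\mu=\pole+s/N$, and replace the Dirichlet kernels by their $\sin\{\pi(\cdot)\}/\{\pi(\cdot)\}$ limits. Because exactly one factor now carries the weight $t$ (from $C_k,D_k$), one kernel is replaced by its $s$-derivative; an integration by parts transfers that derivative onto $|s|^{-2\delta}$, which is the source of the factor $-2\delta\,s^{-1}$ and hence of $\dot{V}_{\lambda_j,\lambda_k,N}$. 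The same computation shows that the symmetric combinations (e.g.\ $\mathrm{E}\{\standA_j\standC_k\}$ and $\mathrm{E}\{\standB_j\standD_k\}$) are $o(1)$ while the antisymmetric ones reduce to $\dot{V}_{\lambda_j,\lambda_k,N}$, reflecting the asymptotic propriety of the joint complex system $(Z_{\lambda},\dot Z_{\lambda})$. Substituting all three families into the Isserlis expansion and collecting terms yields the three contributions $\tfrac{5\pi^2}{2}V_{\lambda_j,\lambda_k,N}^2$, $\tfrac14 V_{\lambda_j,\lambda_k,N}W_{\lambda_j,\lambda_k,N}$ and $\tfrac12\dot{V}_{\lambda_j,\lambda_k,N}^2$, the first two coming from the value--value and derivative--derivative moments and the last from the value--derivative moments, while every remaining product carries a factor $\mathrm{E}\{\standA_j\standB_k\}=o(1)$ or $\mathrm{E}\{\standC_j\standD_k\}=o(1)$ and is absorbed in the error. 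Finally, for $\log(N)<k<j$ the stated orders follow from the range decomposition $\int_{-\infty}^{-\epsilon}+\int_{-\epsilon}^{\epsilon}+\int_{\epsilon}^{j-\epsilon}+\int_{j-\epsilon}^{j+\epsilon}+\int_{j+\epsilon}^{\infty}$ and the Dirichlet-kernel bounds of \citet{Robinson1995a} used there for $\dot{B}_{\lambda_j,N}$; the extra $s^{-1}$ in $\dot{V}$ gains one power of $k$, giving $V_{\lambda_j,\lambda_k,N}=O(\log(j)/k)$ and $\dot{V}_{\lambda_j,\lambda_k,N}=O(\log(j)/k^2)$.

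The main obstacle is the value--derivative cross-covariance: producing both the integrand $\dot{V}_{\lambda_j,\lambda_k,N}$ and the exact numerical prefactors requires careful tracking of which kernel is differentiated, the integration by parts, and the propriety arguments that kill the symmetric cross-moments. Nailing the three constants $\tfrac{5\pi^2}{2}$, $\tfrac12$, $\tfrac14$ additionally forces one to verify that the various improper (pseudo-)cross-moments of the $(Z_{\lambda},\dot Z_{\lambda})$ system are negligible at this order; given that, the remaining accounting is mechanical and uses only moments already in hand.
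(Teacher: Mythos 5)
Your strategy is structurally the same as the paper's: represent $\Istddot_j=4\pi(\standB_j\standC_j-\standA_j\standD_j)$, expand $\mathrm{cov}\{\Istddot_j,\Istddot_k\}$ by Isserlis' theorem into connected pairwise moments, insert the value--value moments of \citet{OMS2004}, the derivative--derivative moments from Proposition \ref{secondord}, and the new value--derivative moments, and finally bound orders via \cite{Robinson1995a}. However, your value--derivative family contains a genuine error. You claim the ``symmetric'' cross-moments $\mathrm{E}\{\standA_j\standC_k\}$ and $\mathrm{E}\{\standB_j\standD_k\}$ are $\o(1)$, citing propriety of the joint system $(Z_{\lambda},\dot{Z}_{\lambda})$. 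Propriety only forces the pseudo-moment $\mathrm{E}\{Z_jU_k\}$ to vanish, i.e.\ it gives the \emph{equalities} $\mathrm{E}\{\standA_j\standC_k\}=\mathrm{E}\{\standB_j\standD_k\}$ and $\mathrm{E}\{\standA_j\standD_k\}=-\mathrm{E}\{\standB_j\standC_k\}$; it does not kill the common value, which is the real part of $\frac{1}{2}\mathrm{E}\{Z_jU_k^{*}\}$. The paper computes exactly this quantity and finds it is not negligible: $\mathrm{E}\{\standA_j\standC_j\}=\mathrm{E}\{\standB_j\standD_j\}=B_{\lambda_j,N}(\pole,\delta)/4+\o(1)$ (heuristically, the weight $t/N$ in $\standC_j$ averages to $1/2$, so $\standC_j$ behaves like $\frac{1}{2}\standA_j$ plus an orthogonal fluctuation), and cross-index it uses $\mathrm{E}\{\standA_j\standC_k\}=\mathrm{E}\{\standB_j\standD_k\}=\frac{1}{4}V_{\lambda_j,\lambda_k,N}(\pole,\delta)+\o(1)$. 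Since $V_{\lambda_j,\lambda_k,N}$ is retained at quadratic order in (\ref{covist}), a moment of size $V/4$ cannot be discarded as $\o(1)$.

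This is not a cosmetic slip: it changes the constants. In the Isserlis expansion these moments enter through
\[
-\mathrm{E}\{\standB_j\standD_k\}\,\mathrm{E}\{\standC_j\standA_k\}
-\mathrm{E}\{\standC_k\standA_j\}\,\mathrm{E}\{\standB_k\standD_j\}
\approx -(4\pi)^2\cdot 2\left\{\frac{1}{4}V_{\lambda_j,\lambda_k,N}(\pole,\delta)\right\}^2
=-2\pi^2V_{\lambda_j,\lambda_k,N}^2(\pole,\delta),
\]
and it is precisely this negative contribution, combined with the value--value $\times$ derivative--derivative products, that the paper needs to arrive at the coefficients $\frac{5\pi^2}{2}$ and $\frac{1}{4}$ in (\ref{covist}). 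If you instead set these moments to $\o(1)$, as you propose, the only surviving connected terms are the two products $\mathrm{E}\{\standA_j\standA_k\}\mathrm{E}\{\standD_j\standD_k\}$, $\mathrm{E}\{\standB_j\standB_k\}\mathrm{E}\{\standC_j\standC_k\}$ and the two $\dot{V}$ products, and collecting them gives $2\pi^2V^2+VW+\frac{1}{2}\dot{V}^2$ rather than $\frac{5\pi^2}{2}V^2+\frac{1}{4}VW+\frac{1}{2}\dot{V}^2$. So your concluding sentence, that substituting the three families and collecting terms ``yields the three contributions,'' does not go through with the moments you state: you must compute $\mathrm{E}\{Z_jU_k^{*}\}$ honestly, keeping both its real part (the $V/4$ terms) and its imaginary part (the $\dot{V}$ terms), before the bookkeeping closes.
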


\begin{proof}
Define $\bm{V}_j=\left(A_j,B_j,C_j,D_j\right)^\transpose$ and $
\bm{V}_j^{(f,N)}=\left\{\standA_j\;\standB_j\;
\standC_j\;\standD_j\right\}^\transpose$ and note that its components are
correlated normal random variables. Note that
${\mathrm{E}}\left\{\bm{V}_j\right\}=\bm{0}.$ We shall derive the final
calculations needed to complete the entries of the covariance matrix of this
object, namely ${\mathrm{cov}}\{A_j,C_j\},$ ${\mathrm{cov}}\{B_j,D_j\},$
$\textrm{cov}\{B_j,C_j\},$ and $\textrm{cov}\{A_j,D_j\}.$ As above, integrating
in the region $(\polestar \pm N^{-1/2}$ after change of variable to $u$ where
$\pole=\polestar+\frac{u}{N}$, we find that the suitably standardized random
variates have expectation:
\begin{eqnarray*}
{\mathrm{E}}\left\{\standA_j\standC_j\right\} =
{\mathrm{E}}\left\{\standB_j\standD_j\right\} =
\frac{1}{4}\int_{-\infty}^{\infty} \left|\frac{j}{u}\right|^{2\delta}
\frac{\sin^2\left[\pi (j-u)\right]}{ \left[\pi (j-u)\right]^2} du+\o(1)
=B_{\lambda_j,N}(\pole,\delta) /4+\o(1)
\end{eqnarray*}
and where the terms including the derivative of F\'{e}jer's kernel
cancel after a change of variable $u \rightarrow -u$.  Also we can
note from our calculations of the first differential that
\begin{eqnarray*}
{\mathrm{E}}\left\{A_j D_j\right\} &=&-1/2 Nf(\lamj)\dot{B}_{j,D,N}(\pole,\delta)/(4\pi)+\o(N)f(\lamj),
\end{eqnarray*}
as the cross-terms contribute terms of lesser order of magnitude for
large $N,$ and with a change of variable $\pole\rightarrow -\pole$
the terms multiplied by $N-1$ cancel. We also note that
\[
{\mathrm{E}}\left\{B_j C_j\right\}=-{\mathrm{E}}\left\{A_j D_j\right\}=1/2
Nf(\lamj)\dot{B}_{\lambda_j,N}(\pole,\delta)/(4\pi)+\o(N)f(\lamj),
\]
this result characterizing the second order structure of the derivative of the
periodogram. This, in combination with OMS and previously derived results
yields (up to terms $\o(1)f\left(\lamj\right)$): {\small \begin{eqnarray}
{\mathrm{var}}\left(\bm{V}_j\right)&=& f\left(\lamj\right)\begin{pmatrix}
\scriptstyle \frac{1}{2}B_{\lambda_j,N}(\pole,\delta) & 0 &
\scriptstyle\frac{1}{4} N B_{\lambda_j,N}(\pole,\delta) &
\scriptstyle  \frac{N}{8\pi} \dot{B}_{\lambda_j,N}(\pole,\delta)\\
0 & \scriptstyle \frac{1}{2}B_{\lambda_j,N}(\pole,\delta) &
\scriptstyle  \frac{N}{8\pi} \dot{B}_{\lambda_j,N}(\pole,\delta) &
\scriptstyle  \frac{1}{4} N B_{\lambda_j,N}(\pole,\delta)\\
\scriptstyle \frac{1}{4} N B_{\lambda_j,N}(\pole,\delta) &
\scriptstyle  \frac{N}{8\pi}\dot{B}_{\lambda_j,N}(\pole,\delta) &
\scriptstyle \frac{1}{16\pi^2}N^2 \Re\left(K_{j,j}\right)
& 0\\
\scriptstyle - \frac{N}{8\pi} \dot{B}_{j,D,N}(\pole,\delta) &
\scriptstyle \frac{1}{4} N B_{\lambda_j,N}(\pole,\delta) &0 &
\scriptstyle \frac{1}{16\pi^2}N^2 \Re\left(K_{j,j}\right)
\end{pmatrix} ,
\label{kata}
\end{eqnarray}}
which we shall denote $\bm{\Omega}_j.$ Note that $K_{jj}=2\pi^2
B_{\lambda_j,N}(\pole,\delta)+\dot{C}_{\lambda_j,N}(\pole,\delta).$ We are also
interested in the covariance between the terms $\Istddot_k,$ and thus need to
calculate
{\small {{\begin{eqnarray*} &&\frac{{\mathrm{cov}}\left\{
\Istddot_j,\Istddot_k\right\}}{\left(4\pi\right)^2} ={\mathrm{cov}}\left\{
\standB_j \standC_j-\standA_j \standD_j,\standB_k
\standC_k-\standA_k \standD_k\right\}\\
&=&{\mathrm{cov}}\left\{\standB_j
\standC_j,\standB_k
\standC_k \right\}-
{\mathrm{cov}}\left\{\standB_j
\standC_j,\standA_k\standD_k \right\}\nonumber\\
&&- {\mathrm{cov}}\left\{\standA_j\standD_j,\standB_k\standC_k \right\}
+{\mathrm{cov}}\left\{\standA_j\standD_j,\standA_k\standD_k \right\}
\end{eqnarray*}}
Using Isserlis's theorem (see \cite{isserlis}) for zero-mean Gaussian variates
we note that
\begin{eqnarray*}
{\mathrm{E}} \left\{ X_1 Y_1 X_2 Y_2 \right\} = {\mathrm{E}}\left\{X_1 X_2\right\} {\mathrm{E}}\left\{Y_1 Y_2\right\} + {\mathrm{E}}\left\{X_1 Y_2\right\}{\mathrm{E}}\left\{X_2 Y_1\right\}+
{\mathrm{E}}\left\{X_1 X_2\right\} {\mathrm{E}}\left\{Y_1 Y_2\right\}.
\end{eqnarray*}
Hence we find that $\mathrm{cov} \left\{ \Istddot_j,\Istddot_k
\right\}$ is equal to
{\small\begin{eqnarray}
\nonumber
&&(4\pi)^2
\left[{\mathrm{E}}\left\{\standB_j \standB_k\right\}
{\mathrm{E}}\left\{\standC_j \standC_k\right\}+
{\mathrm{E}}\left\{\standB_j \standC_k\right\}
{\mathrm{E}}\left\{\standB_k \standC_j\right\}\right.\\
&& \nonumber
-{\mathrm{E}}\left\{\standB_j \standA_k\right\}
{\mathrm{E}}\left\{\standC_j \standD_k\right\}-
{\mathrm{E}}\left\{\standB_j \standD_k\right\}
{\mathrm{E}}\left\{\standC_j \standA_k\right\}\\
\nonumber
&&-{\mathrm{E}}\left\{\standB_k \standA_j\right\}
{\mathrm{E}}\left\{\standD_j \standC_k\right\}
-
{\mathrm{E}}\left\{\standC_k \standA_j\right\}
{\mathrm{E}}\left\{\standB_k \standD_j\right\}\\
&&\nonumber
\left.
+{\mathrm{E}}\left\{\standA_j \standA_k\right\}
{\mathrm{E}}\left\{\standD_j \standD_k\right\}+
{\mathrm{E}}\left\{\standA_j \standD_k\right\}
{\mathrm{E}}\left\{\standD_j \standA_k\right\}
\right]
\\
\nonumber
&=&(4\pi)^2\left\{\frac{1}{2}V_{\lambda_j,\lambda_k,N}\left(\pole,\delta
\right)\right\}^2+(4\pi)^2
\left\{-\frac{1}{8\pi}
\dot{V}_{\lambda_j,\lambda_k,N}(\pole,\delta)\right\}\left\{-\frac{1}{8\pi}
\dot{V}_{\lambda_j,\lambda_k,N}(\pole,\delta)\right\}-\o(1)\\
\nonumber
&&-(4\pi)^2\left\{\frac{1}{4}V_{\lambda_j,\lambda_k,N}\left(\pole,\delta
\right)\right\}^2-\o(1)-(4\pi)^2\left\{\frac{1}{4}V_{\lambda_j,\lambda_k,N}\left(\pole,\delta
\right)\right\}^2
\\ \nonumber
&&
+\frac{(4\pi)^2}{16\pi^2}\left\{\frac{1}{4}V_{\lambda_j,\lambda_k,N}\left(\pole,\delta
\right)\right\}\left\{2\pi^2V_{\lambda_j,\lambda_k,N}(\pole,\delta)+W_{\lambda_j,\lambda_k,N}(\pole,\delta)
\right\}\\
&& \nonumber
+\left\{\frac{1}{2}
\dot{V}_{\lambda_j,\lambda_k,N}(\pole,\delta)\right\}\left\{\frac{1}{2}
\dot{V}_{\lambda_j,\lambda_k,N}(\pole,\delta)\right\}
\\
&=&\pi^2\frac{5}{2}V_{\lambda_j,\lambda_k,N}^2\left(\pole,\delta \right)+
\frac{1}{2}\dot{V}^2_{\lambda_j,\lambda_k,N}(\pole,\delta)+\frac{1}{4}V_{\lambda_j,\lambda_k,N}(\pole,\delta)
W_{\lambda_j,\lambda_k,N}(\pole,\delta)+\o(1)
\nonumber\\
&=&\O\left\{k^{-2} \log^2(j) \right\}+\O\left\{k^{-4} \log^2(j) \right\}+
\O\left\{k^{-2} \log^2(j) \right\}+\o(1) \label{orderofthing}.
\end{eqnarray}}}}
Note that the bound for $\dot{V}_{\lambda_j,\lambda_k,N}$ follows by arguments, {\em mutatis
mutandis}, \cite{Robinson1995a}.
\end{proof}

\subsubsection{Distribution of the Derivative of the Standardized Periodogram}
\label{delttty}
We now derive the distribution of $\Istddot_j$ to be able to determine the
distribution of $\sum_j \Istddot_j:$
\begin{proposition}
\label{covariancematrix}
\begin{equation}
\label{nyadistribution} \Istddot_j\sim \sum_{k=1}^{4}
\gamma_{k}^{(j)} R_{i,j}^2+o(1),
\end{equation}
where $\gamma_{k}^{(j)}$ are the roots of equation
\begin{eqnarray*}
&&\gamma^4-\dot{B}_{j,D,N}(\pole,\delta) \gamma^3+\left\{\frac{3}{8}
\dot{B}_{j,D,N}^2(\pole,\delta)-
\frac{1}{4}B_{\lambda_j,N}(\pole,\delta) \dot{C}_{\lambda_j,N}(\pole,\delta)\right\}\gamma^2
+ \frac{1}{4}\dot{B}_{\lambda_j,N}(\pole,\delta)\\
&&\times \left\{\frac{B_{\lambda_j,N}(\pole,\delta)
\dot{C}_{\lambda_j,N}(\pole,\delta)}{2}-2^{-2}\dot{B}_{\lambda_j,N}^2(\pole,\delta) \right\}\gamma^2\\
&&+ \frac{1}{2^6}B_{\lambda_j,N}(\pole,\delta)^2\dot{C}_{\lambda_j,N}^2(\pole,\delta)-\frac{1}{2^7}
\dot{B}_{\lambda_j,N}^2(\pole,\delta)B_{\lambda_j,N}(\pole,\delta) \dot{C}_{\lambda_j,N}(\pole,\delta)+
\frac{1}{2^6}\dot{B}_{\lambda_j,N}^4(\pole,\delta)
= 0,
\end{eqnarray*}
and $R_{i,j}$ are independent unit Gaussian variables across $i$ for each fixed
$j.$ This in turn implies
\begin{eqnarray}
\nonumber {\mathrm{E}}\{ \Istddot_j\} = \sum_{k=1}^{4}
\gamma_{k}^{(j)}+o(1) \qquad {\mathrm{{\mathrm{var}}}}\{\Istddot_j \} =  2
\sum_{k=1}^{4} \gamma_{k}^{2(j)}+o(1). \label{andraord}
\end{eqnarray}
\end{proposition}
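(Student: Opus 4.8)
The plan is to recognize $\Istddot_j$ as a quadratic form in a four-dimensional mean-zero Gaussian vector and to reduce it to a weighted sum of independent squared Gaussians by simultaneous diagonalization of the form and the covariance. First I would recall from the first-order calculation that $\dot{I}_0(\lamj)=4\pi(B_jC_j-A_jD_j)$, so that after the standardizations $\standB_j\standC_j=B_jC_j/\{Nf(\lamj)\}$ and $\standA_j\standD_j=A_jD_j/\{Nf(\lamj)\}$ one has
\[
\Istddot_j = 4\pi\left(\standB_j\standC_j-\standA_j\standD_j\right)=\left(\bm{V}_j^{(f,N)}\right)^{\transpose}\bm{M}\,\bm{V}_j^{(f,N)},
\]
where $\bm{V}_j^{(f,N)}=(\standA_j,\standB_j,\standC_j,\standD_j)^{\transpose}$ and $\bm{M}$ is the symmetric matrix whose only nonzero entries are $M_{23}=M_{32}=2\pi$ and $M_{14}=M_{41}=-2\pi$. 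By Proposition \ref{covarianceofder} and equation (\ref{kata}), $\bm{V}_j^{(f,N)}$ is, up to $o(1)$, a mean-zero Gaussian vector whose standardized covariance $\tilde{\bm{\Omega}}_j$ (the matrix in (\ref{kata}) stripped of the $f(\lamj)$ prefactor and the accompanying powers of $N$) is known explicitly in terms of $B_{\lambda_j,N}$, $\dot{B}_{\lambda_j,N}$ and $\dot{C}_{\lambda_j,N}$.

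Next I would invoke the standard reduction of a Gaussian quadratic form. Writing $\bm{V}_j^{(f,N)}=\tilde{\bm{\Omega}}_j^{1/2}\bm{Z}$ with $\bm{Z}\sim\mathcal{N}(\bm{0},\bm{I}_4)$ gives $\Istddot_j=\bm{Z}^{\transpose}\bm{S}\bm{Z}$ with $\bm{S}=\tilde{\bm{\Omega}}_j^{1/2}\bm{M}\tilde{\bm{\Omega}}_j^{1/2}$ symmetric. Diagonalizing $\bm{S}=\bm{P}\bm{\Gamma}\bm{P}^{\transpose}$ by an orthogonal $\bm{P}$ and setting $\bm{R}=\bm{P}^{\transpose}\bm{Z}$, which remains standard Gaussian, yields $\Istddot_j=\sum_{k=1}^4\gamma_k^{(j)}R_{k,j}^2$, the representation (\ref{nyadistribution}). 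Because $\bm{S}$ is similar to $\bm{M}\tilde{\bm{\Omega}}_j$, the weights $\gamma_k^{(j)}$ are exactly the eigenvalues of $\bm{M}\tilde{\bm{\Omega}}_j$, i.e. the roots of $\det(\bm{M}\tilde{\bm{\Omega}}_j-\gamma\bm{I}_4)=0$. Expanding this $4\times4$ determinant using the near-antidiagonal sparsity of $\bm{M}$ together with the block form of $\tilde{\bm{\Omega}}_j$ produces the quartic displayed in the proposition; matching its coefficients against the elementary symmetric functions of the $\gamma_k^{(j)}$ is the one genuinely computational step.

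Finally, the moment formulas follow from the trace identities for Gaussian quadratic forms: $\mathrm{E}\{\Istddot_j\}=\mathrm{tr}(\bm{S})=\mathrm{tr}(\bm{M}\tilde{\bm{\Omega}}_j)=\sum_k\gamma_k^{(j)}$ and $\mathrm{var}\{\Istddot_j\}=2\,\mathrm{tr}(\bm{S}^2)=2\,\mathrm{tr}\{(\bm{M}\tilde{\bm{\Omega}}_j)^2\}=2\sum_k\gamma_k^{2(j)}$, the first of which is consistent with the earlier identification $\mathrm{E}\{\Istddot_j\}=\dot{B}_{\lambda_j,N}(\pole,\delta)+o(1)$ and so serves as a sanity check. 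The $o(1)$ remainders propagate from the fact that the entries of $\tilde{\bm{\Omega}}_j$ are themselves accurate only to $o(1)$, and the $j=0$ case is handled identically after substituting the corresponding covariance structure with the $\Polebias N^{2\delta}$ standardization. I expect the main obstacle to be the explicit determinant expansion and the associated bookkeeping needed to reconcile the two $\dot{B}$-type entries appearing in (\ref{kata}): the antidiagonal coupling in $\bm{M}$ mixes the $(\standA_j,\standD_j)$ and $(\standB_j,\standC_j)$ blocks, and these cross terms must be tracked with care to recover precisely the stated quartic.
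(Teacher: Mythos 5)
Your proposal is correct and follows essentially the same route as the paper: write $\Istddot_j$ as a quadratic form $4\pi\bm{V}_j^{(f,N)\transpose}\bm{T}\bm{V}_j^{(f,N)}$ in the correlated Gaussian vector, whiten, diagonalize, and identify the weights $\gamma_k^{(j)}$ as the roots of the resulting quartic characteristic polynomial, with the moments following from the weighted-$\chi^2_1$ representation. The only cosmetic differences are that the paper whitens via an explicit Cholesky factor $\bm{\mathcal{L}}_j$ (whose sparse congruence image $\bm{\mathcal{L}}_j^\transpose\bm{T}\bm{\mathcal{L}}_j$ makes the eigenvalue computation tractable) rather than the symmetric square root, and reads off the moments from the elementary symmetric functions of the roots rather than trace identities; these are equivalent.
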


\begin{proof}
Firstly note that for a fourth order polynomial with roots
$\left\{\gamma_k^{(j)}\right\}$ we find that
\begin{eqnarray*}
\prod_{k=1}^{4} \left\{\gamma-\gamma_k^{(j)}\right\} &=&
\gamma^4-\gamma^3 \sum_{k=1}^{4} \gamma_{k}^{(j)}
+\gamma^2\sum_{l<k} \gamma_k^{(j)}\gamma_l^{(j)}
-\gamma \sum_{u<l<k} \gamma_k^{(j)}\gamma_l^{(j)}\gamma_u^{(j)}
+\gamma_1^{(j)}\gamma_2^{(j)}\gamma_3^{(j)}\gamma_4^{(j)}\\
&=& \gamma^4+b_j \gamma^3+c_j \gamma^2+d_j\gamma+e_j.
\end{eqnarray*}
Also note that
\begin{equation}
\nonumber \sum_{k=1}^{4} \gamma_{k}^{2(j)}= \left\{\sum_{k=1}^{4}
\gamma_{k}^{(j)} \right\}^2-2\left\{\sum_k \sum_{l<k}
\gamma_k^{(j)}\gamma_l^{(j)}\right\}.
\end{equation}
Thus we find that
\begin{eqnarray}
E\{ \Istddot_j\}&=&\sum_{k=1}^{4}
\gamma_{k}^{(j)} = -b_j = \dot{B}_{\lambda_j,N}(\pole,\delta)+o(1) \label{espected}\\
\nonumber
{\mathrm{var}}\{ \Istddot_j\}&=&2\sum_{k=1}^{4}
\gamma_{k}^{2(j)} = 2b_j^2-4c_j = 2\dot{B}_{j,D,N}^2(\pole,\delta)-4 \left\{\frac{3}{8}\dot{B}_{\lambda_j,N}^2(\pole,\delta)-\frac{1}{4}B_{\lambda_j,N}(\pole,\delta)
\right.\\
&&\left.\dot{C}_{\lambda_j,N}(\pole,\delta)\right\}+\o(1) =\frac{1}{2}
\dot{B}_{\lambda_j,N}^2(\pole,\delta)+B_{\lambda_j,N}(\pole,\delta)
\dot{C}_{\lambda_j,N}(\pole,\delta)+\o(1), \label{espected2}
\end{eqnarray}
this giving the full first and second order structure of the standardized
derivative, from the quadratic form. Equation (\ref{espected}) matches the
previously developed results for the expectation of $\Istddot_j$.
(\ref{espected2}) gives a compact expression for the variance. Of some interest
is now the difference in magnitude between this quantity and the $j$th
contribution of $\fisherNe_{\pole,\pole}/N^2,$ but this is not sufficient to
establish the large sample properties of the distribution, as
$-\ell_{\pole,\pole},$ does not converge in probability to a constant if
suitably standardized. Note that $B_{\lambda_j,N}(\pole,\delta)$ nearly takes
the value unity for most $j,$ and for $j$ small due to the integrand of
$\dot{B}_{\lambda_j,N}(\pole,\delta)$ being odd near the origin, clearly
$\left|\dot{C}_{\lambda_j,N}(\pole,\delta)\right|>>
0.5\dot{B}_{\lambda_j,N}^2(\pole,\delta).$ We therefore to derive a compact
expression for the properties of $\Istddot$  to compare the magnitude of
$\dot{C}_{\lambda_j,N}(\pole,\delta)$ and
$\ddot{B}_{\lambda_j,N}(\pole,\delta)$ to justify this argument.

To derive eqn (\ref{nyadistribution}) we use results given in
\cite[p.~149--188]{JohnsonandKotz} on quadratic forms. Note that with
\begin{eqnarray*}
\bm{T}=\begin{pmatrix}
0 & 0 & 0 & -1/2\\
0 & 0 & 1/2 & 0\\
0 & 1/2 & 0 & 0\\
-1/2 & 0 & 0 & 0
\end{pmatrix}
\end{eqnarray*}
we have
\begin{equation}
\label{derri}
\Istddot_j=4\pi\bm{V}_j^{(f,N)T}
\bm{T} \bm{V}_j^{(f,N)}.
\end{equation}
Firstly define
\begin{equation}
\label{sq}
\mathrm{var}\{\bm{V}_j^{(f,N)}\}=\bm{\Omega}_j^{(f,N)}+o(1)=
{\mathbf{\mathcal{L}}}_j{\mathbf{\mathcal{L}}}_j^\transpose+o(1),
\end{equation}
where $\bm{\Omega}_j^{(f,N)}$ is the normalized version of eqn (\ref{kata})
and
where ${\bm{\mathcal{L}}}_j$ is the lower triangular matrix given by, where,
for notational purposes we take $\B_{\lambda_j,N}=\B_{\lambda_j,N}(\pole,\delta)$
and $\dot{C}_{\lambda_j,N}=\dot{C}_{\lambda_j,N}(\pole,\delta)$ :
\begin{equation}
\nonumber {\bm{\mathcal{L}}}_j=\begin{pmatrix}
\sqrt{\frac{B_{\lambda_j,N}}{2}} & 0 & 0 & 0\\
0 & \sqrt{\frac{B_{\lambda_j,N}}{2}} & 0&0\\
\sqrt{\frac{B_{\lambda_j,N}}{2^3}} &  \frac{
\dot{B}_{\lambda_j,N}}{4\pi \sqrt{2B_{\lambda_j,N}}}&
\frac{1}{4\pi}\sqrt{\frac{\dot{C}_{\lambda_j,N}B_{\lambda_j,N}
-2\dot{B}_{\lambda_j,N}^2
}{B_{\lambda_j,N}}}&0\\
-\frac{\dot{B}_{\lambda_j,N}}{4\pi\sqrt{2B_{\lambda_j,N}}}&
\sqrt{\frac{B_{\lambda_j,N}}{2^3}}
 & 0 &
\frac{1}{4\pi}\sqrt{\frac{\dot{C}_{\lambda_j,N}B_{\lambda_j,N}
-2\dot{B}_{\lambda_j,N}^2
}{B_{\lambda_j,N}}}
\end{pmatrix}.
\end{equation}
Note that
\begin{equation}
\label{dista} \bm{V}_j^{(f,N)}\sim
{\mathcal{N}}\left(\bm{0},\bm{\Omega}_j^{(f,N)} \right)+o(1),
\end{equation}
and thus $\bm{Z}_j={\bm{\mathcal{L}}}_j^{-1}\bm{V}_j\sim
{\mathcal{N}}\left(\bm{0},\bm{I}_4 \right).$ The quadratic form is
then given by (ignoring terms $o(1)$):
\begin{equation}
\nonumber (4\pi)^{-1}\Istddot =
\bm{V}_j^{(f,N)T} \bm{T} \bm{V}_j^{(f,N)} = \bm{Z}_j^\transpose
\bm{\mathcal{L}}_j^\transpose \bm{T} \bm{\mathcal{L}}_j\bm{Z}_j= \bm{Z}_j^\transpose
\bm{\mathcal{M}}_j\bm{Z}_j,
\end{equation}
and thus the distribution of this object depends wholly on the
eigenvalue of $\bm{\mathcal{M}}_j .$ Note that
\begin{eqnarray*}
{\mathcal{M}}_j&=&\bm{\mathcal{L}}_j^\transpose\bm{T} \bm{\mathcal{L}}_j=
\begin{pmatrix}
\frac{\dot{B}_{\lambda_j,N}(\pole,\delta)}{8\pi}& 0&0 & \Gamma_j \\
0 & \frac{\dot{B}_{\lambda_j,N}(\pole,\delta)}{8\pi} & \Gamma_j & 0\\
0 & \Gamma_j& 0 &0\\
\Gamma_j & 0 & 0 & 0
\end{pmatrix}
\end{eqnarray*}
where
\[
\Gamma_j =
-\frac{1}{8\pi}\sqrt{\frac{B_{\lambda_j,N}(\pole,\delta)}{2}\dot{C}_{\lambda_j,N}
(\pole,\delta)-\dot{B}^2_{\lambda_j,N}(\pole,\delta)}.
\]
We are interested in $4\pi  {\mathcal{M}}_j$ which has eigenvalues
$\gamma_k^{(j)}$ given as the solution of
\begin{eqnarray*}
&&\gamma^4-\dot{B}_{\lambda_j,N}(\pole,\delta) \gamma^3+\left\{\frac{3}{8}
\dot{B}^2_{\lambda_j,N}(\pole,\delta)-
\frac{1}{4}B_{\lambda_j,N}(\pole,\delta)\dot{C}_{\lambda_j,N}(\pole,\delta)\right\}\gamma^2
+\frac{\dot{B}_{\lambda_j,N}(\pole,\delta)}{4}\\
&&\times \left\{\frac{B_{\lambda_j,N}(\pole,\delta)
\dot{C}_{\lambda_j,N}(\pole,\delta)}{2}-\dot{B}^2_{\lambda_j,N}(\pole,\delta)/4 \right\}\gamma^2\\
&&+ \frac{1}{2^6}\frac{B_{\lambda_j,N}(\pole,\delta)^2\dot{C}^2_{\lambda_j,N}(\pole,\delta)}{4}-
\frac{1}{2^7}
\dot{B}^2_{\lambda_j,N}(\pole,\delta) B_{\lambda_j,N}(\pole,\delta) \dot{C}_{\lambda_j,N}(\pole,\delta)+\frac{1}{2^6}
\dot{B}^4_{\lambda_j,N}(\pole,\delta)
= 0.
\end{eqnarray*}
We then note from \citet[p.~151]{JohnsonandKotz} that if we define new
variables $\bm{R}_j$ in terms of the orthogonal matrix of eigenvectors of
$\bm{\mathcal{M}}_j$ and $\bm{Z}_j,$ they will be $\bm{R}_j\sim
{\mathcal{N}}\left(\bm{0},\bm{I}_4 \right),$ and
\begin{equation}
\label{distrub} \Istddot_j=4\pi \bm{Z}_j^\transpose\bm{\mathcal{M}}_j\bm{Z}_j\sim \sum_{k=1}^4
\gamma_k^{(j)} R_{j,k}^2+o(1),
\end{equation}
thus completing the proof of the proposition, and establishing the marginal
distribution
of $\Istddot_j.$
\end{proof}

\begin{proposition}
\label{dist8} The standardized score function satisfies constraint:
\begin{eqnarray}
\nonumber
k_{N,1}(\chistar)&=&\frac{1}{N^{3/2}}\ell_{\pole}=K_N+\o(1)\\
K_N&\sim&\mathcal{N}\left(0,\breve{\sigma}^2_{N}\right)
\label{modNscore}
\\
K_N&\overset{{\cal L}}{\Longrightarrow}&Z_4\sim N\left(0,\frac{\pi^2}{3}\right),
\label{largeNscore}
\end{eqnarray}
where
\begin{eqnarray*}
\nonumber
\breve{\sigma}^2_N&=&\frac{1}{N}
\sum_{j=J_1}^{J_2} \left\{\frac{1}{2}\dot{B}^2_{\lambda_j,N}(\pole,\delta)+B_{\lambda_j,N}(\pole,\delta)
\dot{C}_{\lambda_j,N}(\pole,\delta)\right\}\\
\nonumber
&&+\frac{1}{N}\sum_{j\neq k}\left\{
\frac{5\pi^2}{2}
V_{\lambda_j,\lambda_k,N}^2\left(\pole,\delta\right)+
\frac{1}{2} \dot{V}_{\lambda_j,\lambda_k,N}^2(\pole,\delta)+
\frac{1}{4}V_{\lambda_j,\lambda_k,N}\left(\pole,\delta\right)
W_{\lambda_j,\lambda_k,N}\left(\pole,\delta\right)\right\}\\
&=& \frac{1}{N}\sum_{j=J_1}^{J_2} \nonumber
\dot{C}_{\lambda_j,N}(\pole,\delta)+\o(1) \longrightarrow \frac{\pi^2}{3}.
\end{eqnarray*}
\end{proposition}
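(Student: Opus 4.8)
The plan is to reduce the normalized score $N^{-3/2}\ell_{\pole}(\chistar)$ to the single weakly dependent sum $K_N=-N^{-1/2}\sum_{j=J_1}^{J_2}\Istddot_j$ and then to obtain its mean, variance and limiting law separately. The reduction is precisely equation (\ref{normscore2}): the remaining piece $N^{-3/2}\sum_j S_j^{(1)}(1-\Istd_j)$ is $o_p(1)$, because $S_j^{(1)}=O(1)$ while $\sum_j S_j^{(1)}(1-\Istd_j)$ has mean $O(\log^2 N)$ (using $\mathrm{E}\{\Istd_j\}=1+O(\log j/j)$) and variance $O(N)$, hence is $O_p(N^{1/2})$ and vanishes after division by $N^{3/2}$. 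It therefore suffices to study $K_N$.

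Second, I would assemble the first two moments of $K_N$ from the identities already proved. For the mean, $\mathrm{E}\{\Istddot_0\}=0$ and $\mathrm{E}\{\Istddot_j\}=\dot{B}_{\lambda_j,N}(\pole,\delta)+o(1)$ with $\dot{B}_{\lambda_j,N}=O(j^{-1})$ and the antisymmetry $\dot{B}_{\lambda_j,N}=-\dot{B}_{\lambda_{-j},N}$; the terms $j$ and $-j$ cancel in pairs, so only the unmatched tail over the interval of length $\Theta(N)$ with endpoints $\Theta(N)$ survives, contributing $O(1)$, whence $\mathrm{E}\{K_N\}=-N^{-1/2}O(1)=o(1)$. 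For the variance I would combine $\mathrm{var}\{\Istddot_j\}=\tfrac12\dot{B}^2_{\lambda_j,N}+B_{\lambda_j,N}\dot{C}_{\lambda_j,N}+o(1)$ from Proposition \ref{covariancematrix} with $\mathrm{cov}\{\Istddot_j,\Istddot_k\}$ from Proposition \ref{covarianceofder}; substituting these into $\mathrm{var}\{K_N\}=N^{-1}\big[\sum_j\mathrm{var}\{\Istddot_j\}+\sum_{j\neq k}\mathrm{cov}\{\Istddot_j,\Istddot_k\}\big]$ reproduces the displayed expression for $\breve{\sigma}^2_N$.

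Third, I would simplify $\breve{\sigma}^2_N$ to its limit. The term $N^{-1}\sum_j\tfrac12\dot{B}^2_{\lambda_j,N}$ is $O(N^{-1})$ since $\dot{B}^2_{\lambda_j,N}=O(j^{-2})$ is summable. Writing $B_{\lambda_j,N}=1+O(\log j/j)$ and using $\dot{C}_{\lambda_j,N}=\tfrac{2\pi^2}{3}+O(\log j/j)$ from (\ref{largesampC}) gives $N^{-1}\sum_j B_{\lambda_j,N}\dot{C}_{\lambda_j,N}=N^{-1}\sum_j\dot{C}_{\lambda_j,N}+o(1)$, the correction being $N^{-1}O(\log^2 N)$. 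The double sum over $j\neq k$ is $o(1)$ by the same integral estimate used in Appendix A.2: from (\ref{orderofthing}) each covariance is $O(k^{-2}\log^2 j)$ for $\log N<k<j$, and the $R_2$-type computation there shows $N^{-1}\sum_{j\neq k}O(k^{-2}\log^2 j)\to 0$. Finally, since the index range contains $J_2-J_1+1\sim M=N/2$ frequencies and $\dot{C}_{\lambda_j,N}\to\tfrac{2\pi^2}{3}$ (the $O(1)$ block near $j=0$ and the $O(\log^2 N)$ correction being negligible after division by $N$), I obtain $\breve{\sigma}^2_N=N^{-1}\sum_j\dot{C}_{\lambda_j,N}+o(1)\to\tfrac12\cdot\tfrac{2\pi^2}{3}=\tfrac{\pi^2}{3}$.

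Finally, for the distributional statements I would invoke a central limit theorem for the normalized sum of the weakly dependent quadratic forms $\Istddot_j$, each of which is a finite weighted sum of independent $\chi^2_1$ variates, $\Istddot_j\sim\sum_{k=1}^4\gamma_k^{(j)}R_{j,k}^2$, by Proposition \ref{covariancematrix}. The off-diagonal covariances decay in the separation $|j-k|$ (Proposition \ref{covarianceofder}), so the array is asymptotically uncorrelated across well-separated blocks; a blocking and Lindeberg argument then delivers $K_N\Rightarrow Z_4\sim\mathcal{N}(0,\pi^2/3)$, together with the exact-variance Gaussian approximation $K_N\sim\mathcal{N}(0,\breve{\sigma}^2_N)$ at finite $N$. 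I expect the CLT to be the main obstacle: one must verify a Lindeberg-type negligibility condition for this triangular array of \emph{dependent} quadratic forms under the nonstandard $N^{-1/2}$ normalization, and in parallel control the delicate cancellations (already flagged in Appendix A.2) that cause the off-diagonal variance contributions to vanish only after division by $N$. Everything else is bookkeeping built on the moment identities already established.
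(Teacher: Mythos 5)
Your first three steps coincide with the paper's own Part I: the reduction to $K_N=-N^{-1/2}\sum_j\Istddot_j$ via (\ref{normscore2}), the assembly of $\breve{\sigma}^2_N$ from the moment identities of Propositions \ref{covarianceofder} and \ref{covariancematrix}, and the passage to the limit $\pi^2/3$ (pairwise cancellation of $\dot{B}_{\lambda_j,N}$ for the mean, summability of $\dot{B}^2_{\lambda_j,N}$, the $R_2$-type integral bound for the cross terms, and the count of roughly $N/2$ frequencies each contributing $2\pi^2/3$). In places you argue this more explicitly than the paper does, and that part is sound.

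The gap is in the final CLT step, and it is concrete: you justify the blocking argument by asserting that "the off-diagonal covariances decay in the separation $|j-k|$", but that is not what Proposition \ref{covarianceofder} provides. The bound (\ref{orderofthing}) is ${\mathrm{cov}}\{\Istddot_j,\Istddot_k\}=\O\{k^{-2}\log^2(j)\}$ for $\log(N)<k<j$, i.e.\ decay in the distance of the \emph{nearer index from the pole}, not in the separation of the two indices. Adjacent indices far from the pole are indeed nearly uncorrelated, but indices in a neighbourhood of the pole remain appreciably dependent however you partition them, so a separation-based blocking scheme fails exactly on the near-pole block. The paper's proof circumvents this by first excising all indices $\left|j\right|\le l$ with $l=\O\{\log(N)\}$ and showing their total contribution is negligible under the $N^{-1/2}$ scaling, and only then establishing asymptotic normality of the remaining sum -- not by blocking, but by an iterated-conditioning characteristic-function expansion in which the conditional means and variances are computed explicitly from the Gaussian block covariance matrices ${\bm{\Omega}}_j^{(f,N)}$ and ${\bm{\Omega}}_{jk}^{(f,N)}$. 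This last point matters beyond bookkeeping: for a dependent non-Gaussian array, covariance decay alone never implies a CLT; what makes the argument work is that the $\Istddot_j$ are quadratic forms in a single jointly Gaussian vector, so their full joint dependence (conditional laws, equivalently all higher cumulants) is controlled by the covariances of that underlying vector. Your sketch can be repaired along these lines -- excise the near-pole block explicitly in the distributional step (you do so only in the variance computation), and replace "asymptotically uncorrelated across well-separated blocks" by dependence bounds inherited from the Gaussian structure -- at which point it essentially becomes the paper's Part II.
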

\begin{proof}
PART I (\textit{Determining the first and second order properties of $k_{N,1}(\chistar)$}): We note that
\[k_{N,1}(\chistar)=
\frac{1}{N^{3/2}}\ell_{\pole}=- \frac{1}{\sqrt{N}}\sum_j
\Istddot(\lamj)+\o(1)=Y_{1,N}(\chistar)+\o(1),\]
from equation (\ref{normscore2}), the equation defining the random variable
$Y_{1,N}(\chistar).$ We then
note from equations (\ref{valofone}), (\ref{espected2}) and (\ref{covist}) that:
\begin{eqnarray*}
\nonumber
E\left\{Y_{1,N}(\chistar)\right\}&=&\o(1)\\
\nonumber{\mathrm{var}}\left\{\frac{1}{\sqrt{N}}\Istddot_j\right\}&=&
\frac{1}{N} \left\{\frac{1}{2}\dot{B}^2_{\lambda_j,N}(\pole,\delta)
+B_{\lambda_j,N}(\pole,\delta) \dot{C}_{\lambda_j,N}(\pole,\delta)\right\}
+\o\{N^{-1}\}\\
\nonumber{\mathrm{cov}}\left\{\frac{1}{\sqrt{N}}\Istddot_j,
\frac{1}{\sqrt{N}}\Istddot_k\right\}&=&
\nonumber
\frac{1}{N}
\left\{
\frac{5\pi^2}{2}
V_{\lambda_j,\lambda_k,N}^2\left(\pole,\delta\right)+
\frac{1}{2} \dot{V}_{\lambda_j,\lambda_k,N}^2(\pole,\delta)+\right.\\
&&+\left.
\frac{1}{4}V_{\lambda_j,\lambda_k,N}\left(\pole,\delta\right)
W_{\lambda_j,\lambda_k,N}\left(\pole,\delta\right)\right\}
+\o\{N^{-1}\} \; \;j\neq
k.\nonumber\end{eqnarray*}
Thus it follows that:
\begin{eqnarray*}
\nonumber
{\mathrm{var}}\left\{Y_{1,N}(\chistar)\right\}&=& \frac{1}{N}\sum_j
\left\{\frac{1}{2}\dot{B}^2_{\lambda_j,N}(\pole,\delta)+B_{\lambda_j,N}(\pole,\delta) \dot{C}_{\lambda_j,N}(\pole,\delta)\right\}\nonumber \\
&&
+\frac{1}{2N}\sum_{k\neq j} \left[\dot{V}_{\lambda_j,\lambda_k,N}^2(\pole,\delta)+
V_{\lambda_j,\lambda_k,N}(\pole,\delta)\left\{5\pi^2
V_{\lambda_j,\lambda_k,N}(\pole,\delta)+\frac{1}{2}W_{\lambda_j,\lambda_k,N}(\pole,\delta) \right\}
\right]+\o(1).
\end{eqnarray*}
Note that
\begin{eqnarray*}
&&\frac{1}{N} \sum_j
\left\{\frac{1}{2}\dot{B}_{\lambda_j,N}^2(\pole,\delta) +B_{\lambda_j,N}
(\pole,\delta)\dot{C}_{\lambda_j,N}(\pole,\delta)\right\}\\
&+&\frac{1}{2N}\sum_{k\neq j} \left[\dot{V}_{\lambda_j,\lambda_k,N}^2(\pole,\delta)
+V_{\lambda_j,\lambda_k,N}(\pole,\delta) \left\{5\pi^2V_{\lambda_j,\lambda_k,N}(\pole,\delta)+
\frac{1}{2} W_{\lambda_j,\lambda_k,N}(\pole,\delta) \right\} \right]
\end{eqnarray*}
equates to
\[
\frac{1}{N}\sum_{j}  \dot{C}_{\lambda_j,N}(\pole,\delta)+\o(1)
\]
We arrived at this result using the order of
$V_{\lambda_j,\lambda_k,N}\left(\pole,\delta\right)$,
$\dot{V}_{\lambda_j,\lambda_k,N}^2(\pole,\delta)$ and
$W_{\lambda_j,\lambda_k,N}\left(\pole,\delta\right)$ noted in eqn
(\ref{orderofthing}) and that: $ B_{\lambda_j,N}(\pole,\delta)=
1+\O\left\{\frac{\log(j)}{j}\right\},$
$V_{\lambda_j,\lambda_k,N}(\pole,\delta)=\O\left\{\frac{\log(j)}{k}\right\},$
if $\log(N)<k<j,$ $\dot{B}_{\lambda_j,N}(\pole,\delta) =
\O\left(\frac{1}{j}\right),$
$\dot{V}_{\lambda_j,\lambda_k,N}(\pole,\delta)=\O\left\{\frac{\log(j)}{k^2}\right\},$
and $\dot{C}_{\lambda_j,N}(\pole,\delta)=
\frac{2\pi^2}{3}+\O\left\{\frac{\log(j)}{j}\right\}.$

We thus have that
\begin{eqnarray}
\nonumber{\mathrm{var}}\left\{Y_{1,N}(\chistar)\right\}&=&
\frac{1}{N}{\mathrm{var}}\left[-\sum_j\Istddot_j\right]=
\frac{1}{N}\left[\sum_j \dot{C}_{\lambda_j,N}(\pole,\delta)+\o(N)\right]\\
&=& \frac{2\pi^2}{3N}\times \frac{N}{2}+\o(1)=\frac{\pi^2 }{3}+\o(1).
\label{variance}
\end{eqnarray}
Thus to obtain an $\O(1)$ random variate we must consider a standardization of
$N^{-3/2}\ell_{\pole}.$
\\ \\
PART II (\textit{Determining the asymptotic law}): In outline, we note:
\begin{eqnarray*}
E\left\{\Istddot_j \right\}&=&\O\left(\frac{1}{j} \right)+o(1)\quad
{\mathrm{var}}\left\{\Istddot_j \right\}=\frac{2\pi^2}{3}+\O\left\{\frac{\log(j)}{j}\right\}+o(1)\\
cov\left\{\Istddot_j, \Istddot_k\right\}&=& \O\left\{\frac{\log^2(j)}{k^2} \right\}+o(1),
\quad \log(N)<k<j.\end{eqnarray*}
Now we wish to derive conditional expectations, to be able to derive the
stated distributional result for $Z_4$.
Define for $\log(N)<k<j<N/2:$
\begin{eqnarray*}
{\bm{\Omega}}_j^{(f,N)}&=&\begin{pmatrix}
\frac{1}{2}+\O\left\{j^{-1} \log(j)\right\} & 0 & \frac{1}{4}+\O\left\{
j^{-1} \log(j)\right\}
& \O\left(j^{-1}\right)\\[4pt]
0 &\frac{1}{2}+\O\left\{j^{-1} \log(j)\right\} & \O\left(j^{-1}\right)
&\frac{1}{4}+\O\left\{j^{-1} \log(j)\right\}\\[4pt]
\frac{1}{4}+\O\left\{j^{-1} \log(j)\right\}&
\O\left(j^{-1}\right) &
\frac{1}{6}+\O\left\{j^{-1} \log(j)\right\} & 0\\[4pt]
\O\left(j^{-1}\right) & \frac{1}{4}+\O\left\{j^{-1} \log(j)\right\}
& 0 &\frac{1}{6}+\O\left\{j^{-1} \log(j)\right\}
\end{pmatrix}\\[6pt]
{\bm{\Omega}}_{jk}^{(f,N)}&=&\begin{pmatrix}
\O\left\{k^{-1} \log(j)\right\} & 0 & \O\left\{k^{-1} \log(j)\right\}
& \O\left\{k^{-2} \log(j)\right\}\\[4pt]
0 &\O\left\{k^{-1} \log(j)\right\} & \O\left\{k^{-2} \log(j)\right\}
&\O\left\{k^{-1} \log(j)\right\}\\[4pt]
\O\left\{k^{-1} \log(j)\right\}& \O\left\{k^{-2} \log(j)\right\} &
\O\left\{k^{-1} \log(j)\right\} & 0\\[4pt]
\O\left\{k^{-2} \log(j)\right\} & \O\left\{k^{-1} \log(j)\right\}
& 0 &\O\left\{k^{-1} \log(j)\right\}
\end{pmatrix}
\end{eqnarray*}
Then the full covariance matrix of $\left\{ \bm{V}_j^{(f,N)}\quad \bm{V}_k^{(f,N)}\right\}$ is given by
$\bm{\Sigma}_{jk}=\begin{pmatrix} {\bm{\Omega}}_{j}^{(f,N)} &
{\bm{\Omega}}_{jk}^{(f,N)}\\
{\bm{\Omega}}_{kj}^{(f,N)} & {\bm{\Omega}}_{k}^{(f,N)}\end{pmatrix}+o(1),$ and if we
define
\[
\bm{\Upsilon}_{jk}=\left\{ {\bm{\Omega}}_{k}^{(f,N)}-
 {\bm{\Omega}}_{kj}^{(f,N)}({\bm{\Omega}}_{j}^{(f,N)})^{-1} {\bm{\Omega}}_{jk}^{(f,N)}
\right\}^{-1}=\O(1)+\O\left\{k^{-1}\log(k)\right\}+\O\left(k^{-2}\log^2(j) \right),\]
then
\begin{eqnarray*}
\bm{\Sigma}^{-1}_{jk}=\begin{pmatrix}\bm{\Xi}_{j} & \bm{\Xi}_{jk}\\
\bm{\Xi}_{kj} & \bm{\Xi}_{k}\end{pmatrix},\quad
\bm{\Xi}_{j}&=& ({\bm{\Omega}}_{j}^{(f,N)})^{-1} +({\bm{\Omega}}_{j}^{(f,N)})^{-1} {\bm{\Omega}}_{jk}^{(f,N)}
\bm{\Upsilon}_{jk}{\bm{\Omega}}_{kj}^{(f,N)}({\bm{\Omega}}_{j}^{(f,N)})^{-1}.
\end{eqnarray*}
We may thus deduce that for $\log(N)<k<j<N/2$
\begin{eqnarray}
E\left\{\Istddot_j|\Istddot_k \right\}&=&\O\left(j^{-1} \right)+
\O\left\{ j^{-1} k^{-2} \log^2(j) \right\}+\o(1)
\\
{\mathrm{var}}\left\{\Istddot_j|\Istddot_k \right\}&=&
\frac{2\pi^2}{3}+\O\left\{j^{-1} \log(j) \right\}+\O\left\{k^{-2} \log^2(j) \right\}+\o(1).
\end{eqnarray}
These results are reminiscent of results obtained for the periodogram itself,
thus using arguments in the vein of \cite{Hurvich1998}; we argue that for $j$
sufficiently small the sum of the terms over $j$ are of negligible magnitude so
that when they are standardized by $N^{-1/2},$ they decay.

In fact if we define $U_j=\Istddot_j$ and calculate the characteristic
function of $\sum_{\left|j\right|=l}^{J} U_j,$ denoted $\phi(t),$ with $l=O\{\log(N)\}$ then
\begin{eqnarray}
\nonumber
\log(\phi(t))&=&\log\left\{E\left(e^{i\frac{t}{\sqrt{N}}\sum_j U_j}\right)\right\}
= \log\left[E\left\{e^{i\frac{t}{\sqrt{N}}\sum_j^{J-1} U_j}E\left(\left.e^{i \frac{t}{\sqrt{N}}
U_J}\right|U_{J-1}\dots\right)\right\}\right]\\
\label{replacecov}
&=&\log\left[E\left\{e^{i\frac{t}{\sqrt{N}}\sum_j^{J-1} U_j}
\left(1+i\frac{t}{\sqrt{N}}\left[
\O\left(\frac{1}{J}\right)+\sum_{k<J} \O\left\{\frac{\log^2(J)}{J k^2}\right\}\right]\right.\right.\right.\\
\nonumber
&&\left.\left.\left.-\frac{1}{2}\frac{t^2}{N}
\left[\frac{2\pi^2}{3}+\sum_{k<J} \O\left\{\frac{\log^2(J)}{k^2} \right\}\right]
 +\O\left(N^{-3/2}\right)\right)\right\}\right]
\\
\nonumber
&=&\sum_j \log\left(1+i\frac{t}{\sqrt{N}}\left[
\O\left(\frac{1}{j}\right)+\sum_{k<j} \O\left\{\frac{\log^2(j)}{j k^2}\right\}\right]-\frac{1}{2}\frac{t^2}{N}
\left[\frac{2\pi^2}{3}+\sum_{k<j} \O\left\{\frac{\log^2(j)}{k^2}\right\} \right]
\right.\\
\nonumber
&&\left. +\O\left(N^{-3/2}\right)\right)\\
\nonumber
&=&\sum_j\left(\frac{i t}{\sqrt{N}}\left[
\O\left(\frac{1}{j}\right)+\sum_{k<j} \O\left\{\frac{\log^2(j)}{j k^2}\right\}\right]-\frac{1}{2}\frac{t^2}{N}
\left[\frac{2\pi^2}{3}+\sum_{k<j} \O\left\{\frac{\log^2(j)}{k^2}\right\} \right]
+\O\left(N^{-3/2}\right)\right)\\
\nonumber
&=&\frac{i t}{\sqrt{N}}\left[
\O\{\log(N)\}+\o(\sqrt{N})\right]-\frac{1}{2}\frac{t^2}{N}
\left\{\left(J_2-J_1+1-2l\right)\frac{2\pi^2}{3}
+\o(N)\right\}+\O\left(N^{-1/2}\right)\\
\nonumber
&\rightarrow &-\frac{1}{2}\frac{2\pi^2 t^2}{3}(J_2-J_1+1-2l)=-\frac{1}{2}\frac{\pi^2 t^2}{3}.
\label{normalscore}
\end{eqnarray}
We want the characteristic function of $N^{-1/2}\sum_{j=J_1}^{J_2} U_j.$ We
split this into two parts $N^{-1/2}\sum_{\left|j\right|=l}^{J_1,J_2} U_j$ and
$N^{-1/2}\sum_{\left|j\right|<l} U_j,$ and note that the latter sum converges
to the point zero. Thus the sum of the $\Istddot_j$ will converge to a Gaussian
random variable with a zero mean and a variance of $\frac{\pi^2}{3}: $ or:
\begin{eqnarray}
\label{x1} k_{N,1}(\chistar)&=&K_N+\o(1) \overset{\cal L}{\rightarrow}
\mathcal{N}\left(0,\frac{1}{2}\frac{2\pi^2}{3}\right).\end{eqnarray} In fact,
stopping the argument at equation (\ref{replacecov}) and replacing $2\pi^2/3$
by $\dot{C}_{\lambda_j,N}(\pole,\delta)$, we may deduce that
\begin{eqnarray}
k_{N,1}(\chistar)&=&K_N+\o(1) \sim \mathcal{AN}\left(0,\sum_{J_1}^{J_2} \dot{C}_{\lambda_j,N}(\pole,\delta)\right).
\label{largefinitek}
\end{eqnarray}
The approximation of eqn. (\ref{largefinitek}) may serve as a better
approximation to the distribution of $k_{N,1}(\chistar)$, rather than the
distribution given in eqn (\ref{x1}), at moderate values of $N.$
\end{proof}

\subsubsection{Limit behaviour of the Fisher Information}
\label{FisherLim} Having established the large sample properties of $k_{N,1}$
to be able to relate them back to a suitably standardized version of
$\widehat{\pole}$ we must also establish the large sample behaviour of
$\left[\bm{W}_N\right]_{11}$ near the true value of the pole. We shall use the
same normalizations and local regions as defined by \cite{Sweeting2}, when
treating asymptotic ancillarity.  Recall that ${\bm{B}}_N$ was defined in
equation (\ref{defofBN}), and refer to the notation specified in this section.
To be able to do so define the ${\bm{B}}_N^{-1/2}$ neighbourhoods of $\chival$
by ${\mathcal{N}}_N\left(\chistar,c\right)=\left\{\chival\in\bm{\Omega}:
\;\left|{\bm{B}}_N\left(\bm{\chival}-\chistar\right) \right|<c \right\}$.
\begin{proposition}
\label{Fisherinfoconv} Define
$\phi_N^s=\left\{\chival:\;\chival=\chistar+\bm{B}_N^{-1/2}s,\;s\in{\mathbb{R}}^2
\right\}$.  For $\chival\in \phi_N^s$,
\item $\bm{W}_N(\chival) \overset{\mathcal{L}}{\rightarrow}\bm{W},$
where
\begin{equation}
\bm{W}\sim \begin{pmatrix} W_{11} & 0\\
0 & 1.
\end{pmatrix},
\end{equation}
and $W_{11} \sim N\left(0,\frac{8\pi^4}{15}\right).$
Furthermore note the finite large sample approximation that for $\chival=\chistar$ we have
{\small
\begin{eqnarray}
\label{modNfish}
\left[\bm{W}_N(\chistar)\right]_{11}&=&\tilde{W}_{N,11}+\o(1),\quad
\tilde{W}_{N,11}\sim \mathcal{N}
\left(\frac{\sum_j \ddot{B}_{\lambda_j,N}(\polestar,\delta^{\ast})}{\sqrt{N}},
\frac{\sum_j\tilde{\sigma}_j(\chistar)}{N}\right)\\
\tilde{W}_{N,11}&\overset{\cal{L}}{\Longrightarrow}&
Z_5,\quad Z_5\sim \mathcal{N}\left(0,\frac{8\pi^4}{15}\right),
\label{largeNfish}
\end{eqnarray}}
where
\begin{alignat}{1}
\nonumber
\tilde{\sigma}_j(\chistar)&={\mathrm{var}}\left(\Istdddot_j \right),\quad
\lim_{j\rightarrow N}\tilde{\sigma}_j(\chistar)=\frac{16\pi^4}{15}.
\end{alignat}
\end{proposition}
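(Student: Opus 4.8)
The plan is to exploit the fact that the standardization matrix $\bm{B}_N$ in (\ref{defofBN}) is diagonal, so that $\obsfisherstdN=\bm{B}_N^{-1/2}\obsfisherN\bm{B}_N^{-1/2}$ decouples into three scalar problems, and the two entries involving $\delta$ are disposed of quickly. For the $(2,2)$ entry, $\left[\obsfisherstdN\right]_{22}=-\ell_{\delta,\delta}/(\fisherNlme_{\delta,\delta}N)$ has mean tending to $1$ by Proposition \ref{secondord} and variance $\O(N)/N^2\to0$, so $\left[\obsfisherstdN\right]_{22}\overset{P}{\rightarrow}1$, recovering $W_{22}=1$ (this is the convergence already recorded as $W_{N,22}\overset{P}{\rightarrow}1$ in Section \ref{distscore}). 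For the off-diagonal entry $\left[\obsfisherstdN\right]_{12}=-\ell_{\pole,\delta}/(N^{7/4}\sqrt{\fisherNlme_{\delta,\delta}})$, the expectation is $\O(N^{-3/4})$ since $\fisherNe_{\pole,\delta}=\O(N)$, while a bound $\mathrm{var}(\ell_{\pole,\delta})=\o(N^{7/2})$ (obtained from the orders of $\dot{B}_{\lambda_j,N}$ and $\dot{C}_{\lambda_j,N}$ exactly as in the score computation) forces the variance to vanish; hence $\left[\obsfisherstdN\right]_{12}\overset{P}{\rightarrow}0=W_{12}$. Finally, the reduction from $\chival\in\phi_N^s$ to $\chival=\chistar$ is justified because, over the $\bm{B}_N^{-1/2}$-neighbourhood $\phi_N^s$, the standardized information is locally constant in the limit, $\obsfisherstdN(\chival)-\obsfisherstdN(\chistar)\overset{P}{\rightarrow}0$, by the asymptotic-ancillarity arguments of \citet{Sweeting2}. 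The entire non-standard content therefore lies in the $(1,1)$ entry evaluated at $\chistar$.

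First I would reduce the $(1,1)$ entry to a single sum. Using the expansion of $-\ell_{\pole,\pole}$ from the proof of Proposition \ref{secondord} together with the identity $\eta_j=1/f(\lamj)$ for $j\neq0$ (so that $\eta_j\ddotPgramj=N^2\Istdddot_j$), the leading term dominates after division by $N^{5/2}$:
\begin{equation*}
\left[\obsfisherstdN(\chistar)\right]_{11}=N^{-5/2}\bigl(-\ell_{\pole,\pole}\bigr)=\frac{1}{\sqrt{N}}\sum_{j=J_1}^{J_2}\Istdddot_j+\o_p(1),
\end{equation*}
the remaining terms $S_j^{(2)}$, $\eta_j\tilde{S}_j^{(2)}\Pgrm_j$ and $\breve{S}_j^{(2)}\eta_j\Idot_j/N$ being of smaller order once multiplied by $N^{-5/2}$, by the same accounting used for the score in (\ref{normscore2}); the single $j=0$ term, which scales with $N^{2\delta}$, contributes $\O(N^{-1/2})$ after standardization and is absorbed into the negligible small-$|j|$ block. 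The problem is thus the exact analogue of Proposition \ref{dist8}, with the third-order standardized periodogram $\Istdddot_j$ in place of $\Istddot_j$.

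Next I would compute the first two moments of $\Istdddot_j$. From (\ref{secondthing}), ${\mathrm{E}}\{\Istdddot_j\}=\ddot{B}_{\lambda_j,N}(\pole,\delta)+o(1)=\O(j^{-2})$, so the summable mean gives $N^{-1/2}\sum_j\ddot{B}_{\lambda_j,N}\to0$, matching the centring of $\tilde{W}_{N,11}$. For the variance I would write $\ddotPgramj=8\pi^2\{C_j^2+D_j^2-A_jE_j-B_jF_j\}$ and treat $\Istdddot_j$ as a quadratic form in the standardized Gaussian vector $(\standA_j,\standB_j,\standC_j,\standD_j,\standE_j,\standF_j)$, whose covariance matrix extends (\ref{kata}) by the $t^2$-weighted entries; applying Isserlis's theorem exactly as in the proof of Proposition \ref{covariancematrix} yields $\mathrm{var}(\Istdddot_j)=\tilde\sigma_j(\chistar)$ with $\lim_{j\rightarrow N}\tilde\sigma_j(\chistar)=16\pi^4/15$, the limiting constant arising from the associated Dirichlet-kernel integrals. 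Bounding the cross terms $\mathrm{cov}(\Istdddot_j,\Istdddot_k)$ for $\log(N)<k<j$ mutatis mutandis from Proposition \ref{covarianceofder} and \citet{Robinson1995a} shows $N^{-1}\sum_{j\neq k}\mathrm{cov}(\Istdddot_j,\Istdddot_k)\to0$, so that, counting the $J_2-J_1+1\sim N/2$ indices,
\begin{equation*}
\mathrm{var}\Bigl(\tfrac{1}{\sqrt{N}}\sum_j\Istdddot_j\Bigr)=\frac{1}{N}\sum_j\tilde\sigma_j(\chistar)+\o(1)\longrightarrow\frac12\cdot\frac{16\pi^4}{15}=\frac{8\pi^4}{15}.
\end{equation*}
Retaining the exact finite-$N$ mean and variance before passing to the limit then produces the Gaussian approximation $\tilde{W}_{N,11}\sim\mathcal{N}(N^{-1/2}\sum_j\ddot{B}_{\lambda_j,N},\,N^{-1}\sum_j\tilde\sigma_j)$ of (\ref{modNfish}).

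Finally I would establish asymptotic normality of $N^{-1/2}\sum_j\Istdddot_j$ by the conditional characteristic-function argument of Part II of Proposition \ref{dist8}: conditioning successively on the $\Istdddot_k$ with $k<j$, the conditional mean is negligible and the conditional variance equals $16\pi^4/15+o(1)$, while the contributions of the $\O(\log N)$ indices nearest the pole vanish after the $N^{-1/2}$ scaling; this gives (\ref{largeNfish}), $Z_5\sim\mathcal{N}(0,8\pi^4/15)$, and hence $W_{11}\sim\mathcal{N}(0,8\pi^4/15)$. The main obstacle is the variance computation for $\Istdddot_j$: because the second derivative of the periodogram is a quadratic form in a six-dimensional Gaussian vector involving the $t^2$-weighted sums $E_j,F_j$, both the entries of its covariance matrix and the ensuing Isserlis expansion are markedly heavier than in the first-derivative case, and confirming the constant $16\pi^4/15$ rests on a delicate evaluation of the relevant kernel integrals.
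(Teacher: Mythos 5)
Your proposal follows essentially the same route as the paper's proof: reduce to the $(1,1)$ entry at $\chistar$ (disposing of the $\delta$ entries via Proposition \ref{secondord} and the $N^{7/4}$ standardization), write it as $N^{-1/2}\sum_j \Istdddot_j$ up to negligible terms, compute ${\mathrm{E}}\{\Istdddot_j\}=\ddot{B}_{\lambda_j,N}=\O(j^{-2})$ and $\mathrm{var}(\Istdddot_j)\rightarrow 16\pi^4/15$ by an Isserlis expansion of the quadratic form, bound the cross-covariances as in Proposition \ref{covarianceofder}, and conclude normality with limiting variance $8\pi^4/15$ by the conditional characteristic-function argument of Proposition \ref{dist8}, with the same finite-$N$ Gaussian approximation and the same local-constancy step for $\chival\in\phi_N^s$. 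The only differences are bookkeeping (your explicit $\o(N^{7/2})$ variance bound for the off-diagonal entry, and invoking \citet{Sweeting2} up front rather than the paper's closing remark), so this is the paper's argument in all essentials.
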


\begin{proof}
{\em Distribution of $\left[\bm{W}_N\left(\chistar \right)\right]_{11}.$} \\
We seek to establish the distribution of $\bm{W}_N\left(\chival \right),$ but
intend to start by determining the distribution of $\bm{W}_N\left(\chistar
\right).$  Most of the entries in the matrix are easily established: we have
already specified the distribution of $\left[\bm{W}_N\left(\chival
\right)\right]_{22}$ and we may note that $-\ell_{\pole,\delta}$ when
standardized by $N^{7/4},$ converges to zero (see Proposition 8). This implies
that three of the entries of $\bm{W}_N\left(\chival \right)$ appropriately
converge, and the fourth element needs to be determined, as well as note of the
correlation of the four elements need to be considered before the limit is
taken. We consider $\left[\bm{W}_N(\chistar)\right]_{11}$ for large sample
sizes. As
\[\left[\bm{W}_N(\chistar)\right]_{11}=-\frac{1}{ N^{5/2}}\ell_{\pole,\pole} =\frac{1}{N^{5/2}}
\sum_j\left\{S^{(2)}_j-\eta_j\tilde{S}_j^{(2)}\Pgrm_j +N
\breve{S}^{(2)}\frac{\eta_j}{N}\Idot_j +N^2 \eta_j \frac{\ddotPgramj}{N^2}.
\right\},\] and we note that $\Pgrm_j ,$ $\Idot_j $ and $\ddotPgramj $ are
quadratic forms in variables
$\tilde{\bm{V}}_j=\left[A_j,B_j,C_j,D_j,E_j,F_j\right]^\transpose,$  that are
more reasonably treated in terms of the standardized forms, we can note that:
\begin{eqnarray}
\nonumber
\left[\bm{W}_N(\chistar)\right]_{11}=-\frac{1}{N^{5/2}}\ell_{\pole,\pole}\left(\chistar\right)
&=&\frac{1}{N^{1/2}}\sum_j\left\{\frac{S^{(2)}_j}{N^2}-\frac{\tilde{S}_j^{(2)}\Istd_j}{N^2}
+\frac{\breve{S}^{(2)}\Istddot_j}{N}
+\Istdddot_j.
\right\}\\
&=&-\frac{1}{\sqrt{N}}\sum_{j=J_1}^{J_2}
\Istdddot_j+\o(1)=Y_{2,N}\left(\chistar\right)+\o(1).
\label{normscore6}
\end{eqnarray}
As for large $j,$ we note that
$\ddot{B}_{\lambda_j,N}(\pole,\delta)=\O(j^{-2}),$ and so we find that:
\[\lim_{N\rightarrow \infty} \sum_{j}  \ddot{B}_{\lambda_j,N}(\pole,\delta)\rightarrow C_{10}=\O(1),\]
and thus,
\begin{equation}
E\left\{ Y_{2,N}\left(\chistar\right)\right\}=-N^{-1/2}E\left\{\sum_{j=J_1}^{J_2}
\Istdddot_j\right\}=\O(N^{-1/2}).
\end{equation}
We then consider the variance of $Y_{2,N}\left(\chistar\right),$ to determine
the properties of this random variable. To find the full properties of
$Y_{2,N}\left(\chistar\right)$ we note that it is a quadratic form in the full
set $\left\{\tilde{\bm{V}}_j\right\},$ and replicate our previous treatment of
$\left\{\bm{V}_j\right\}.$ It transpires, that the important properties to
establish, for a heuristic argument, is the mean and variance of the random
variates $\Istdddot_j.$ The variates are correlated across $j,$ but given the
weak correlation, this need not be accounted for, just like in the previous
arguments, the combined correlation once suitably renormalized converges to a
negligible contribution. After some very lengthy calculations that are not
replicated here, we obtain that the variance of $\Istdddot_j$ is given by:
\begin{eqnarray}
\label{varofIdd}
\tilde{\sigma}^2_j&=&{\mathrm{var}}\left\{\Istdddot_j \right\}\\
\nonumber
&=&2^6 \pi^4 {\mathrm{var}}\left\{\standDs_j-\standA_j
\standE_j -\standB_j \standF_j + \standCs_j
\right\}\\
\nonumber
&=&2^6 \pi^4\left[{\mathrm{var}}\left\{\standDs_j\right\}+{\mathrm{var}}\left\{\standA_j
\standE_j\right\}+
{\mathrm{var}}\left\{\standB_j \standF_j\right\}+
{\mathrm{var}}\left\{\standCs_j\right\}\right.\\
\nonumber
&&-2{\mathrm{cov}}\left\{\standDs_j,\standA_j
\standE_j\right\}-2{\mathrm{cov}}\left\{\standDs_j,
\standB_j \standF_j\right\}+2{\mathrm{cov}}\left\{\standDs_j,
\standCs_j\right\}
\\
\nonumber
&&+2{\mathrm{cov}}\left\{\standA_j
\standE_j,\standB_j \standF_j\right\}-2{\mathrm{cov}}\left\{\standA_j
\standE_j,\standCs_j\right\}
-2{\mathrm{cov}}\left\{\standB_j \standF_j,\standCs_j\right\}.
\nonumber
\end{eqnarray}
Each of these terms is given by
\begin{eqnarray*}
{\mathrm{var}}\left\{\standDs_j\right\}
&=&{\mathrm{var}}\left\{\standCs_j\right\}=\frac{1}{2^7\pi^4}
\left\{2\pi^2B_{\lambda_j,N}(\pole,\delta)+\dot{C}_{\lambda_j,N}(\pole,\delta)\right\}^2+\o(1)\\
&=& \frac{1}{2^7\pi^4}
\left(2\pi^2+2\pi^2/3\right)^2+\o(1)
=  \frac{1}{18}+\o(1).
\end{eqnarray*}
Also
\begin{eqnarray*}
\nonumber
{\mathrm{var}}\left\{\standA_j \standE_j\right\}&=&
{\mathrm{var}}\left\{\standB_j \standF_j\right\}\\
\nonumber
&=&\frac{1}{4}B_{\lambda_j,N}(\pole,\delta)\left\{
\frac{1}{16}B_{\lambda_j,N}(\pole,\delta)-
\frac{1}{16\pi^2}\ddot{B}_{\lambda_j,N}(\pole,\delta)+\frac{3}{16\pi^2}\dot{C}_{\lambda_j,N}(\pole,\delta)
\right.\\
\nonumber
&&\left. +\ddot{C}_{\lambda_j,N}(\pole,\delta)
\right\}
+\frac{1}{2^8\pi^4}
\left\{-2\pi^2B_{\lambda_j,N}-\dot{C}_{\lambda_j,N}
(\pole,\delta)+\ddot{B}_{\lambda_j,N}(\pole,\delta)\right\}^2+\o(1)\\
&= & \frac{1}{20}+\frac{1}{36}+\o(1),
\end{eqnarray*}
where $\ddot{C}_{\lambda_j,N}(\pole,\delta)$ is given by
\begin{eqnarray}
\nonumber
\ddot{C}_{\lambda_j,N}(\pole,\delta)&=&\frac{1}{16}\left\{\begin{array}{lcr}
B_{\lambda_j,N}(\pole,\delta)-2\int_{-\infty}^{\infty}
\left|\frac{j}{u}\right|^{2\delta}
\frac{\sin\{\pi(u-j)\}}{\pi(u-j)}\psi_{2}(j,u)\;du+
\int_{-\infty}^{\infty}\left|\frac{j}{u}\right|^{2\delta}
\psi_{2}^2(j,u)\;du &{\mathrm{if}} &j \neq 0\\
B_{0,D,N}(\pole,\delta)-2\int_{-\infty}^{\infty}
\left|u\right|^{-2\delta}
\frac{\sin(\pi u)}{\pi u}\psi_{2}(0,u)\;du+
\int_{-\infty}^{\infty}\left|u\right|^{-2\delta}
\psi_{2}^2(0,u)\;du &{\mathrm{if}} &j = 0
\end{array}\right.,\\
\label{laterC}
\end{eqnarray}
and $\psi_{2}(j,u) = 2\left[-\cos\{\pi(u-j)\}/\{\pi(u-j)\}^2
+\sin\{\pi(u-j)\}/\{\pi(u-j)\}^3 \right).$ Finally we note that
\begin{eqnarray*}
{\mathrm{cov}}\left\{\standDs_j,\standA_j
\standE_j\right\}&=&
{\mathrm{cov}}\left\{\standCs_j,\standB_j
\standF_j\right\}= -\frac{\dot{B}_{\lambda_j,N}}{2^5\pi^2}\left\{ \dot{B}_{\lambda_j,N}/2
+\ddot{C}_{\lambda_j,N}(\pole,\delta)\right\},
\end{eqnarray*}
plus $\o(1)$ terms where
\[\ddot{C}_{\lambda_j,N}(\pole,\delta)=
\int_{-\infty}^{\infty} \left| \frac{s}{j}\right|^{-2\delta} s^{-1}
\frac{\left[\sin\left\{\pi
(j-s)\right\}
-\cos\left\{\pi (j-s)\right\}\pi (j-s)\right]^2}
{2\left\{\pi (j-s)\right\}^4}\;ds.
\]
Also
\begin{eqnarray*}
{\mathrm{cov}}\left\{\standCs_j,\standA_j
\standE_j\right\}&=&
{\mathrm{cov}}\left\{\standDs_j,\standB_j
\standF_j\right\}\\
&=&\frac{B_{\lambda_j,N}(\pole,\delta)}{2^6\pi^2}
\left\{2\pi^2B_{\lambda_j,N}-\ddot{B}_{\lambda_j,N}(\pole,\delta)+3\dot{C}_{\lambda_j,N}(\pole,\delta)
\right\}\\
&&+\o(1)=  \frac{1}{16}+\o(1)\\
{\mathrm{cov}}\left\{\standDs_j,
\standDs_j\right\}&=&\o(1)\\
{\mathrm{cov}}\left\{\standA_j
\standE_j,\standB_j \standF_j\right\}&=&
\delta \dot{B}_{\lambda_j,N}(\pole,\delta)+\o(1) = \O\left(j^{-1}\right)+\o(1).
\end{eqnarray*}
Combining these results we find that as $j\rightarrow N$, and $N\rightarrow
\infty$, $\tilde{\sigma}_j^2\rightarrow \frac{16\pi^4}{15}\approx 104.$ Thus
for increasing $j$ the variance of $\ddot{I}^{(f,N)}(\chistar)$ tends to a
constant, again the covariance terms will behave like the covariance terms in
the score, and the mean is of negligible magnitude. We are thus adding many
identically distributed variates with order one variance, and the same weak
dependence as before. We can yet again adapt the arguments of
\cite{Hurvich1998}. The argument will necessarily become very complicated, as
we now need to consider a quadratic form in twelve Gaussian correlated
variates, and there is no real point in giving the exact details of the
argument.

The distribution may for non-negligible values of $\delta$ be slow to attain,
and so for large but more moderate $N$ we propose to use:
\begin{equation}
Y_{2,N}(\chistar)=\tilde{W}_{N,11}+\o(1),\quad \tilde{W}_{N,11}
\sim \mathcal{N}\left(\frac{1}{\sqrt{N}}\sum_j \ddot{B}_{\lambda_j,N}(\pole,\delta),
\frac{1}{N}\sum_j \tilde{\sigma}^2_j\right).
\end{equation}
For large $N$ we find $\sqrt{N}\sum_j \ddot{B}_{\lambda_j,N}(\pole,\delta)=
\o(1)$ whilst
\[
\frac{1}{N}\sum_j
\tilde{\sigma}^2_j=\frac{1}{N}\frac{16\pi^4}{15}\frac{N}{2}+\o(1)
=\frac{8\pi^4}{15}+\o(1),
\]
and so we may note that $\tilde{W}_{N,11}\overset{\cal
L}{\Longrightarrow}Z_5,\quad Z_5\sim \mathcal{N}(0,8\pi^4/15)$. We furthermore
note that as the variance increases linearly with $\left|J_2-J_1\right|$ the
distribution of the second derivative at values of $j$ near the pole eventually
becomes negligible in influence in the random variable
$\left[\bm{W}_N\left(\chival\right)\right]_{11},$ and thus the distributional
results will also hold for $\left[\bm{W}_N\left(\chival\right)\right]_{11}$
when $\chival\in \phi_N^{s},$ or
\begin{eqnarray}
\label{x2}
\left[W_N(\chival)\right]_{11}&\overset{\cal L}{=}
&\left[W_N(\chistar)\right]_{11}+\o(1)
\overset{\cal L}{=}Z_5+\o(1).
\end{eqnarray}
This establishes the distribution of the standardized observed Fisher information
of the likelihood.
\end{proof}

However, before we may combine these results to note the distribution of
$N\widehat{\pole}$ we must consider the dependence between
$N^{-1}\ell_{\pole}(\chistar)$ and $N^{-5/2}\ell_{\pole,\pole}(\chival),$
which, based on the argument of the distributional equivalence of
$\ell_{\pole,\pole}(\chival),$ and $\ell_{\pole,\pole}(\chistar),$ and the
asymptotic Gaussianity of the variables corresponds to bounding the covariance
of $N^{-1}\ell_{\pole}(\chistar)$ and $N^{-1}\ell_{\pole,\pole}(\chistar).$
\begin{proposition}
\label{scorefishercov} The restandardized score in $\pole$ and the
restandardized observed Fisher information in $\pole$ evaluated at $\chistar$
satisfy ${\mathrm{cov}}\left\{k_{N,1}(\chistar),
\left[W_N(\chistar)\right]_{11}\right\} = \o(1)$. We can thus deduce that as
$\left[W_N(\chival)\right]_{11}\overset{\cal L}{=}
\left[W_N(\chistar)\right]_{11}$ and asymptotic Gaussianity is valid,
asymptotic independence follows.
\end{proposition}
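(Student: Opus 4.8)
The plan is to work throughout with the representations obtained in Propositions \ref{dist8} and \ref{Fisherinfoconv}, namely
\[
k_{N,1}(\chistar)=-\frac{1}{\sqrt{N}}\sum_{j=J_1}^{J_2}\Istddot_j+\o(1),
\qquad
[\bm{W}_N(\chistar)]_{11}=-\frac{1}{\sqrt{N}}\sum_{j=J_1}^{J_2}\Istdddot_j+\o(1).
\]
Since both centring constants are $\o(1)$, the target reduces to
\[
\mathrm{cov}\left\{k_{N,1}(\chistar),[\bm{W}_N(\chistar)]_{11}\right\}
=\frac{1}{N}\sum_{j=J_1}^{J_2}\sum_{k=J_1}^{J_2}\mathrm{cov}\left\{\Istddot_j,\Istdddot_k\right\}+\o(1).
\]
Each summand is the covariance of two quadratic forms in the common zero-mean Gaussian vectors $(\standA_j,\standB_j,\standC_j,\standD_j,\standE_j,\standF_j)^\transpose$ and the analogous vector at $k$, so I would evaluate it by Isserlis' theorem exactly as in the proofs of Propositions \ref{covarianceofder} and \ref{covariancematrix}, writing $\mathrm{cov}\{\Istddot_j,\Istdddot_k\}$ through the trace form $2\,\mathrm{tr}(M_1\Sigma M_2\Sigma)$ of that identity. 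The whole difficulty is that after division by $N$ these contributions must accumulate to $\o(1)$; in particular the diagonal terms $j=k$ must be individually $\o(1)$, for otherwise $N^{-1}\sum_j\mathrm{cov}\{\Istddot_j,\Istdddot_j\}$ would be $\O(1)$.

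The crux is therefore the diagonal cancellation, which I would obtain by a parity argument. Recall $\Istddot_j=4\pi(\standB_j\standC_j-\standA_j\standD_j)$ (equation (\ref{derri})) and, as in the proof of Proposition \ref{Fisherinfoconv}, $\Istdddot_j=8\pi^2(\standCs_j+\standDs_j-\standA_j\standE_j-\standB_j\standF_j)$. Consider the sign change $\sigma$ that reverses every sine-type variate, $(\standA_j,\standB_j,\standC_j,\standD_j,\standE_j,\standF_j)\mapsto(\standA_j,-\standB_j,\standC_j,-\standD_j,\standE_j,-\standF_j)$: under $\sigma$ the first form is odd and the second even. The leading-order covariance matrix of these variates (equation (\ref{kata}) with the $o(1)$ entries discarded) is $\sigma$-invariant apart from the cosine--sine cross-covariances, which are exactly the $\dot{B}_{\lambda_j,N}(\pole,\delta)$-type entries of order $\O(j^{-1})$. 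Writing $\Sigma_j=\Sigma_j^{(0)}+\Delta_j$ with $\Sigma_j^{(0)}$ $\sigma$-invariant and $\Delta_j=\O(j^{-1})$, the $\Sigma_j^{(0)}$ part of the trace form vanishes identically, because under a $\sigma$-symmetric law a $\sigma$-odd and a $\sigma$-even quadratic form are uncorrelated. Only the $\Delta_j$ correction survives, giving
\[
\mathrm{cov}\left\{\Istddot_j,\Istdddot_j\right\}=\O\!\left(\frac{\log j}{j}\right),
\]
whence $N^{-1}\sum_j\mathrm{cov}\{\Istddot_j,\Istdddot_j\}=\O(N^{-1}\log^2 N)=\o(1)$; the single on-pole term $j=0$ is treated separately and contributes only $\O(N^{-1})$ since $\mathrm{E}\{\Istddot_0\}=0$ by Proposition \ref{firstord}.

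For the off-diagonal terms $j\neq k$ I would reuse the weak-dependence machinery already developed. By Isserlis each such covariance is a product of two cross-covariances of the $j$th and $k$th variates, and by Proposition \ref{covarianceofder} together with the bounds of \cite{Robinson1995a} these satisfy, for $\log(N)<k<j$, $V_{\lambda_j,\lambda_k,N}(\pole,\delta)=\O(\log(j)/k)$, $\dot{V}_{\lambda_j,\lambda_k,N}(\pole,\delta)=\O(\log(j)/k^2)$ and $W_{\lambda_j,\lambda_k,N}(\pole,\delta)=\O(\log(j)/k)$, so that each off-diagonal covariance is at most $\O(k^{-2}\log^2 j)$. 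Summing over the two halves of the grid bounded below by $\log(N)$, and dividing by $N$, I would control the resulting double sum by the Minkowski-type integral estimate used for $R_2$ in section A.2, together with the observation (as in Proposition \ref{dist8}) that the indices $|j|\le\log(N)$ are negligible after the $N^{-1/2}$ standardisation. This yields $N^{-1}\sum_{j\neq k}\mathrm{cov}\{\Istddot_j,\Istdddot_k\}=\o(1)$, and combining with the diagonal bound gives $\mathrm{cov}\{k_{N,1}(\chistar),[\bm{W}_N(\chistar)]_{11}\}=\o(1)$.

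The main obstacle is precisely this diagonal cancellation: without the parity argument the per-term diagonal covariance is $\O(1)$ and the normalised sum fails to vanish, so the symmetry between the $\sigma$-odd first-derivative form and the $\sigma$-even second-derivative form does the essential work, the off-diagonal bookkeeping being routine given the established estimates. To pass from vanishing covariance to the stated asymptotic independence, I would finally note that $(k_{N,1}(\chistar),[\bm{W}_N(\chistar)]_{11})$ is asymptotically jointly Gaussian: every linear combination is, up to $\o(1)$, an $N^{-1/2}$-normalised sum of weakly dependent quadratic forms to which the characteristic-function argument of Proposition \ref{dist8} applies verbatim (Cram\'er--Wold). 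A jointly Gaussian pair with asymptotically zero covariance is asymptotically independent, and since $[\bm{W}_N(\chival)]_{11}\overset{\cal L}{=}[\bm{W}_N(\chistar)]_{11}$ for $\chival\in\phi_N^s$ by Proposition \ref{Fisherinfoconv}, the independence transfers to the neighbourhood, as required.
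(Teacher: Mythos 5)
Your proof is correct, and its skeleton coincides with the paper's: both reduce the claim to showing $N^{-1}\sum_{j,k}\mathrm{cov}\{\Istddot_j,\Istdddot_k\}=\o(1)$, both dispose of the off-diagonal terms with the same weak-dependence bounds (covariances of order $\O\{k^{-2}\log^2(j)\}$ for $\log(N)<k<j$, summed via the Minkowski-type estimate), and both finish by invoking joint asymptotic Gaussianity --- your Cram\'er--Wold remark merely makes explicit what the paper leaves implicit. The genuine difference is the diagonal step $j=k$. The paper attacks $\tilde{c}_{j,j}=\mathrm{cov}\{\Istddot_j,\Istdddot_j\}$ head-on with Isserlis's theorem, producing a long explicit expression in $B_{\lambda_j,N}$, $\dot{B}_{\lambda_j,N}$, $\ddot{B}_{\lambda_j,N}$, $\dot{C}_{\lambda_j,N}$, $\ddot{C}^{(2)}_{\lambda_j,N}$ and $\Re(K_{j,j})$, and then reads off that every term carries at least one odd-integral factor of order $\O(j^{-1})$. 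You obtain the same conclusion structurally: $\Istddot_j$ is odd and $\Istdddot_j$ is even under the sign flip of the sine-type variates $(\standB_j,\standD_j,\standF_j)$, so the flip-invariant part of the Gaussian law contributes exactly zero covariance, and only the symmetry-breaking cosine--sine cross-covariances (the $\dot{B}$-type entries) survive, giving $\O(j^{-1}\log j)$ per term. This is cleaner, and it explains \emph{why} the cancellation occurs rather than exhibiting it term by term; its one outstanding obligation, which you assert but do not verify, is that \emph{all} cosine--sine cross-moments of the six-variable vector --- including those involving the $t^2$-weighted transforms $\standE_j,\standF_j$, which lie outside the $4\times 4$ block displayed in equation (\ref{kata}) --- are genuinely of this decaying odd-integral type. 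That fact is true (it follows from the same $u\mapsto -u$ change-of-variable argument the paper uses for ${\mathrm{E}}\{A_jD_j\}$, and is implicit in the covariance computations inside the proof of Proposition \ref{Fisherinfoconv}), so your argument closes; had any such cross-moment been $\O(1)$, your parity decomposition would fail at exactly the point where the paper's explicit computation would also have exposed a non-vanishing term.
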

\begin{proof}
Due to previous arguments of large sample distributional equivalence, and due
to the asymptotic Gaussianity, we need only consider the covariance of
$Y_{1,N}(\chistar)$ and $Y_{2,N}(\chistar),$ and thus start by considering the
covariance of the elements that make up these objects. We note that
\begin{eqnarray*}
\tilde{c}_{k,j}&=&\mathrm{cov}\left\{\dot{I}^{{\tiny (f,N)}}_k,\ddot{I}^{(f,N)}_j\right\}\\
&=&(4\pi)(8\pi^2){\mathrm{cov}}\left\{\standB_k \standC_k-
\standA_k \standD_k,\standDs_j+ \standCs_j-\standA_j \standE_j -\standB_j \standF_j
\right\}.
\end{eqnarray*}
We consider the $j=k$ terms and show that their contribution decays suitably in
$j$: the cross terms will be bounded like in previous arguments, relying of the
decay for $\log(N)<k<j$. Then combining the results of the previous section
with Isserlis's theorem we find that (up to $\o(1)$):
\begin{eqnarray*}
\tilde{c}_{j,j}&=&\frac{1}{2}\dot{B}_{\lambda_j,N}(\pole,\delta)\Re(K_{j,j})+
4\pi^3\dot{B}_{\lambda_j,N}(\pole,\delta)B_{\lambda_j,N}(\pole,\delta)+\pi^2
\left\{\dot{B}_{\lambda_j,N}(\pole,\delta)-4\delta\ddot{C}^{(2)}_{\lambda_j,N}(\pole,\delta)
/\pi\right\}\\
&&+\frac{1}{2}\left\{2\pi^2B_{\lambda_j,N}(\pole,\delta)-\ddot{B}_{\lambda_j,N}(\pole,\delta)
-\frac{1}{2}\dot{C}_{\lambda_j,N}(\pole,\delta)\right\}\dot{B}_{\lambda_j,N}(\pole,\delta)\\
&&+\frac{1}{2}\dot{B}_{\lambda_j,N}(\pole,\delta)\Re(K_{j,j})
+\pi^2 B_{\lambda_j,N}(\pole,)
\left\{-\frac{1}{2}\dot{B}_{\lambda_j,N}(\pole,\delta)+2\delta\ddot{C}^{(2)}_{\lambda_j,N}
(\pole,\delta)/\pi\right\}
\\&&+ \frac{1}{2}\left\{2\pi^2B_{\lambda_j,N}(\pole,\delta)-\ddot{B}_{\lambda_j,N}(\pole,\delta)
\dot{C}_{\lambda_j,N}(\pole,\delta)\right\}-\frac{1}{2}\dot{B}_{\lambda_j,N}(\pole,\delta)
+4\pi^3\dot{B}_{\lambda_j,N}(\pole,\delta)B_{\lambda_j,N}(\pole,\delta).
\end{eqnarray*}
Thus we may deduce $\tilde{c}_{j,j}=\O(j^{-1})+\o(1)$.  The cross terms, {\em
i.e.} $\tilde{c}_{k,j},$ may be bounded in a standard fashion using the same
argument, so that
\begin{equation}
{\mathrm{cov}}\left\{k_{N,1}(\chistar), \left[W_{N}(\chistar)\right]_{11}\right\}=\o(1).
\end{equation}
We can thus deduce the asymptotic independence of variables $k_{N,1}(\chistar)$
and $\left[W_{N}(\chival)\right]_{11}.$
\end{proof}

\begin{proposition}
\label{mledistt}
The large sample distribution of the MLE of $\pole$ tends to:
\begin{equation}
N(\polehat-\polestar)=\frac{N^{5/2}}{-\ell_{\pole,\pole}(\chiprime)}N^{-3/2}\ell_{\pole}(\chistar)
\rightarrow \frac{\sqrt{5}}{2\pi\sqrt{2}}C,
\end{equation}
where $C\sim Cauchy.$
\end{proposition}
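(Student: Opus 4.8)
The plan is to begin from the first-order score expansion displayed just before equation (\ref{defofBN}), $\obsfisherN(\chiprime)(\chihat-\chistar)=\ldot(\chistar)-\ldot(\chihat)$, and to use that $\ldot(\chihat)=\bm{0}$ at the MLE (working, as justified by Proposition \ref{zerothord} and Note II, with the version of the log-likelihood that ignores the $\pole$-dependence of $J_1,J_2$, whose $O(1)$ jumps are negligible after division by $N$). First I would extract the $\pole$-row of this vector identity and divide by $N^{3/2}$, giving
\[
\frac{-\ell_{\pole,\pole}(\chiprime)}{N^{5/2}}\,N(\polehat-\polestar)-\frac{\ell_{\pole,\delta}(\chiprime)}{N^{3/2}}(\deltahat-\deltastar)=\frac{\ell_{\pole}(\chistar)}{N^{3/2}}.
\]
Because the off-diagonal observed information is $O_p(N)$ (of the order given by Theorem \ref{Fisher}) and Theorem \ref{GiraitisLike} gives $\deltahat-\deltastar=O_p(N^{-1/2})$, the middle term is $O_p(N^{-1})=o_p(1)$, so the two parameters decouple and
\[
N(\polehat-\polestar)=\frac{k_{N,1}(\chistar)}{[\bm{W}_N(\chiprime)]_{11}}+o_p(1),
\]
which is precisely the quantity in the statement, with numerator $k_{N,1}=N^{-3/2}\ell_{\pole}(\chistar)$ and denominator $[\bm{W}_N]_{11}=-N^{-5/2}\ell_{\pole,\pole}$.

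Next I would assemble the limit laws already established. Proposition \ref{dist8} supplies $k_{N,1}(\chistar)\overset{\cal L}{\to}Z_4\sim\mathcal{N}(0,\pi^2/3)$ for the numerator, and Proposition \ref{Fisherinfoconv} supplies $[\bm{W}_N(\chistar)]_{11}\overset{\cal L}{\to}Z_5\sim\mathcal{N}(0,8\pi^4/15)$ for the denominator. To replace the argument $\chistar$ by the intermediate point $\chiprime$ in the denominator I would appeal to equation (\ref{x2}), which shows the limit of $[\bm{W}_N(\chival)]_{11}$ is unchanged throughout the $\bm{B}_N^{-1/2}$-neighbourhood $\phi_N^s$; consistency (Theorem \ref{GiraitisLike}) places $\chiprime$, lying between $\chihat$ and $\chistar$, inside this neighbourhood. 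The decisive structural input is Proposition \ref{scorefishercov}: the vanishing covariance ${\mathrm{cov}}\{k_{N,1},[\bm{W}_N]_{11}\}=o(1)$, combined with the asymptotic joint Gaussianity of these two quadratic-form statistics, upgrades to joint convergence $(k_{N,1},[\bm{W}_N(\chiprime)]_{11})\overset{\cal L}{\to}(Z_4,Z_5)$ with $Z_4$ and $Z_5$ independent.

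With joint convergence secured I would apply the continuous mapping theorem to the ratio map $(x,y)\mapsto x/y$, which is continuous almost surely under the limit law since $Z_5$ has a density and hence $P(Z_5=0)=0$; this yields $N(\polehat-\polestar)\overset{\cal L}{\to}Z_4/Z_5$. The proof then closes with the elementary identity that the ratio of two independent centred Gaussians $\mathcal{N}(0,\sigma_4^2)$ and $\mathcal{N}(0,\sigma_5^2)$ equals $(\sigma_4/\sigma_5)\,C$ with $C$ standard Cauchy, and the evaluation
\[
\frac{\sigma_4}{\sigma_5}=\frac{\pi/\sqrt{3}}{2\pi^2\sqrt{2/15}}=\frac{\sqrt{5}}{2\pi\sqrt{2}},
\]
giving the stated scale constant.

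The hard part will be the passage from the two marginal limits to genuine joint convergence of the pair, rather than the routine algebra. This is the essential non-regular feature: the Cauchy arises exactly because one divides by a centred random denominator whose limit has mass near zero, so marginal convergence is insufficient and no standard delta-method argument applies. It is resolved only through Proposition \ref{scorefishercov}, which forces the correlation to zero, together with the asymptotic Gaussianity of the score and observed information, so that zero correlation becomes asymptotic independence. A secondary, more technical step is the uniform control of the denominator at the intermediate argument $\chiprime$, which rests on the neighbourhood-uniformity recorded in equation (\ref{x2}).
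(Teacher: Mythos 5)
Your proposal is correct and follows essentially the same route as the paper's own proof: the paper also writes $N(\polehat-\polestar)=k_{N,1}(\chistar)/[\bm{W}_N(\chival)]_{11}$ and combines Propositions \ref{dist8}, \ref{Fisherinfoconv} and \ref{scorefishercov} (marginal Gaussian limits, stability of the denominator over the $\bm{B}_N^{-1/2}$-neighbourhood, and vanishing covariance upgraded to asymptotic independence) before reducing the ratio of independent centred Gaussians to the scaled Cauchy with constant $\sqrt{5}/(2\pi\sqrt{2})$. The only difference is that you spell out the decoupling of the $\delta$ cross-term in the Taylor expansion, a step the paper leaves implicit in its setup of the standardized score and observed information.
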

\begin{proof}
To show this result we can simply use Propositions \ref{dist8}, \ref{Fisherinfoconv}
and \ref{scorefishercov}.

\vspace{0.1 in}

\noindent \emph{Note on Usage of Asymptotic Form}: We have
\begin{equation}
\label{polehat2}
C_N=N(\polehat-\polestar)=k_{N,1}(\chistar)/\left[W_N(\chival)\right]_{11}.
\end{equation}
We note that from equations (\ref{largeNscore}) and (\ref{largeNfish}), using proposition \ref{scorefishercov} that
\begin{eqnarray}
\frac{2^{3/2}\pi}{\sqrt{5}}C_N&=&
\frac{2^{3/2}\pi}{\sqrt{5}}
\frac{Z_4}{Z_5}+\o(1)=
\frac{Z_4/\sqrt{\pi^2/3}}{Z_5/\sqrt{8\pi^4/15}}\sim Cauchy.
\end{eqnarray}
Define $c_1=\tan(\pi(-\frac{1}{2}+0.025))$ and $c_2=\tan(\pi(-\frac{1}{2}+0.975)),$
then
\begin{eqnarray*}
P\left(c_1\le \frac{2^{3/2}\pi}{\sqrt{5}}C_N \le
c_2\right) & = &
P\left(\frac{\sqrt{5}}{2\pi\sqrt{2}}c_1\le C_N\le
\frac{\sqrt{5}}{2\pi\sqrt{2}}c_2\right) = 0.95 \\
\therefore P\left(\widehat{\pole}+\frac{\sqrt{5}}{2N\pi\sqrt{2}}c_1\le \pole\le
\widehat{\pole}+\frac{\sqrt{5}}{2N\pi\sqrt{2}}c_2\right)&=&0.95.
\end{eqnarray*}
Thus a $95\%$ CI is given for $\pole$ by
$(\widehat{\pole}-3.20/N,\widehat{\pole}+ 3.20/N).$ This establishes the large
sample theory for $\widehat{\pole}.$ However, the effect on MLE of low $j$
contributions decays slowly, and so we provide an additional approximation to
the distribution, based on equations (\ref{modNscore}) as well as
(\ref{modNfish}).

\vspace{0.1 in}

\noindent\emph{Note on Usage of Large Sample Approximation Form} For finite
$N,$ as already discussed, it may be more appropriate to approximate the
distribution of the two random variables using $K_N \sim
\mathcal{N}(\mu_1,\sigma^2_1)$ and $\tilde{W}_{N,11} \sim
\mathcal{N}(\mu_2,\sigma^2_2)$ where
\begin{eqnarray*}
\mu_1=\frac{1}{\sqrt{N}}\sum_{j=J_1}^{J_2} \dot{B}_{\lambda_j,N}(\pole,\delta)= \o(1) && \sigma^2_1=
\frac{1}{N}\sum_{j=J_1}^{J_2}\left\{\frac{1}{2}\delta^2\dot{B}_{\lambda_j,N}^2(\pole,\delta)+
B_{\lambda_j,N}(\pole,\delta)\dot{C}_{\lambda_j,N}(\pole,\delta) \right\} + \o(1)\\
\mu_2=\frac{1}{\sqrt{N}}\sum_{j=J_1}^{J_2}
\ddot{B}_{\lambda_j,N}(\pole,\delta)+\o(1) && \sigma^2_2=
\frac{1}{N}\sum_{j=J_1}^{J_2}\tilde{\sigma}_j^2+\o(1),
\end{eqnarray*}
where $\tilde{\sigma}_j^2$ is given by equation (\ref{varofIdd}). With these
quantities, we have $C_1=K_N/\tilde{W}_{N,11}$, $C_2=\tilde{W}_{N,11}$,
$\tilde{W}_{N,11}=C_2$ and $K_N=C_1 C_2$, and find a confidence interval for
$C_1,$ $\mathrm{Pr}\left(c_{11}<C_1<c_{12}\right)=1-\alpha$, assuming that
asymptotic independence of $K_N$ and $\tilde{W}_{N,11}$ is approximately
attained, we have by transformation techniques
\begin{eqnarray*}
f_{C_1,C_2}(c_1,c_2)=
\frac{1}{2\pi \sigma_1 \sigma_2}e^{-\frac{1}{2}
\left\{\frac{c_1^2 c_2^2}{\sigma^2_1}+\frac{(c_2-\mu_2)^2}{\sigma^2_2}\right\}}
\left|c_2\right| &\therefore&
f_{C_1}(c_1) = \int_{-\infty}^{\infty}
\frac{1}{2\pi \sigma_1 \sigma_2}e^{-\frac{1}{2}
\left\{\frac{c_1^2 c_2^2}{\sigma^2_1}+\frac{(c_2-\mu_2)^2}{\sigma^2_2}\right\}}\left|c_2\right|\;dc_2\\
\int_{-\infty}^{c_{11}}f_{C_1}(c_1)\;dc = \alpha/2&&
\int_{-\infty}^{c_{12}}f_{C_1}(c_1)\;dc=1-\alpha/2.
\end{eqnarray*}
Thus, once $\mu_2,$ $\sigma_1^2$ and $\sigma_2^2$ have been determined by
calculating the integrals we can derive the approximation to the distribution
of the estimator of $\pole.$ In fact, with $u(c_1)=\sigma_1^2+c_1^2\sigma^2_2$,
\begin{eqnarray}
\nonumber
f_{C_1}(c_1)&=&\int_{-\infty}^{\infty}
\frac{1}{2\pi \sigma_1 \sigma_2}e^{-\frac{1}{2}
\left\{\frac{c_1^2 c_2^2}{\sigma^2_1}+\frac{(c_2-\mu_2)^2}{\sigma^2_2}\right\}}\left|c_2\right|\;dc_2\\
\nonumber
&=&\frac{1}{\sqrt{2\pi}}u(c_1)^{-3/2}
\left[\frac{\sqrt{2u(c_1)} \sigma_1 \sigma_2}{\sqrt{\pi }}e^{-
\frac{\mu_2^2}{2\sigma_2^2}}+
\sigma_1^2\mu_2 e^{-\frac{\mu_2^2c_1^2}{2u(c_1)}}
{\mathrm{erf}}\left\{
\mu_2\frac{\sigma_1}{\sqrt{2}\sigma_2\sqrt{u(c_1)}}\right\}\right],\;c_1\in{\mathbb{R}}
\label{c1pdf}\\
&\longrightarrow&
\frac{\sigma_1 \sigma_2}{\pi}u(c_1)^{-1},\;c_1\in{\mathbb{R}}
\end{eqnarray}
as $\mu_2\rightarrow 0$, and the distribution becomes a scaled Cauchy
distribution.
\begin{table}
\caption{The quantities necessary to approximate the distribution
of $N\polehat$ using $C_1.$ \label{table1}}
\begin{tabular}{|c|r|r|r|r|r|} \hline
N & $\delta$ & $95 \%$ interval & $\mu_2$ & $\sigma_1^2$ & $\sigma_2^2$\\
\hline
1024 & 0.30 &$\widehat{\xi}\pm 3.17 N^{-1}$ &  0.7778 & 3.3413 & 52.9845
\\
\hline
1024 & 0.40 & $\widehat{\xi}\pm 2.90 N^{-1}$&  3.2203 & 3.4503 & 54.9996\\
\hline
1024 & 0.45 &$\widehat{\xi}\pm  1.43 N^{-1}$ & 9.6724 & 3.6872 &  56.5742\\
\hline
2048 & 0.30 &$\widehat{\xi}\pm  3.18 N^{-1}$ &  0.5606 & 3.3180 & 52.5227
\\  \hline
2048 & 0.40 &$\widehat{\xi}\pm  3.00 N^{-1}$ &  2.3937 & 3.3779 &  53.6337
\\ \hline
2048 & 0.45 &$\widehat{\xi}\pm 1.96 N^{-1}$ & 7.3754 & 3.5085&   54.6138\\
\hline
4096 & 0.30 &$\widehat{\xi}\pm  3.19 N^{-1}$ &  0.4021 & 3.3051 &  52.2645
\\ \hline
4096 & 0.40 &$\widehat{\xi}\pm 3.10  N^{-1}$ &  1.7646 & 3.3377 &  52.8721  \\  \hline
4096 & 0.45 &$\widehat{\xi}\pm 2.41 N^{-1}$ &  5.5698 & 3.4091 &   53.4590\\
\hline
8192 & 0.30 &$\widehat{\xi}\pm 3.20  N^{-1}$ &  0.2874 & 3.2981 &  52.1217
\\ \hline
8192 & 0.40 &$\widehat{\xi}\pm 3.14 N^{-1}$ &  1.2921 & 3.3157 &  52.4517  \\  \hline
8192 & 0.45 &$\widehat{\xi}\pm 2.41 N^{-1}$ &  4.1725 & 3.3544 &   52.7938\\ \hline
$\infty$& $\delta>0$ & $\widehat{\xi}\pm 3.20 N^{-1}$& 0 &  3.2899 & 51.9515\\ \hline
\end{tabular}
\end{table}
Using this approximation, we may derive CIs for $\pole$ for a given value
$\delta$ by determining $c_{11}$ and $c_{12}$ for that value of $\delta$ from
\begin{equation}
\nonumber
P\left(c_{11}<N\left(\widehat{\pole}-\pole\right)<c_{12}\right)=
P\left(\widehat{\pole}- c_{12}/N<\pole<\widehat{\pole}- c_{11}/N\right) = 1-\alpha.
\label{poleci}
\end{equation}

\vspace{0.1 in}

\noindent\emph{Long Memory Parameter dependence of the CI's} The $\delta$
dependence is implicit in the distribution of $C_1$ in equation (\ref{c1pdf}),
as $\mu_2,$ $\sigma^2_1$ and $\sigma^2_2$ depend on $\delta.$ Thus a
$(1-\alpha)$ CI is simply given by $\widehat{\pole} \pm c_{12}/N$. For a real
data set, we do not know the true value of $\delta,$ but note that
$\widehat{\delta} = \delta^{\ast} + Z_2/\sqrt{N \fisherNlme_{\delta \delta}}$
where $Z_2\sim \mathcal{N}(0,1),$ from equation (\ref{deltalim}), as the same
central limit argument will be valid for the score evaluated at $\delta$ lying
between $\delta^{\ast}$ and $\widehat{\delta}.$ We note that $c_{11}$ and
$c_{12}$ are smooth functions of $\delta,$. Making the dependence on $\delta$
explicit we find
\begin{eqnarray}
\left|c_{1k}(\delta^{\ast})-c_{1k}(\widehat{\delta)}\right|=N^{-1/2}
\left|c^{\prime}_{1k}(\delta^{\ast})\right|\left|Z_2\right|,
\end{eqnarray}
and so as $N^{-1/2}
\left|c^{\prime}_{1k}(\delta^{\ast})\right|\left|Z_2\right|\overset{P}{\rightarrow}0$
we can use equation (\ref{poleci}) with $c_{11}$ and $c_{12}$ calculated at
$\delta=\widehat{\delta}.$ For our simulation study, to reduce the numerical
burden of the procedure, we have calculated the CIs at $\delta^{\ast}.$ This
would not be the approach in a real problem, but given the reduced
computational cost of a single calculation of $c_{11}$ and $c_{12}$ for real
examples, this is not an issue.
\end{proof}

Finally, we establish that the score in $\delta$ and $\pole$ are uncorrelated,
as the off-diagonal terms of the standardized observed Fisher information
converge to zero.
\begin{proposition}
\label{crosscovariance} We have that ${\mathrm{cov}}\left\{k_{N,1}(\chistar),
k_{N,2}(\chistar)\right\}=\o(1)$, and thus we can note that the distributional
results follow.
\end{proposition}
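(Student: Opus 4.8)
The plan is to reduce both standardized scores to their leading stochastic parts and then to bound a double sum of elementary fourth--order moments. From equation~(\ref{normscore2}) we have $k_{N,1}(\chistar) = -N^{-1/2}\sum_{j=J_1}^{J_2}\Istddot_j + \o(1)$, while the score in $\delta$ computed in Section~\ref{FirstOrder} gives $k_{N,2}(\chistar) = (\fisherNlme_{\delta,\delta}N)^{-1/2}\sum_{k=J_1}^{J_2} R_k^{(1)}\{1-\Istd_k\} + \o(1)$. Consequently
\[
{\mathrm{cov}}\{k_{N,1}(\chistar),k_{N,2}(\chistar)\} = \frac{\fisherNlme_{\delta,\delta}^{-1/2}}{N}\sum_{j}\sum_{k} R_k^{(1)}\,{\mathrm{cov}}\{\Istddot_j,\Istd_k\} + \o(1),
\]
so it suffices to show that the double sum is $\o(N)$. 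I would isolate the diagonal $j=k$ contribution (treating the single $j=0$ term separately, since it contributes only $\O(1)$ after the $N^{-1}$ scaling) and bound the off--diagonal part using the decay rates already established.

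For the diagonal term I would use that $\Istddot_j = 4\pi(\standB_j\standC_j - \standA_j\standD_j)$ and $\Istd_j = \standA_j^2+\standB_j^2$ are quadratic forms in the jointly Gaussian vector $(\standA_j,\standB_j,\standC_j,\standD_j)$, whose covariance $\bm{\Omega}_j^{(f,N)}$ is recorded in~(\ref{kata}) and~(\ref{dista}). Applying Isserlis's theorem (\cite{isserlis}) and subtracting the product of means yields
\[
{\mathrm{cov}}\{\Istd_j,\Istddot_j\} = 4\pi\left\{ {\mathrm{E}}(\standB_j\standC_j) - {\mathrm{E}}(\standA_j\standD_j) \right\} + \o(1),
\]
the crucial point being that every term proportional to the leading constants $\tfrac12,\tfrac14,\tfrac16$ cancels. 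This cancellation is forced by parity: $\Istd_j$ is an \emph{even} functional built from the squared cosine/sine transforms, whereas $\Istddot_j$ is the \emph{antisymmetric} combination $\standB_j\standC_j-\standA_j\standD_j$, and the vanishing of the $(\standA_j,\standB_j)$ and $(\standC_j,\standD_j)$ entries of $\bm{\Omega}_j^{(f,N)}$ annihilates the surviving cross products. The residual is governed entirely by the imaginary off--diagonal covariances ${\mathrm{E}}(\standB_j\standC_j)$ and ${\mathrm{E}}(\standA_j\standD_j)$, both proportional to $\dot{B}_{\lambda_j,N}(\pole,\delta)=\O(j^{-1})$, so that ${\mathrm{cov}}\{\Istd_j,\Istddot_j\}=\O(j^{-1})$.

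It then remains to assemble the bound. Since $R_k^{(1)}=\O(\log|N/k|)$, the diagonal contribution is $\sum_j R_j^{(1)}\O(j^{-1}) = \O\!\left(\sum_j \log(N/j)/j\right) = \O(\log^2 N)$, and the off--diagonal terms ${\mathrm{cov}}\{\Istddot_j,\Istd_k\}$, $j\neq k$, are controlled by the cross--covariances $V_{\lambda_j,\lambda_k,N}$ and $\dot V_{\lambda_j,\lambda_k,N}$ of~(\ref{orderofthing}) together with the estimate $V_{\lambda_j,\lambda_k,N}=\O\{k^{-1}\log(j)\}$ for $\log(N)<k<j$; the Minkowski/summation argument used in the covariance--bounding step of Appendix~A.2 and again in Proposition~\ref{dist8} shows that this part is also $\o(N)$. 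Dividing by $N$ gives ${\mathrm{cov}}\{k_{N,1}(\chistar),k_{N,2}(\chistar)\}=\O(N^{-1}\log^2 N)+\o(1)=\o(1)$. Because $k_{N,1}(\chistar)$ and $k_{N,2}(\chistar)$ are asymptotically jointly Gaussian (Proposition~\ref{dist8} and the central limit argument following~(\ref{scoreindelta}) in Section~\ref{distscore}), zero asymptotic correlation upgrades to asymptotic independence, and the distributional results follow. The main obstacle is the diagonal step: verifying that the $\O(1)$ leading terms of ${\mathrm{cov}}\{\Istd_j,\Istddot_j\}$ cancel exactly, leaving only a summable $\O(j^{-1})$ remainder, since it is precisely this cancellation that keeps the $\log$--weighted diagonal sum at $\O(\log^2 N)$ rather than at $\O(N\log N)$, which would fail.
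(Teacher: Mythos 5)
Your proposal is correct and follows essentially the same route as the paper's proof: reduce the covariance to $N^{-1}\fisherNlme_{\delta,\delta}^{-1/2}\sum_j\sum_k R_k^{(1)}\,{\mathrm{cov}}\{\Istddot_j,\Istd_k\}$, apply Isserlis's theorem on the diagonal (where the vanishing $(\standA_j,\standB_j)$ covariance kills the $\O(1)$ terms, leaving ${\mathrm{cov}}\{\Istd_j,\Istddot_j\}=\O(j^{-1})$ through $\dot{B}_{\lambda_j,N}$), bound the cross terms by the established decay rates, and pass from zero asymptotic correlation to independence via joint Gaussianity. The only differences are cosmetic: you absorb the $S_j^{(1)}\{1-\Istd_j\}$ contribution into the $\o(1)$ remainder of equation (\ref{normscore2}) (legitimate, since that remainder is $\o(1)$ in mean square) where the paper bounds its double sum explicitly, and your diagonal constant omits a benign factor $2\,{\mathrm{E}}\{\standB_j^2\}=B_{\lambda_j,N}=1+\o(1)$ that does not affect the $\O(j^{-1})$ order.
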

\begin{proof} Note that $k_{\pole,N}(\chistar) =  N^{-3/2}
{l}_{\pole}(\chistar)$ and $k_{\delta,N}(\chistar) =
\fisherNlme_{\delta,\delta}^{-1/2} N^{-1/2} {l}_{\delta}(\chistar)$. We thus
consider
$N^{-2}\mathrm{cov}\left\{{l}_{\pole}(\chistar),{l}_{\delta}(\chistar)\right\}$.
We have
\[
{\mathrm{cov}}
\left\{k_{\pole,N}(\chistar),k_{\delta,N}(\chistar) \right\}
=\frac{1}{\sqrt{\fisherNlme_{\delta,\delta}}N^{2}}{\mathrm{cov}}\left\{\sum_j \left[S_j^{(1)}\left\{1-\Istd_j\right\}-
\Istddot_j\right],\sum_j R_j^{(1)}\left\{1-\Istd_j\right\}
\right\},
\]
plus $\o(1)$ terms.
Thus it follows
\begin{eqnarray*}
{\mathrm{cov}}
\left\{k_{\pole,N}(\chistar),k_{\delta,N}(\chistar) \right\}
&=&\frac{1}{\sqrt{\fisherNlme_{\delta,\delta}} N^{2}}{\mathrm{cov}}\left\{-\sum_j S_j^{(1)}\Istd_j-\sum_j\Istddot_k,-\sum_j R_k^{(1)}\Istd_k
\right\}+\o(1)\\
&=&\frac{1}{\sqrt{\fisherNlme_{\delta,\delta}} N^{2}}\left[
\sum_j\sum_k {\mathrm{cov}}\left\{S_j^{(1)}\Istd_j, R_k^{(1)}\Istd_k
\right\}\right.  \\
&&\left.+\sum_j\sum_k {\mathrm{cov}}\left\{\Istddot_j, R_k^{(1)}\Istd_k
\right\}
\right]+\o(1)\\
&=&\frac{1}{\sqrt{\fisherNlme_{\delta,\delta}} N^{2}}\left[
2\sum_j\sum_{k\le j} S_j^{(1)} R_k^{(1)}\O\left\{k^{-2}\log^2(j)\right\}
\right.  \\
&&\left.+\sum_j\sum_k R_k^{(1)}  {\mathrm{cov}}\left\{\Istddot_j, \Istd_k
\right\}
\right]+\o(1)\\
&=&\frac{1}{\sqrt{\fisherNlme_{\delta,\delta}} N^{2}}\sum_j\sum_k R_k^{(1)}  {\mathrm{cov}}\left\{\Istddot_j, \Istd_k
\right\}+\o(1)
\end{eqnarray*}
Note that using Isserlis's theorem \citep{isserlis} we have:
\begin{eqnarray*}
{\mathrm{cov}}\left\{\Istd_j,\Istddot_k\right\}&=&4\pi
{\mathrm{cov}}\left\{\standAs_{j}+\standBs_{j},\standB_{k}
\standC_{k}-\standA_{k}\standD_{k} \right\}.
\end{eqnarray*}
For $j=k$ we have
\begin{eqnarray*}
{\mathrm{cov}}\left\{\Istd_j,\Istddot_j\right\}&=&4\pi
\frac{\dot{B}_{\lambda_j,N}(\pole,\delta)}{4\pi} \frac{1}{2}B_{\lambda_j,N}(\pole,\delta)
+\o(1)=\O(j^{-1})+\o(1)\rightarrow 0,
\end{eqnarray*}
for increasing $j$ and $N$. The cross-terms ${\mathrm{cov}}\left\{I_j^{(f,N)},\dot{I}_k^{(f,N)}\right\}$
may be bounded in the usual fashion.
\[
{\mathrm{var}}\{I_j^{(f,N)}\}=\O(1) \qquad \qquad {\mathrm{var}}\{\dot{I}_j^{(f,N)}\}=\O(1).
\]
Combining these results we find that
\[\lim_{N\rightarrow \infty}
\left[(\fisherNlme_{\delta,\delta})^{-1/2} N^{-2}{\mathrm{cov}}\left\{\ell_{\pole}(\chistar),
\ell_{\delta}(\chistar) \right\}\right]=0.\]
\end{proof}

\end{document}